\def\theequation{\thesection.\@arabic\c@equation}
\renewcommand{\theequation}{\thesection.\arabic{equation}}
\newtheorem{lemma}{Lemma}[section]
\newtheorem{proposition}{Proposition}[section]
\newtheorem{theorem}{Theorem}[section]
\newcommand{\sech}{\text{sech}}
\numberwithin{equation}{section}
\begin{document}
	
\title[]{Multi-Spike Patterns in the Gierer-Meinhardt System with a Non-Zero Activator Boundary Flux}

\thanks{J. Wei is partially supported by NSERC of Canada. D. Gomez is supported by NSERC of Canada.}

\author[D. Gomez]{Daniel Gomez}
\address{Department of Mathematics,
	The University of British Columbia,
	Vancouver, BC
	Canada V6T 1Z2}
\email{dagubc@math.ubc.ca}

\author[J. Wei]{Juncheng Wei}
\address{Department of Mathematics,
	The University of British Columbia,
	Vancouver, BC
	Canada V6T 1Z2}
\email{jcwei@math.ubc.ca}
	
\begin{abstract}
The structure, linear stability, and dynamics of localized solutions to singularly perturbed reaction-diffusion equations has been the focus of numerous rigorous, asymptotic, and numerical studies in the last few decades. However, with a few exceptions, these studies have often assumed homogeneous boundary conditions. Motivated by the recent focus on the analysis of bulk-surface coupled problems we consider the effect of inhomogeneous Neumann boundary conditions for the activator in the singularly perturbed one-dimensional Gierer-Meinhardt reaction-diffusion system. We show that these boundary conditions necessitate the formation of spikes that concentrate in a boundary layer near the domain boundaries. Using the method of matched asymptotic expansions we construct boundary layer spikes and derive a new class of shifted Nonlocal Eigenvalue Problems for which we rigorously prove partial stability results. Moreover by using a combination of asymptotic, rigorous, and numerical methods we investigate the structure and linear stability of selected one- and two-spike patterns. In particular we find that inhomogeneous Neumann boundary conditions increase both the range of parameter values over which asymmetric two-spike patterns exist and are stable.
\end{abstract}

\date{\today}
\keywords{Gierer-Meinhardt system, singular perturbation, matched asymptotic , nonlocal eigenvalue problem (NLEP), spiky solution}

\maketitle

\section{Introduction}

Originally proposed by Gierer and Meinhardt in 1972, the Gierer-Meinhardt (GM) model illustrates the pattern forming potential of two mechanisms: local self-activation and lateral inhibition \cite{gierer_1972}. In the original GM model two diffusing chemical species, an \textit{activator} and \textit{inhibitor}, interact through prescribed reaction-kinetics. Specifically, letting $u$ and $v$ denote the concentrations of the activator and inhibitor respectively, the GM model takes the form of the two-component reaction-diffusion-system
\begin{subequations}
\begin{equation}\label{eq:gierer-meinhardt}
u_t = D_u \Delta u - u + u^pv^{-q},\quad v_t = D_v \Delta v - v + u^r v^{-s},\qquad x\in\Omega,
\end{equation}
where it is assumed that the exponents $(p,q,m,s)$ satisfy
\begin{equation}\label{eq:gm_exponents}
p>1, \quad q>0, \quad r>0, \quad s\geq 0,\quad \text{and}\quad p-1 - (s+1)^{-1}qr < 0,
\end{equation}
\end{subequations}
and homogeneous Neumann boundary conditions are typically imposed. As is evident from the choice of reaction kinetics, the GM model describes a process of self activation, or auto-catalysis, by the activator and lateral inhibition by the inhibitor. In addition, the spatially homogeneous steady state $(u,v)=(1,1)$ undergoes a spontaneous symmetry breaking bifurcation into a patterned state through a Turing instability when $D_u/D_v$ is sufficiently small. While one of the early applications of the GM model was specifically to  head generation and regeneration in Hydra \cite{gierer_1972}, self-activation and lateral inhibition is considered to be an important pattern formation mechanism in general biological systems and this has been supported by the identification of several molecular candidates for the activator and inhibitor in activator-inhibitor systems \cite{meinhardt_2000}. 

In the singularly perturbed limit of an asymptotically small diffusivity ratio $D_u/D_v=\mathcal{O}(\varepsilon)\ll 1$ a classical Turing stability analysis of \eqref{eq:gierer-meinhardt} reveals an asymptotically large band of unstable spatial modes. In this singularly perturbed limit the GM system, as well as numerous other two-component reaction-diffusion systems, are known to exhibits multi-spike patterns in which the activator concentrates at a discrete collection of asymptotically small regions. The rich structural and stability properties of these multi-spike patterns have been the focus of extensive asymptotic, rigorous, and numerical studies over the past two decades \cite{wei_1999,iron_2001,doelman_2001_gm,wei_2014_book,ward_2018}. While boundary conditions have been identified as playing an important role in pattern formation \cite{maini_1997,dillon_1994}, relatively few studies have investigated the role of boundary conditions on the structure and stability of multi-spike solutions to singularly perturbed reaction diffusion systems. Instead most such studies have assumed either homogeneous Neumann or homogeneous Dirichlet boundary conditions. Notable exceptions include the investigation of homogeneous Robin boundary conditions for the activator in the GM model \cite{berestycki_2003,maini_2007} and inhomogeneous Robin boundary conditions for the inhibitor in the two-dimensional Brusselator model \cite{tzou_2018_brusselator}. These two studies and their illustration of the effect of boundary conditions on the structure and stability of multi-spike patterns serve as the primary motivation for the present paper in which we consider inhomogeneous Neumann boundary conditions for the activator in the one-dimensional singularly perturbed GM model. Additionally, this paper aims to address some of the technical issues that arise in bulk-surface coupled reaction diffusion systems for which inhomogeneous boundary conditions naturally arise \cite{levine_2005,ratz_2014,madzvamuse_2015,gomez_2019}.

To simplify the presentation in this paper we consider the singularly perturbed one-dimensional GM model with exponents $(p,q,r,s)=(2,1,2,0)$ though we remark that different exponents result only in quantitative rather than qualitative differences. After possibly rescaling the spatial variable in \eqref{eq:gierer-meinhardt} we assume that the domain is the unit interval $\Omega=(0,1)$ and that $D_u=\varepsilon^2$ while $D_v=\mathcal{O}(1)$ where $\varepsilon\ll 1$ is an asymptotically small parameter. The activator in an equilibrium solution will then concentrate in intervals of $\mathcal{O}(\varepsilon)$ length and by integrating the inhibitor equation in \eqref{eq:gierer-meinhardt} it is easy to see that if $v=\mathcal{O}(1)$ then we must have $u=\mathcal{O}(\varepsilon^{-1/2})$ in each interval on which it is concentrated. This motivates our choice of rescaling $u=\varepsilon^{-1}\tilde{u}$ and $v=\varepsilon^{-1}\tilde{v}$ which when substituted into \eqref{eq:gierer-meinhardt} and dropping the tildes gives
\begin{subequations}\label{eq:pde_gm}
\begin{align}
u_t      & = \varepsilon^2 u_{xx} - u + u^2 v^{-1}, & 0<x<1, \label{eq:pde_gm_u}\\
\tau v_t & = D v_{xx} - v + \varepsilon^{-1}u^2, & 0<x<1, \label{eq:pde_gm_v}
\end{align}
and for which we observe both $u$ and $v$ will be $\mathcal{O}(1)$ in each interval where $u$ is concentrated. In addition we impose inhomogeneous and homogeneous Neumann boundary conditions for the activator and inhibitor respectively which are given by
\begin{equation}\label{eq:pde_gm_bc}
-\varepsilon u_x(0) = A,\quad \varepsilon u_x(1) = B,\quad v_x(0) = 0,\quad v_x(1) = 0,
\end{equation}
\end{subequations}
where we assume that $A,B\geq 0$ and for which we note that the scaling for the activator boundary conditions arises naturally from the scaling argument. 

\begin{figure}[t!]
	\centering
	\begin{subfigure}{0.5\textwidth}
		\centering
		\includegraphics[scale=0.7]{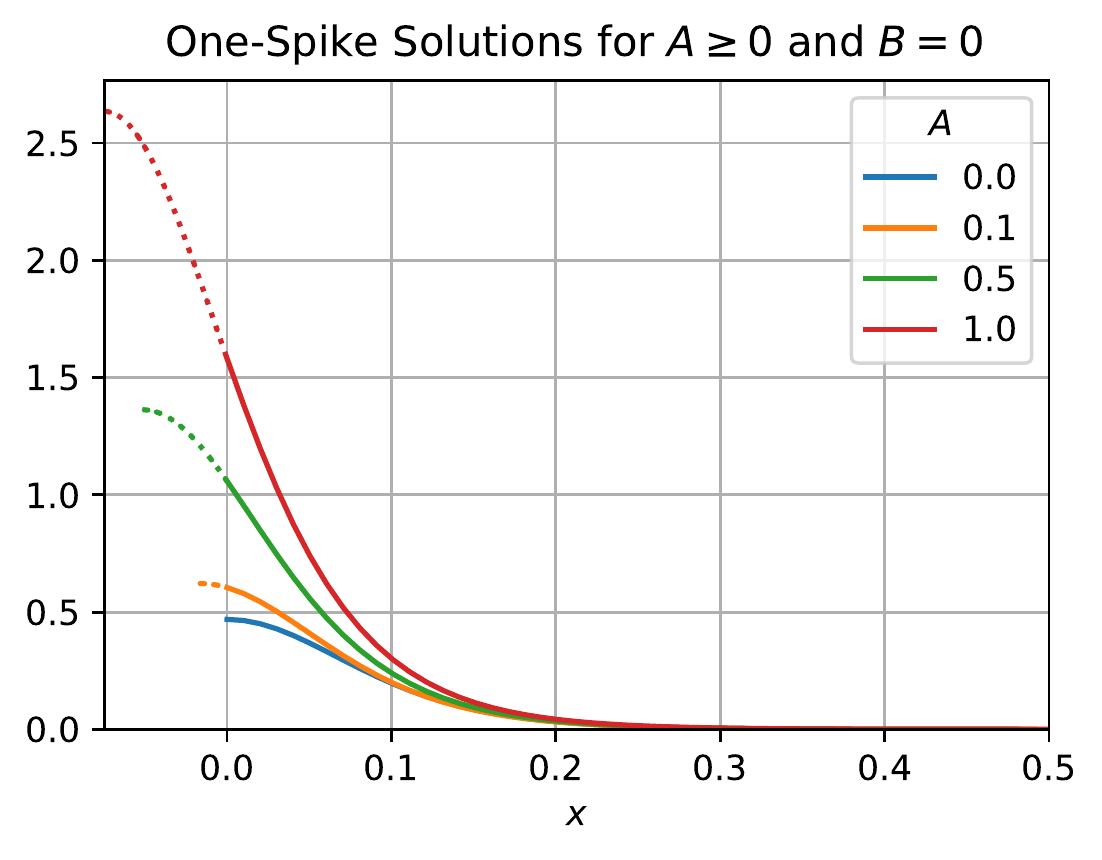}
		\caption{}\label{fig:shifted-spike}
	\end{subfigure}%
	\begin{subfigure}{0.5\textwidth}
		\centering
		\includegraphics[scale=0.6]{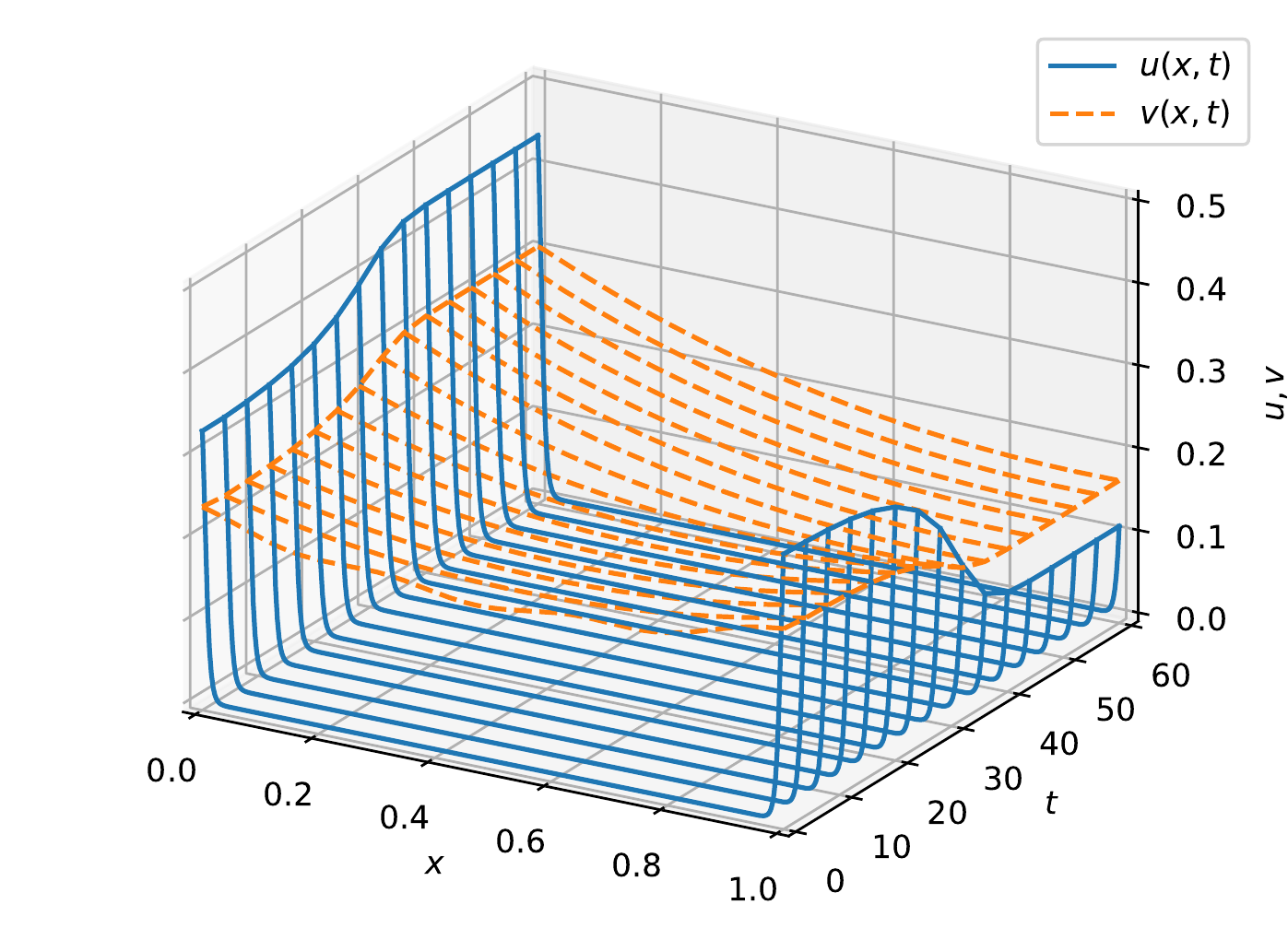}
		\caption{}\label{fig:competition-time}
	\end{subfigure}%
	\caption{(A) Examples of shifted one-spike solution concentrated at $x=0$ for various values of $A\geq 0$ and with $\varepsilon=0.05$ and $D=5$. (B) Evolution of solution to GM problem with $D=0.6$, $\varepsilon=.005$, $\tau=0.1$, and $A=B=0.08$. The initial condition is an unstable two-spike equilibrium where both spikes concentrate at the boundaries. A competition instability predicted by our asymptotic results in Figure \ref{fig:BOT_BB_BA_colormap} is triggered and leads to the solution settling at an asymmetric pattern.}\label{fig:intro-figs}
\end{figure}

In contrast to systems with homogeneous Neumann boundary conditions, we note that \eqref{eq:pde_gm} does \textit{not} have a spatially homogeneous steady state when $A>0$ and/or $B>0$ and therefore traditional Turing stability analysis methods no longer apply. In particular the inhomogeneous Neumann boundary conditions for the activator in \eqref{eq:pde_gm_bc} necessitate that the activator forms a boundary layer near each boundary when $A>0$ and/or $B=0$. As we demonstrate in \S\ref{sec:quasi_equilibrium} the appropriate boundary layer solution takes the form of a \textit{shifted} spike (see Figure \ref{fig:shifted-spike}) that is analogous to the near-boundary spike solution of \cite{berestycki_2003,maini_2007} though it has different stability properties which we investigate in \S\ref{sec:rigorous}. Furthermore, by considering examples of one- and two-spike patterns in \S\ref{sec:examples} we investigate the role of non-zero boundary fluxes on the structure of symmetric and asymmetric patterns, as well as their stability with respect to Hopf (example 1), competition (examples 2-4), and drift (example 4) instabilities. In Figure \ref{fig:competition-time} we plot the time-evolution of a linearly unstable equilibrium consisting of two boundary spikes of equal height. We see that the solution undergoes a competition instability, but rather than being subcritical as is the case when $A=B=0$ \cite{ward_2002_asymmetric}, the non-zero boundary fluxes force the solution to settle to an asymmetric pattern. This illustrates that one of the key features of introducing non-zero boundary fluxes is that it leads to a kind of robustness of asymmetric solutions similar to that observed in the presence of an inhomogeneous precursor gradient \cite{kolokolnikov_2020}. 

The remainder of this paper is organized as follows. In \S\ref{sec:quasi_equilibrium} we use the method of matched asymptotic expansions to construct multi-spike equilibrium solutions. The key idea of the construction is to leverage the localized of the spike solution to reduce their construction to a problem of finding the spike heights and their locations. In \S\ref{sec:stability} we derive a nonlocal eigenvalue problem which determines the linear stability of the multi-spike solutions on an $\mathcal{O}(1)$ time scale. Then, in \S\ref{sec:rigorous} we rigorously prove partial stability results for an equilibrium consisting of a single boundary spike. This is done by analyzing a class of \textit{shifted} nonlocal eigenvalue problems analogous to those studied in \cite{maini_2007}. In \S\ref{sec:examples} we consider four examples for which we construct one- and two-spike equilibrium patterns and study their linear stability and dynamics. Finally in \S\ref{sec:discussion} we conclude with a summary of our results and highlight several open problems and suggestions for future research.
	
\section{Quasi-Equilibrium Multi-Spike Solutions and their Slow Dynamics}\label{sec:quasi_equilibrium}

In this section we use the method of matched asymptotic expansions to derive an algebraic system and an ordinary differential equation that determine the profile and slow dynamics of a multi-spike quasi-equilibrium solution to \eqref{eq:pde_gm}. The derivation uses techniques that are now common in the study of localized patterns in one dimension. Our presentation will therefore be brief, highlighting only the novel aspects introduced by the inhomogeneous Neumann boundary conditions for the activator.  We begin by supposing that there are two spikes concentrated at the boundaries $x_L=0$ and $x_R=1$ as well as $N$ spikes concentrated in the interior at $0<x_1<...<x_N<1$. In addition we assume that the spikes are well separated in the sense that $|x_i-x_j|=\mathcal{O}(1)$ as $\varepsilon\rightarrow 0^+$ for all $i\neq j\in\{L,1,...,N,R\}$. This last assumption is key for effectively applying the method of matched asymptotic expansions.

We first construct an asymptotic approximation for the solution near $x=0$ by letting $x = \varepsilon y$ where $y=\mathcal{O}(1)$ and expanding
\begin{equation}
u \sim u_{L,0}(y) + \mathcal{O}(\varepsilon),\qquad v \sim v_{L,0}(y) + \varepsilon v_{L,1}(y) + \mathcal{O}(\varepsilon^2).
\end{equation}
It is easy to see that $v_L=\xi_L$ where $\xi_L$ is is an undetermined constant, and $u_{L,0}(y) = \xi_L w_c(y+y_L)$ where $w_c(y)$ is the unique \textit{homoclinic} solution to
\begin{subequations}
\begin{equation}\label{eq:wc_equation}
w_c'' - w_c + w_c^2 = 0,\quad 0<y<\infty,\qquad w_c'(0) = 0, \qquad  w_c(y)\rightarrow 0 \quad\text{as } y\rightarrow \infty,
\end{equation}
given explicitly by
\begin{equation}\label{eq:wc_def}
w_c(y) = \frac{3}{2}\sech^2\frac{y}{2}.
\end{equation}
\end{subequations}
Moreover, the undetermined \textit{shift} parameter $y_L$ is chosen to satisfy the inhomogeneous Neumann boundary condition
\begin{equation}\label{eq:yL_eq_1}
w_c'(y_L) = -A/\xi_L.
\end{equation}
The unknown constants $\xi_L$ and $y_L$ are found by matching with the outer solution. To determine the appropriate \textit{Neumann} boundary conditions for the outer problem we must first calculate $v_{L,1}'(y)$ as $y\rightarrow\infty$. This is done by integrating the $\mathcal{O}(\varepsilon)$ equation
\begin{equation}
D v_{L,1}'' = - \xi_L^2 w_c(y+y_L)^2,\quad 0<y<\infty;\qquad v_{L,1}'(0) = 0.
\end{equation}
over $0<y<\infty$ to obtain the limit
\begin{equation}\label{eq:v_lim_left_bdry}
\lim_{y\rightarrow +\infty} v_{L,1}'(y) = -\frac{\xi_L^2}{D}\eta(y_L),
\end{equation}
where
\begin{equation}\label{eq:eta_def}
\eta(y_0) \equiv \int_{0}^\infty w_c(y+y_0)^2 dy = \frac{6e^{-2y_0}(3+e^{-y_0})}{(1+e^{-y_0})^3}
\end{equation}
Note that $\eta(0)=3$ and $\eta\rightarrow 0^+$ monotonically as $z\rightarrow\infty$. In a similar way we obtain the inner solution near $x=1$ by letting $x=1-\varepsilon y$ and finding that
$$
u \sim \xi_{R}w_c(y+y_R) + \mathcal{O}(\varepsilon),\qquad v \sim \xi_{R} + \varepsilon v_{R,1}(y) + \mathcal{O}(\varepsilon^2),
$$
where $y_{R}$ is determined by solving
\begin{equation}\label{eq:yR_eq_1}
w_c'(y_{R}) = -B/\xi_{R},
\end{equation}
and for which we calculate the limit
\begin{equation}\label{eq:v_lim_right_bdry}
\lim_{y\rightarrow +\infty} v_{R,1}'(y) = -\frac{\xi_{R}^2}{D}\eta(y_R).
\end{equation}

We now consider the inner solution at each interior spike location. By balancing dominant terms in a higher order asymptotic expansion, it can be shown that the interior spike locations var on an $\mathcal{O}(\varepsilon^{-2})$ timescale. Therefore we let $x_i = x_i(\varepsilon^2 t)$ for each $i=1,...,N$ and with $x = x_i(\varepsilon^2 t)+y$ we calculate the inner asymptotic expansions
\begin{equation}
u \sim \xi_i w_c(y) + \mathcal{O}(\varepsilon),\qquad v\sim \xi_{i} + \varepsilon v_{i1}(y) + \mathcal{O}(\varepsilon^2),\qquad i=1,...,N.
\end{equation}
Furthermore, we must impose a solvability condition on the $v_{i1}$ problem which gives
\begin{equation}\label{eq:slow_dynamics_inner}
\frac{1}{\varepsilon^2}\frac{d x_i}{dt} = -\frac{1}{\xi_i}\biggl( \lim_{y\rightarrow +\infty} v_{i1}'(y) + \lim_{y\rightarrow-\infty}v_{i1}'(y)\biggr).
\end{equation}

To determine the $2(N+2)$ undetermined constants $\xi_i$ and $y_i$ where $i\in\{L,1,...,N,R\}$ we must now calculate the outer solution, defined for $|x-x_i|=\mathcal{O}(1)$ for each boundary and interior spike location, and match it with each of the inner solutions. Since $w_c$ decays to zero exponentially as $y\rightarrow\pm\infty$ we determine that the activator is asymptotically small in the outer region. On the other hand \eqref{eq:v_lim_left_bdry} and \eqref{eq:v_lim_right_bdry} imply the boundary conditions
\begin{equation*}
v_x(0) \sim -\frac{\xi_L^2}{D}\eta(y_L),\qquad v_x(1) \sim \frac{\xi_R^2}{D}\eta(y_R),
\end{equation*}
while the exponential decay of $w_c$ implies that the following limits hold (in the sense of distributions)
\begin{equation*}
\varepsilon^{-1} u^2 \longrightarrow 6 \sum_{j=1}^N \xi_j^2\delta(x-x_j),\qquad\qquad\qquad (\varepsilon\rightarrow 0^+),
\end{equation*}
Thus, to leading order in $\varepsilon\ll 1$, the outer problem for the inhibitor is given by
\begin{align}
& D v_{xx} - v = -6 \sum_{j=1}^N \xi_j^2 \delta(x-x_j),\qquad 0<x<1,\\
& D v_{x}(0) = -\xi_0^2\eta(y_L),\qquad D v_x(1) = \xi_{N+1}^2\eta(y_R).
\end{align}
This boundary value problem can be solved explicitly by letting $G_\omega$ be the Green's function satisfying
\begin{subequations}\label{eq:greens}
\begin{equation}\label{eq:greens_eq}
G_{\omega,xx} - \omega^2 G_\omega = -\delta(x-\xi),\quad 0<x<1;\qquad G_{\omega,x}(0,\xi) = 0,\quad G_{\omega,x}(1,\xi) = 0,\qquad \omega>0
\end{equation}
and given explicitly by
\begin{equation}\label{eq:greens_func}
G_\omega(x,\xi) = \frac{1}{\omega\sinh\omega}\begin{cases} \cosh\omega x\cosh\omega(1-\xi), & 0<x<\xi,\\ \cosh\omega(1-x)\cosh\omega\xi, & \xi < x <1.\end{cases}
\end{equation}
Formally substituting $\xi=0$ or $\xi=1$ into the above expression gives
\begin{equation}
G_\omega(x,0) = \frac{\cosh\omega(1-x)}{\omega \sinh\omega},\qquad G_\omega(x,1) = \frac{\cosh\omega x}{\omega\sinh\omega},
\end{equation}
\end{subequations}
which is readily seen to satisfy
\begin{equation*}
G_{\omega,xx}-\omega^2 G_\omega=0,\qquad 0<x<1,
\end{equation*}
with boundary conditions
\begin{equation*}
G_{\omega,x}(0,0) = -1,\qquad G_{\omega,x}(1,0) = 0,\qquad G_{\omega,x}(0,1) = 0, \qquad G_{\omega,x}(1,1) = 1.
\end{equation*}
Letting $\omega_0 \equiv D^{-1/2}$ we obtain the following leading order asymptotic expansion for the quasi-equilibrium solution to \eqref{eq:pde_gm}
\begin{subequations}\label{eq:quasi_equilibrium}
\begin{align}
& u_e(x) \sim \xi_L w_c\bigl(\tfrac{x}{\varepsilon} + y_L\bigr) + \xi_R w_c\bigl( \tfrac{1-x}{\varepsilon} + y_R\bigr) + \sum_{j=1}^N \xi_{j} w_c\bigl(\tfrac{x-x_j}{\varepsilon}\bigr), \label{eq:quasi_u}\\
& v_e(x) \sim \omega_0^2 \biggl(\xi_L^2 \eta(y_L) G_{\omega_0}(x,0) + \xi_R^2\eta(y_R) G_{\omega_0}(x,1) + 6 \sum_{j=1}^N \xi_j^2 G_{\omega_0}(x,x_j)\biggr). \label{eq:quasi_v}
\end{align}
\end{subequations}
Furthermore, by imposing the \textit{consistency} condition $v_e(x_i) = \xi_i$ for each $i\in\{L,1,...,N,R\}$ we obtain the system of $N+2$ nonlinear equations

\begin{subequations}\label{eq:quasi_equations}
\begin{equation}\label{eq:sys_1}
\bm{B} \equiv \bm{\xi} - \omega_0^2 \mathscr{G}_{\omega_0} \mathscr{N} \bm{\xi}^2 = 0,
\end{equation}
where $\mathscr{G}_{\omega_0}$ and $\mathscr{N}$ are the $(N+2)\times(N+2)$ matrices given by
\begin{equation}
(\mathscr{G}_{\omega_0})_{ij} = G_{\omega_0}(x_i,x_j)\quad i,j = L,R,1,...,N,\qquad \mathscr{N}\equiv \text{diag}(\eta(y_L),\eta(y_R),6,...,6),
\end{equation}
and
\begin{equation}
\bm{\xi} \equiv (\xi_L,\xi_R,\xi_1,...,\xi_N)^T,\qquad \bm{\xi}^2 = (\xi_L^2,\xi_R^2,\xi_1^2,...,\xi_N^2)^T.
\end{equation}
\end{subequations}
Thus, for  given spike configuration $0<x_1<...<x_N<1$, the system \eqref{eq:sys_1} together with \eqref{eq:yL_eq_1} and \eqref{eq:yR_eq_1} can be solved for the unknown spike heights $\xi_L,\xi_1,...,\xi_{N},\xi_R$ and boundary \textit{shifts} $y_L$ and $y_R$. Summarizing, we have the following proposition.

\begin{proposition}\label{prop:quasi-equi}
In the limit $\varepsilon\rightarrow 0^+$ and for $t\ll \mathcal{O}(\varepsilon^{-2})$ an $N+2$ spike quasi-equilibrium solution  to \eqref{eq:pde_gm} consisting of two boundary spikes and $N$ well separated interior spikes concentrated at specified locations $0<x_1<...<x_N<1$ is given asymptotically by \eqref{eq:quasi_equilibrium} where $G_{\omega_0}(x,\xi)$ is given explicitly by \eqref{eq:greens_func} and $\omega_0=D^{-1/2}$. The boundary shifts, $y_L$ and $y_R$, and spike heights, $\xi_L,\xi_1,...,\xi_{N},\xi_R$, are found by solving the system of $N+4$ equations \eqref{eq:yL_eq_1}, \eqref{eq:yR_eq_1}, and \eqref{eq:sys_1}.
\end{proposition}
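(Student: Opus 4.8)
The plan is to verify the matched asymptotic construction assembled above and to confirm that the stated system of $N+4$ equations genuinely closes the problem of determining the spike heights and boundary shifts. Since the leading-order inner solutions have already been identified, the first step I would carry out is to justify each inner ansatz by direct substitution. Near the left boundary one sets $x=\varepsilon y$, so that in the steady activator equation the diffusion term balances the reaction at $\mathcal{O}(1)$ and forces $u_{L,0}=\xi_L w_c(y+y_L)$ via the homoclinic problem \eqref{eq:wc_equation}, while the inhibitor equation at leading order $\mathcal{O}(\varepsilon^{-2})$ reads $D v_{L,0}''=0$, giving the constant $v_{L,0}\equiv\xi_L$ by boundedness and the homogeneous Neumann condition. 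The shift $y_L$ is then pinned by translating the flux condition \eqref{eq:pde_gm_bc} into \eqref{eq:yL_eq_1}, and an identical argument at $x=1-\varepsilon y$ produces \eqref{eq:yR_eq_1}. For an interior spike at $x_i$ one repeats this with $x=x_i+\varepsilon y$ but with no shift, since symmetric matching on both sides forces the even profile $\xi_i w_c(y)$.

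Second, I would quantify how each boundary layer feeds the outer inhibitor field. Integrating the $\mathcal{O}(\varepsilon)$ inhibitor equation over the half-line, as in the passage leading to \eqref{eq:v_lim_left_bdry}, produces the effective flux $-\xi_L^2\eta(y_L)/D$ with $\eta$ as in \eqref{eq:eta_def}; the shift enters precisely because the boundary truncates the integral of $w_c^2$ at $y_L$ rather than over the whole line. For an interior spike the same integration over $(-\infty,\infty)$ yields the weight $\int_{-\infty}^\infty w_c^2\,dy=6$, which is the origin of the constant entries of $\mathscr{N}$ and of the Dirac masses $6\xi_j^2\delta(x-x_j)$ in the outer source. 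Collecting these contributions gives a modified Helmholtz (screened Poisson) boundary value problem for $v_e$ whose inhomogeneous Neumann data are set by the boundary-layer fluxes.

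Third, I would solve this outer problem by superposition against the Green's function \eqref{eq:greens}. I would verify that \eqref{eq:greens_func} solves \eqref{eq:greens_eq} with the stated homogeneous Neumann conditions, check the unit-jump condition on the derivative across $x=\xi$, and confirm that the formal limits $\xi\to 0,1$ reproduce the boundary Green's functions with the normalizations $G_{\omega,x}(0,0)=-1$ and $G_{\omega,x}(1,1)=1$. Linearity then yields \eqref{eq:quasi_v}, and evaluating at each spike site imposes the consistency requirement $v_e(x_i)=\xi_i$, which in matrix form is exactly $\bm{B}=0$ in \eqref{eq:sys_1}. Together with \eqref{eq:yL_eq_1} and \eqref{eq:yR_eq_1} this produces the claimed square system of $N+4$ equations in the $N+2$ heights and the two boundary shifts.

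The step I expect to be the genuine obstacle, and the one meriting the most care, is the self-consistent closure between the shifts and the heights. The shift $y_L$ is fixed by $w_c'(y_L)=-A/\xi_L$, yet $y_L$ simultaneously controls the source strength $\eta(y_L)$ that, through the Green's function, determines $\xi_L=v_e(0)$; thus the boundary equations couple nonlinearly back into the algebraic system rather than decoupling as they do when $A=B=0$. I would therefore treat \eqref{eq:yL_eq_1}, \eqref{eq:yR_eq_1}, and \eqref{eq:sys_1} as a single coupled system, using the monotonicity of $y_0\mapsto\eta(y_0)$ noted after \eqref{eq:eta_def} together with the invertibility of the flux relation for $A,B\ge 0$ to argue solvability in the regime of interest. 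The remaining technical point is the rigor of the matching itself, namely that the overlap of inner and outer expansions is consistent to the leading order retained; this is standard for well-separated spikes but must be recorded so that the asymptotic formula \eqref{eq:quasi_equilibrium} is well defined.
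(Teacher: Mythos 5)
Your proposal is correct and follows essentially the same route as the paper's own construction in \S\ref{sec:quasi_equilibrium}: shifted inner homoclinic profiles at the boundaries with the shift pinned by the flux condition, integration of the $\mathcal{O}(\varepsilon)$ inhibitor equation to obtain the effective boundary fluxes $\eta(y_L)$, $\eta(y_R)$ and the interior Dirac masses, solution of the outer problem by the Green's function \eqref{eq:greens_func}, and closure via the consistency condition $v_e(x_i)=\xi_i$ yielding \eqref{eq:sys_1} coupled to \eqref{eq:yL_eq_1} and \eqref{eq:yR_eq_1}. Your emphasis on the nonlinear coupling between the shifts and the heights as the genuinely new feature relative to the $A=B=0$ case is exactly the right point.
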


The asymptotic solution constructed in the above proposition will not generally be an equilibrium of \eqref{eq:pde_gm} due to the slow, $\mathcal{O}(\varepsilon^{-2})$, drift motion of the interior spikes described by \eqref{eq:slow_dynamics_inner}. However, this solution can be made into an equilibrium by choosing the interior spike locations $x_1,...,x_N$ appropriately.

\begin{proposition}\label{prop:dynamics}
The interior spike locations of a multi-spike pattern consisting of two boundary spikes and $N$ interior spikes vary on an $\mathcal{O}(\varepsilon^{-2})$ time scale according to the differential equation
\begin{equation}\label{eq:ode_dynamics}
\begin{split}
\frac{1}{\varepsilon^2}\frac{d x_i}{dt} = & -\frac{6 \xi_i}{D} \big\langle \partial_x G_{\omega_0}(x,x_i) \big\rangle_{x=x_i} - \frac{12}{\xi_i D} \sum_{j\neq i} \xi_j^2 G_x(\xi_i,\xi_j) \\
& -\frac{2}{\xi_i D}\biggl[\xi_L^2\eta(y_R) G_x(x_i,0) + \xi_R^2\eta(y_R) G_x(x_i,1)  \biggr],\qquad (i=1,...,N),
\end{split}
\end{equation}
where
\begin{equation}
\big\langle f(x) \big\rangle_{x_0} = \lim_{x\rightarrow x_0^+} f(x) + \lim_{x\rightarrow x_0^{-}}f(x),
\end{equation}
which is to be solved together with \eqref{eq:sys_1}, \eqref{eq:yL_eq_1}, and \eqref{eq:yR_eq_1} for the spike heights $\xi_L,\xi_1,...,\xi_N,\xi_R$ and shifts $y_L$ and $y_R$. In particular, if the configuration $x_1,...,x_N$ is stationary with respect to the ODE \eqref{eq:ode_dynamics}, then to leading order the quasi-equilibrium solution of Proposition \ref{prop:quasi-equi} is an equilibrium for all $t\geq 0$.
\end{proposition}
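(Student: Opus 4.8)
\section*{Proof proposal}

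The plan is to take the drift law \eqref{eq:slow_dynamics_inner} as the starting point and reduce the proof to an explicit evaluation of the one-sided limits $\lim_{y\to\pm\infty}v_{i1}'(y)$ by matching the inner inhibitor correction to the outer field \eqref{eq:quasi_v}. Equation \eqref{eq:slow_dynamics_inner} --- which itself comes from projecting the $\mathcal O(\varepsilon)$ inner activator equation against the translational kernel $w_c'$ of the linearized operator --- already expresses $\varepsilon^{-2}\dot x_i$ as minus the sum of the inner slopes of $v_{i1}$ on the two sides of the $i$-th spike. Hence the entire content of the proposition is the identification of these inner slopes with the one-sided derivatives of $v_e$ at $x_i$, followed by differentiation of the explicit Green's-function representation and a bookkeeping of coefficients.

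First I would establish the matching identities $\lim_{y\to+\infty}v_{i1}'(y)=\partial_x v_e(x_i^+)$ and $\lim_{y\to-\infty}v_{i1}'(y)=\partial_x v_e(x_i^-)$. The inner inhibitor correction solves $D v_{i1}''=-\xi_i^2 w_c(y)^2$ on $-\infty<y<\infty$, and since $v\sim\xi_i+\varepsilon v_{i1}(y)$ in the inner region one has $v_x\sim v_{i1}'(y)$ there, so matching in the overlap region forces $v_{i1}'(\pm\infty)$ to equal the corresponding one-sided outer slope of $v_e$. As a consistency check, integrating the $v_{i1}$ equation over $\R$ and using $\int_{-\infty}^\infty w_c^2\,dy=2\eta(0)=6$ gives the jump $v_{i1}'(+\infty)-v_{i1}'(-\infty)=-6\xi_i^2/D$, which is exactly $6\omega_0^2\xi_i^2$ times the unit jump $[\partial_x G_{\omega_0}]=-1$ encoded in \eqref{eq:greens_eq}.

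Second I would differentiate \eqref{eq:quasi_v} in $x$ and evaluate $\partial_x v_e(x_i^+)+\partial_x v_e(x_i^-)=\langle\partial_x v_e\rangle_{x_i}$. For the two boundary terms and for the interior terms with $j\neq i$ the integrand is smooth at $x=x_i$, so the bracket $\langle\cdot\rangle_{x_i}$ merely doubles each value, producing the coefficients $2\xi_L^2\eta(y_L)$, $2\xi_R^2\eta(y_R)$, and $12\xi_j^2$ respectively. The only delicate term is the self-interaction contribution $6\omega_0^2\xi_i^2 G_{\omega_0}(x,x_i)$, whose $x$-derivative is discontinuous at $x=x_i$; there $\langle\cdot\rangle_{x_i}$ retains precisely the symmetric average $\langle\partial_x G_{\omega_0}(x,x_i)\rangle_{x=x_i}$. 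Dividing by $-\xi_i$ as in \eqref{eq:slow_dynamics_inner} and inserting $\omega_0^2=D^{-1}$ then yields \eqref{eq:ode_dynamics}.

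I expect the main obstacle to be the correct treatment of this self-interaction singularity: because $\partial_x G_{\omega_0}(x,x_i)$ jumps by $-1$ across $x=x_i$, one must argue that the antisymmetric (jump) part of the self-gradient cancels when the two one-sided slopes are \emph{added}, leaving only the regular symmetric average --- this is the structural reason why it is the sum, and not the difference, of the inner slopes that governs the drift. Once \eqref{eq:ode_dynamics} is in hand the final claim is immediate: if $(x_1,\dots,x_N)$ is a zero of the right-hand side then $\dot x_i=0$ for every $i$, so the slow flow \eqref{eq:slow_dynamics_inner} is stationary and, to leading order, the quasi-equilibrium of Proposition \ref{prop:quasi-equi} persists as an equilibrium of \eqref{eq:pde_gm} for all $t\ge 0$.
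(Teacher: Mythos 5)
Your proposal is correct and follows essentially the same route as the paper, which does not give a separate proof but obtains \eqref{eq:ode_dynamics} exactly as you do: by taking the solvability condition \eqref{eq:slow_dynamics_inner}, matching $\lim_{y\to\pm\infty}v_{i1}'(y)$ to the one-sided outer slopes of \eqref{eq:quasi_v}, and differentiating the Green's-function representation with the self-interaction term handled via the symmetric average $\langle\partial_x G_{\omega_0}(x,x_i)\rangle_{x_i}$. Note that your bookkeeping yields $\xi_L^2\eta(y_L)G_x(x_i,0)$ in the boundary contribution, which is the correct coefficient; the $\eta(y_R)$ appearing in that term in \eqref{eq:ode_dynamics} is a typo in the stated proposition.
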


\subsection{Equilibrium Multi-Spike Solutions by the Gluing Method}\label{subsec:gluing-method}

We now use an alternative method for constructing \textit{asymmetric} multi-spike equilibrium solutions to \eqref{eq:pde_gm}. This method extends that of Ward and Wei \cite{ward_2002_asymmetric} to account for inhomogeneous Neumann boundary conditions. The key idea is to construct a single boundary spike solution in an interval of variable length and use this solution to \textit{glue} together a multi-spike solution. In particular, we begin considering the problem
\begin{subequations}\label{eq:pde_gm_interval}
\begin{align}
& \varepsilon^2 u_{xx} - u + v^{-1}u^2 = 0,\quad D v_{xx} - v + \varepsilon^{-1} u^2 = 0,\qquad 0<x<l, \\
& \varepsilon u_x(0) = -A,\quad \varepsilon u_x(l) = 0,\quad Dv_x(0) = 0,\quad D v_x(l) = 0,
\end{align}
\end{subequations}
where $l>0$ is fixed and for which we will use the method of matched asymptotic expansions to construct a single spike solution concentrated at $x=0$. Proceeding as in \S\ref{sec:quasi_equilibrium} we readily find that the equilibrium solution in the outer region (i.e. for $x=\mathcal{O}(1)$) is given by
\begin{equation}\label{eq:gluing_solution}
u(x;l,A) \sim \xi_0 w_c\bigl(\varepsilon^{-1}x+y_0\bigr),\qquad v(x;l,A) \sim \xi_0\frac{\cosh\omega_0 (l-x)}{\cosh\omega_0 l},
\end{equation}
where $\omega_0 \equiv D^{-1/2}$ while the shift parameter $y_0$ and spike height $\xi_0$ satisfy
\begin{equation}\label{eq:matching_equation}
w_c'(y_0)=-\frac{A}{\xi_0},
\end{equation}
and for which, using \eqref{eq:wc_def} and \eqref{eq:eta_def}, we explicitly calculate
\begin{equation}\label{eq:gluing_parameters}
\xi_0 = \frac{\tanh\omega_0 l}{\omega_0 \eta(y_0)},\quad y_0 = \log\biggl(\frac{1 + 3q + \sqrt{9q^2 + 10q + 1}}{2}\biggr),\quad q\equiv \frac{\omega_0 A}{\tanh\omega_0 l},
\end{equation}
for which we remark that $y_0\sim 4 q$ as $q\rightarrow 0$ and therefore $\xi_0\sim(3\omega_0)^{-1}\tanh\omega_0 l$ as $A\rightarrow 0^+$. Finally, we note that $y_0$ is monotone increasing in $A$ and monotone decreasing in $D$ and $l$ when $A>0$ is fixed.

A multi-spike pattern is constructed by first partitioning the unit interval $0<x<1$ into $N+2$ subintervals defined by
\begin{align*}
& x_L = 0, \qquad & x_i = l_L + 2\sum_{j=1}^i l_j + l_i \quad (i=1,...,N), \qquad   & x_R=1, & \\
& I_L = [0,l_L), \qquad & I_i = [x_i-l_i,x_i+l_i)\quad (i=1,...,N), \qquad & I_R = [1-l_R,1], &
\end{align*}
where $l_L,l_1,...,l_N,l_R$ are chosen to satisfy the $N+2$ constraints
\begin{subequations}\label{eq:gluing_equation}
\begin{align}
& l_L + 2l_1 + \cdots 2 l_N + l_R = 1. \label{eq:gluing_equation_1} \\
v(l_L;l_L,A) = v(&l_1;l_1,0) = \cdots = v(l_N;l_N,0) = v(l_{R};l_{R},B), \label{eq:gluing_equation_2}.
\end{align}
The first constraint guarantees that the intervals are mutually disjoint, while the second set of $N+1$ constraints guarantees the continuity of the multi-spike equilibrium solution
\begin{equation}
u_e(x) = \begin{cases} u(x;l_L,A), & x\in I_L \\ u(|x-x_i|;l_i,0), & x\in I_i \\ u(1-x,l_R,B), & x\in I_R \end{cases},\quad v_e(x) = \begin{cases} v(x;l_L,A), & x\in I_L \\ v(|x-x_i|;l_i,0), & x\in I_i \\ v(1-x,l_R,B), & x\in I_R \end{cases}.
\end{equation}
\end{subequations}
We remark that the local symmetry of each interior spike implies that the interior spikes are stationary with respect to the slow dynamics found in \eqref{eq:slow_dynamics_inner}, and therefore the multi-spike solution constructed above is an equilibrium of \eqref{eq:pde_gm}.

\section{Linear Stability of Multi-Spike Pattern}\label{sec:stability}

In this section we derive a nonlocal eigenvalue problem (NLEP) that, to leading order in $\varepsilon\ll 1$, determines the linear stability of the quasi-equilibrium solution given in Proposition \ref{prop:quasi-equi} on an $\mathcal{O}(1)$ timescale. Letting $u_e$ and $v_e$ be the quasi-equilibrium solution from Proposition \ref{prop:quasi-equi}, we consider the perturbations $u = u_e + e^{\lambda t}\Phi$ and $v = v_e + e^{\lambda t} \Psi$ with which \eqref{eq:pde_gm} becomes
\begin{subequations}\label{eq:pde_sys_stability}
\begin{align}
& \varepsilon^2 \Phi_{xx} - \Phi + 2\frac{u_e}{v_e}\Phi - \frac{u_e^2}{v_e^2}\Psi = \lambda \Phi, & 0<x<1, \label{eq:pde_sys_Phi}\\
& D \Psi_{xx} - \Psi + 2\varepsilon^{-1}u_e\Phi = \tau \lambda \Psi, & 0<x<1. \label{eq:pde_sys_Psi}
\end{align}
\end{subequations}
This problem admits both \textit{large} and \textit{small} eigenvalues characterized by $\lambda=\mathcal{O}(1)$ and $\mathcal{O}(\varepsilon^{2})$ respectively. The small eigenvalues are closely related to the linearization of the slow-dynamics \eqref{eq:ode_dynamics} and the resulting instabilities therefore take place over a $\mathcal{O}(\varepsilon^{-2})$ timescale \cite{wei_2007_existence}. In contrast, the large eigenvalues lead to amplitude instabilities over a $\mathcal{O}(1)$ timescale. In this section we focus exclusively on the large eigenvalues and limit our discussion of the small eigenvalues to the specific example given in \S\ref{sec:example_4} in which a two-spike solution consisting of one spike on the boundary and one interior spike is considered.

Using the method of matched asymptotic expansions as in \S\ref{sec:quasi_equilibrium} we readily find that, to leading order in $\varepsilon\ll 1$, the inhibitor perturbation $\Psi$ satisfies
\begin{subequations}\label{eq:outer-inhibitor-perturbation-equation}
\begin{align}
& D \Psi_{xx} - (1+\tau\lambda) \Psi = -2\sum_{j=1}^N \xi_j \int_{-\infty}^\infty w_c(y)\phi_j(y)dy \delta(x-x_j),\qquad 0<x<1,\\
& D\Psi_x(0) = -2\xi_L\int_0^\infty w_c(y+y_L)\phi_L(y) dy,\quad D\Psi_x(1) = 2 \xi_R\int_0^\infty w_c(y+y_R)\phi_R(y) dy,
\end{align}
\end{subequations}
where $\phi_L$ and $\phi_R$ are the leading order inner expansions of the activator perturbation $\Phi$ at the boundaries satisfying
\begin{subequations}\label{eq:inner-perturbation-equations}
\begin{equation}\label{eq:inner-boundary-perturbation-equation}
\mathscr{L}_{y_i}\phi_i - w_c(y+y_i)^2\Psi(x_i) = \lambda\phi_i,\quad 0<y<\infty,\quad \phi_i'(0)=0,\quad \phi_i\rightarrow 0\quad\text{as } y\rightarrow\infty,
\end{equation}
for $i=L,R$ respectively, while $\phi_1,...,\phi_N$ are the leading order inner expansions of $\Phi$ at each of the interior spike locations $x_1,..,x_N$ satisfying
\begin{equation}\label{eq:inner-interior-perturbation-equation}
\mathscr{L}_0 \phi_i - w_c(y)^2\Psi(x_i) = \lambda\phi_i,\quad -\infty < y < \infty,\quad \phi\rightarrow 0\quad\text{as } y\rightarrow\pm\infty,
\end{equation}
\end{subequations}
for each $i=1,...,N$ respectively. The linear differential operator $\mathscr{L}_{y_0}$ parametrized by $y_0\geq 0$ appearing in each equation is explicitly given by
\begin{equation}\label{eq:L_y0-definition}
\mathscr{L}_{y_0} \phi \equiv \phi'' - \phi + 2 w_c(y+y_0)\phi.
\end{equation}
Note that by decomposing each $\phi_i = \phi_i^\text{even} + \phi_i^\text{odd}$ ($i=1,...,N$) where $\phi_i^\text{even}$ and $\phi_i^\text{odd}$ are even and odd about $y=0$ respectively, we find that either $\phi_i^\text{odd} = 0$ or else $\lambda\leq 0$. In particular, the odd components of each $\phi_i$ ($i=1,...,N$) do not contribute to any instabilities and without loss of generality we may therefore assume that each $\phi_i$ is even about $y=0$. Hence it suffices to pose \eqref{eq:inner-interior-perturbation-equation} on the half line with the same homogeneous Neumann boundary conditions used in \eqref{eq:inner-boundary-perturbation-equation}.

Letting $\omega_\lambda \equiv \sqrt{(1+\tau\lambda)/D}$ and recalling the definition of $G_{\omega}$ in \eqref{eq:greens} we readily find that the solution to \eqref{eq:outer-inhibitor-perturbation-equation} is explicitly given by
\begin{equation*}
\Psi(x) = 2\omega_0^2\sum_{j=L,R,1}^{N} \hat{\xi}_j G_{\omega_\lambda}(x,x_j) \int_{0}^\infty w_c(y+y_j)\phi_j(y)dy.
\end{equation*}
where we let
\begin{equation}
y_1=...=y_N = 0,\qquad \hat{\xi}_i \equiv \begin{cases} \xi_i, & i=L,R,\\ 2\xi_i, & i=1,...,N,\end{cases}.
\end{equation}
Evaluating $\Psi(x)$ at each $x=x_i$ and substituting into \eqref{eq:inner-perturbation-equations}  yields the system of NLEPs
\begin{subequations}
\begin{align}\label{eq:nlep}
\mathscr{L}_{y_i} \phi_i - 2\omega_0^2  w_c(y+y_i)^2 & \sum_{j=L,R,1}^N \hat{\xi}_j G_{\omega_\lambda}(x_i,x_j)\int_0^\infty w_c(y+y_j)\phi_j(y)\, d y = \lambda \phi_i, \quad y>0\\
& \phi_i'(0) = 0,\qquad \phi_i \rightarrow 0\qquad\text{as } y\rightarrow + \infty.
\end{align}
\end{subequations}
for each $i=L,R,1,...,N$ where $\mathscr{L}_{y_i}$ is defined by \eqref{eq:L_y0-definition}.

The NLEP system \eqref{eq:nlep} has two key features that distinguish it from analogous NLEPs in singularly perturbed reaction diffusion systems \cite{iron_2001,ward_2002_asymmetric,wei_1999,wei_2007_existence} and are explored in the rigorous stability results of \S\ref{sec:rigorous} as well as in the specific examples of \S\ref{sec:examples}. First, it considers both boundary-bound and interior-bound spikes. As explored in Examples 2 to 4 this has immediate consequences for both the existence and stability of asymmetric patterns even in the zero-flux case where $A=B=0$. The second distinguishing feature of \eqref{eq:nlep} is the introduction of the shift parameters $y_L\geq 0$ and $y_R\geq 0$. We remark that an analogous \textit{negative} shift parameter has been examined in the context of near-boundary spike solutions for \textit{homogeneous} Robin boundary conditions \cite{berestycki_2003,maini_2007}. However, as highlighted in the stability results of \S\ref{sec:rigorous}, the positive shift parameter plays a key role in the stability properties of boundary-bound spikes. An important critical value of the shift parameter is the unique value $y_{0c}>0$ such that $w_c''(y_{0c})=0$ and which is explicitly given by
\begin{equation}\label{eq:y0c_def}
y_{0c} = \log(2+\sqrt{3}).
\end{equation}
In particular, it can be shown that if $y_0\lessgtr y_{0c}$ then $\mathscr{L}_{y_0}$ has an unstable and stable spectrum respectively (see Lemma \ref{lem:principal_eig_Ly0} below). Moreover the operator $\mathscr{L}_{y_{0c}}$ has a one-dimensional kernel spanned by $w_c'(y+y_{0c})$.

\subsection{Reduction of NLEP to an Algebraic System}\label{subsec:algebraic}

It is particularly useful to rewrite \eqref{eq:nlep} as an algebraic system as follows. Assuming that $\lambda$ is not an eigenvalue of $\mathscr{L}_{y_i}$ for all $i=L,R,1,...,N$ we let
\begin{equation}\label{eq:phi_i_solve}
\phi_i = c_i (\mathscr{L}_{y_i} - \lambda)^{-1} w_c(y+y_i)^2,\qquad i\in\{L,R,1,...,N\},
\end{equation}
where the coefficients $c_L,c_R,c_1,...,c_N$ are undetermined. Note that in \eqref{eq:phi_i_solve} the homogeneous Neumann boundary condition $\phi_i'(0)=0$ is assumed. In addition, note that if $\lambda = 0$ then \eqref{eq:phi_i_solve} is only valid if $y_L,y_R\neq y_{0c}$. 

Substituting into \eqref{eq:nlep} then yields the linear homogeneous system for $\bm{c}\equiv(c_L,c_R,c_1,...,c_N)^T$
\begin{equation}\label{eq:algebraic_sys}
\mathscr{G}_{\omega_\lambda} \mathscr{D}_\lambda \pmb{c} = (2\omega_0^2)^{-1}\pmb{c},
\end{equation}
where $\mathscr{G}_{\omega_\lambda}$ is the $(N+2)\times(N+2)$ matrix with entries
\begin{equation}\label{eq:G_lambda_def}
(\mathscr{G}_{\omega_\lambda})_{ij} = G_{\omega_\lambda}(x_i,x_j),\qquad (i,j=L,R,1,...,N),
\end{equation}
while $\mathscr{D}_\lambda$ is the diagonal $(N+2)\times(N+2)$ matrix given by 
\begin{equation}\label{eq:D_lambda_def_1}
(\mathscr{D}_\lambda)_{ij} = \frac{1}{\omega_0}\begin{cases} \eta(y_i) \xi_i \mathscr{F}_{y_L}(\lambda), & i=j=L,R,\\ 6 \xi_i\mathscr{F}_0(\lambda), & i=j=1,...,N,\\ 0, & i\neq j, \end{cases}
\end{equation}
where
\begin{equation}\label{eq:F_y0_def}
\mathscr{F}_{y_0}(\lambda) \equiv \frac{\int_0^\infty w_c(y+y_0)(\mathscr{L}_{y_0}-\lambda)^{-1}w_c(y+y_0)^2 dy}{\int_0^\infty w_c(y+y_0)^2 dy},
\end{equation}
and for which \eqref{eq:L_y0_inverses} and \eqref{eq:L_y0_integrals} below imply that for all $y_0\neq y_{0c}$
\begin{equation}\label{eq:Fy00}
\mathscr{F}_{y_0}(0) = 1 + \frac{w_c'(y_0)w_c(y_0)^2}{2 w_c''(y_0) \eta(y_0)}.
\end{equation}
Comparing \eqref{eq:algebraic_sys} and \eqref{eq:nlep}, it follows that $\lambda$ is an eigenvalue of \eqref{eq:nlep} if and only if $(2\omega_0^2)^{-1}$ is an eigenvalue of $\mathscr{G}_{\omega_\lambda}\mathscr{D}_\lambda$. In particular, when $\lambda$ is not an eigenvalue of $\mathscr{L}_{y_L}$, $\mathscr{L}_{y_R}$, and $\mathscr{L}_0$ then it is an eigenvalue of the NLEP \eqref{eq:nlep} if and only if it satisfies the algebraic equation
\begin{equation}\label{eq:algebraic_eq}
\det\bigl(\mathbb{I}_{N+2} - 2\omega_0^2 \mathscr{G}_{\omega_\lambda}\mathscr{D}_\lambda\bigr) = 0,
\end{equation}
where $\mathbb{I}_{N+2}$ is the $(N+2)\times(N+2)$ identity matrix. 

We conclude by noting that if either $y_L=y_{0c}$ and/or $y_R=y_{0c}$ then the algebraic reduction fails when searching for a zero eigenvalue $\lambda=0$ since $\mathscr{L}_{y_{0c}}$ is not invertible. However, in this case we can deduce an analogous system. In particular letting $\lambda=0$ and assuming that $y_L=y_{0c}$ and $y_R\neq y_{0c}$, we multiply the $i=L$ NLEP in \eqref{eq:nlep} by $w_c'(y+y_{0c})$ and integrate over $0<y<\infty$ to get
\begin{equation}
\xi_L \int_{0}^\infty w_c(y+y_L)\phi_L dy = - \frac{1}{G_{\omega_0}(0,0)} \sum_{j=R,1}^N \tilde{\xi}_j G_{\omega_0}(0,x_j)\int_0^\infty w_c(y+y_j)\phi_j dy.
\end{equation}
Proceeding as above we then deduce that the NLEP \eqref{eq:nlep} with $\lambda=0$ is then equivalent to the algebraic equation
\begin{equation}
\det (\mathbb{I}_{N+1} - 2\omega_0^2 \tilde{\mathscr{G}_{\omega_0}}\tilde{\mathscr{D}}) = 0,
\end{equation}
where $\tilde{\mathscr{G}_{\omega_0}}$ and $\tilde{\mathscr{D}}$ are the $(N+1)\times (N+1)$ matrices with entries
\begin{equation}
(\tilde{\mathscr{G}_{\omega_0}})_{ij} = \mathscr{G}_{ij} - \frac{1}{G_{\omega_0}(0,0)}G_{\omega_0}(x_i,0) G_{\omega_0}(0,x_j),\qquad (\tilde{\mathscr{D}})_{ij} = \mathscr{D}_{ij},\qquad i,j=R,1,...,N.
\end{equation}
The same approach can likewise be used if $y_L=y_R=y_{0c}$.

\subsection{Zero-Eigenvalues of the NLEP and the Consistency Condition}

The conditions under which $\lambda=0$ is an eigenvalue of the NLEP \eqref{eq:nlep} can be directly linked to the system \eqref{eq:sys_1} as highlighted in \cite{wei_2007_existence}. Specifically, assume that $x_1,...,x_N$ are fixed (not necessarily at an equilibrium configuration of the slow dynamics ODE \eqref{eq:ode_dynamics}) and let $\xi_L,\xi_R,\xi_1,...,\xi_N$ together with $y_L$ and $y_R$ solve \eqref{eq:sys_1}, \eqref{eq:yL_eq_1}, and \eqref{eq:yR_eq_1} with the additional assumption that $y_L,y_R\neq y_{0c}$. From the definition of $\eta$ in \eqref{eq:eta_def} and from \eqref{eq:yL_eq_1} and \eqref{eq:yR_eq_1} we calculate
\begin{equation}\label{eq:eta_derivatives}
\frac{\partial \eta(y_i)}{\partial \xi_i} = \frac{w_c(y_i)^2 w_c'(y_i)}{\xi_i w_c''(y_i)},
\end{equation}
for $i=L,R$. Taking the Jacobian of the quasi-equilibrium system \eqref{eq:sys_1} and recalling the definition of $\mathscr{D}_\lambda$ given in \eqref{eq:D_lambda_def_1} we deduce that
\begin{equation}
\nabla_{\bm{\xi}}\bm{B} = \mathbb{I} - 2\omega_0^2 \mathscr{G}_{\omega_0}\mathscr{D}_0.
\end{equation}
Together with the discussion of \S\ref{subsec:algebraic} we deduce that if $y_L,y_R\neq 0$ and each $x_1,...,x_N$ is independent of $\xi_L,\xi_R,\xi_1,...,\xi_N$, then $\lambda=0$ is an eigenvalue of the NLEP \eqref{eq:nlep} if and only if the Jacobian $\nabla_{\bm{\xi}} \bm{B}$ is singular.

\section{Rigorous Stability and Instability Results for the Shifted NLEP}\label{sec:rigorous}

In this section we rigorously prove instability and stability results for the \textit{shifted} NLEP
\begin{equation}\label{eq:rigorous_nlep}
\mathscr{L}_{y_0}\phi - \mu \frac{\int_0^\infty w\phi}{\int_0^\infty w^2} w^2 = \lambda\phi,\quad 0<y<\infty;\qquad \phi'(0) = 0;\qquad \phi\rightarrow 0\quad \text{as }y\rightarrow\infty,
\end{equation}
where $\mu$ is a real constant and for a fixed value of $y_0\geq 0$ we define
\begin{equation}\label{eq:rigorous_L_definition}
\mathscr{L}_{y_0} \phi \equiv \phi'' - \phi + 2 w \phi,\qquad w(y)\equiv w_c(y+y_0),
\end{equation}
and where $w_c$ is the unique solution to \eqref{eq:wc_equation}. When $y_0=0$ the NLEP \eqref{eq:rigorous_nlep} is stable if $\mu > 1$ and unstable if $\mu<1$ \cite{wei_1999}. We begin by collecting a few facts about the operator $\mathscr{L}_{y_0}$ and its spectrum. First, we calculate
\begin{equation}\label{eq:L_y0_inverses}
\mathscr{L}_{y_0}^{-1} w^2 = w - \frac{w'(0)}{w''(0)}w',\qquad \mathscr{L}_{y_0}^{-1} w =  w + \frac{1}{2}yw^{'} - \frac{3 w'(0)}{2 w''(0)}w^{'}
\end{equation}
where the additional terms are chosen so that homogeneous Neumann boundary conditions at $y=0$ are satisfied and which we use to compute
\begin{subequations}\label{eq:L_y0_integrals}
	\begin{align}
	& \int_0^\infty w \mathscr{L}_{y_0}^{-1} w^2   = \int_0^\infty w^2 + \frac{w'(0) w(0)^2}{2 w''(0)},\quad  & \int_0^\infty w\mathscr{L}_{y_0}^{-1} w = \frac{3}{4}\int_0^\infty w^2 + \frac{3 w'(0) w(0)^2}{4 w''(0)},& \\
	& \int_0^\infty w^2 \mathscr{L}_{y_0}^{-1} w^2 = \int_0^\infty w^3 + \frac{w'(0)w(0)^3}{3 w''(0)},\quad & \int_0^\infty w^3 = \frac{6}{5}\int_0^\infty w^2 + \frac{3 w(0) w'(0)}{5}.&
	\end{align}
\end{subequations}
In the next two lemmas, we describe some key properties of the eigenvalue problem
\begin{equation}\label{eq:Ly0_eigenvalue_problem}
\mathscr{L}_{y_0}\Phi = \Lambda \Phi,\qquad 0<y<\infty;\qquad \Phi'(0)=0;\qquad \Phi\rightarrow 0,\quad\text{as }y\rightarrow +\infty.
\end{equation}

\begin{lemma}\label{lem:principal_eig_Ly0}
	Let $y_0\geq 0$ an let $\Lambda_0$ be the principal eigenvalue of \eqref{eq:Ly0_eigenvalue_problem}. Then $\Lambda_0=0$ if $y_0=y_{0c}$ and $\Lambda_0 \lessgtr 0$ if $y_0\gtrless y_{0c}$. Furthermore, the eigenfunction corresponding to the principal eigenvalue is of one sign. 
\end{lemma}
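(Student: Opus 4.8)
The plan is to regard $\mathscr{L}_{y_0}$ as (the negative of) a one-dimensional Schrödinger operator on the half-line and to combine one explicit computation at the critical shift with a monotonicity argument in $y_0$. First I would record that since $w(y)=w_c(y+y_0)\to 0$ as $y\to\infty$, the operator $\mathscr{L}_{y_0}$ tends to $\partial_{yy}-1$ at infinity, so its essential spectrum is $(-\infty,-1]$ and the principal eigenvalue $\Lambda_0$ is the largest point of the spectrum. Integrating by parts (the Neumann condition kills the boundary term) gives the variational characterization
\[
\Lambda_0(y_0) = \sup_{0\neq\Phi\in H^1(0,\infty)} \frac{-\int_0^\infty (\Phi')^2\,dy - \int_0^\infty \Phi^2\,dy + 2\int_0^\infty w\,\Phi^2\,dy}{\int_0^\infty \Phi^2\,dy}.
\]
By the standard ground-state theory for such Schrödinger operators, whenever $\Lambda_0>-1$ the supremum is attained by an eigenfunction that may be taken strictly positive and is the unique nodeless eigenfunction; this at once yields the final assertion of the lemma, that the principal eigenfunction is of one sign.

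For the value at the critical shift I would verify directly that $\Lambda_0(y_{0c})=0$. Differentiating the profile equation $w_c''-w_c+w_c^2=0$ shows that $w'\equiv w_c'(\cdot+y_0)$ solves $\mathscr{L}_{y_0}(w')=0$ and decays at infinity for \emph{every} $y_0$, so it always satisfies the differential equation with eigenvalue $0$. When $y_0=y_{0c}$ the defining property $w_c''(y_{0c})=0$ ensures the Neumann condition $(w')'(0)=w_c''(y_{0c})=0$ as well, so $w'$ is a genuine eigenfunction of \eqref{eq:Ly0_eigenvalue_problem} with eigenvalue $0$. Since $w_c$ is strictly decreasing on $(0,\infty)$, the function $w'(y)=w_c'(y+y_{0c})$ is strictly negative on $[0,\infty)$ and hence of one sign; by the characterization of the principal eigenfunction it must be the ground state, forcing $\Lambda_0(y_{0c})=0$.

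Finally I would establish strict monotonicity of $\Lambda_0$ in $y_0$ to propagate this sign information. Because $w_c$ is strictly decreasing on $(0,\infty)$, for each fixed $y$ the potential $w(y)=w_c(y+y_0)$ strictly decreases as $y_0$ increases, so for every fixed $\Phi$ the numerator of the Rayleigh quotient decreases; taking the supremum shows $\Lambda_0$ is nonincreasing. Strictness follows by evaluating the quotient at the strictly positive principal eigenfunction $\Phi_2$ associated with the larger shift $y_0^{(2)}>y_0^{(1)}$: the strict pointwise inequality $w(\cdot+y_0^{(1)})>w(\cdot+y_0^{(2)})$ together with $\Phi_2>0$ gives $\Lambda_0(y_0^{(2)})<\Lambda_0(y_0^{(1)})$. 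Combining strict monotonicity with $\Lambda_0(y_{0c})=0$ yields $\Lambda_0(y_0)>0$ for $y_0<y_{0c}$ and $\Lambda_0(y_0)<0$ for $y_0>y_{0c}$, as claimed.

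The main technical point to handle with care is the existence of the principal eigenvalue above the essential spectrum, i.e.\ that $\Lambda_0>-1$ for all $y_0\ge 0$, together with the positivity of the associated eigenfunction, since both the one-sign claim and the \emph{strictness} of the monotonicity rely on the eigenvalue being attained. This is secured by the usual variational argument for attractive potentials on the half-line with a Neumann condition (equivalently, even states of the reflected whole-line problem, where any nonzero attractive perturbation binds a state below threshold), and it is anchored at $y_0=0$ by the explicit computation $\mathscr{L}_0\,\sech^3(y/2)=\tfrac{5}{4}\sech^3(y/2)$, giving the positive ground state $\Lambda_0(0)=\tfrac{5}{4}>0$ consistent with $0<y_{0c}$.
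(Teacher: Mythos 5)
Your proof is correct, but it reaches the sign of $\Lambda_0$ by a genuinely different route than the paper. Both arguments start from the variational characterization to get a one-signed principal eigenfunction $\Phi_0$, and both exploit the translation identity $\mathscr{L}_{y_0}w'=0$ with $w=w_c(\cdot+y_0)$; the difference is in how that identity is used. The paper pairs the eigenvalue equation against $w'$ and integrates by parts twice, which produces the single formula $\Lambda_0\int_0^\infty w'\Phi_0\,dy=w''(0)$; since $w'<0$ and $\Phi_0>0$, this reads off $\operatorname{sgn}\Lambda_0=-\operatorname{sgn}w_c''(y_0)$ for each $y_0$ separately in two lines, with no comparison between different shifts. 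You instead use $w'$ as an honest eigenfunction: at $y_0=y_{0c}$ it satisfies the Neumann condition, is nodeless, and hence is the ground state with $\Lambda_0(y_{0c})=0$; you then propagate the sign by proving strict monotonicity of $\Lambda_0$ in $y_0$ from the pointwise strict decrease of the potential $2w_c(y+y_0)$, evaluating the Rayleigh quotient at the maximizer for the larger shift. Your route costs more: the strictness step needs the supremum to be attained, so you must verify $\Lambda_0>-1$ (your even-reflection/binding argument and the explicit computation $\mathscr{L}_0\,\sech^3(y/2)=\tfrac{5}{4}\sech^3(y/2)$ handle this correctly, and in fairness the paper tacitly assumes attainment as well). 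In return you get more than the lemma states: $\Lambda_0$ is strictly decreasing in $y_0$, which is exactly the behaviour seen numerically in Figure \ref{fig:L0_eigenvalues}, and you identify the critical eigenfunction $w_c'(\cdot+y_{0c})$ explicitly, which is the kernel element used elsewhere in \S\ref{subsec:algebraic}. The paper's identity, by contrast, is shorter and also yields a quantitative relation between $\Lambda_0$ and $w''(0)$.
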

\begin{proof}
	Since $\mathscr{L}_{y_0}$ is self-adjoint, the variational characterization
	\begin{equation}
	-\Lambda_0 = \inf_{\Phi\in H^2([0,\infty))}\frac{\int_0^\infty |\Phi'|^2 + |\Phi|^2 - 2w|\Phi|^2}{\int_0^\infty |\Phi|^2},
	\end{equation}
	implies that the principal eigenfunction $\Phi_0$ is of one sign. Since $\Phi_0(0)\neq 0$ we may, without loss of generality, assume that $\Phi_0(0)=1$ and $\Phi_0>0$. Now we multiply \eqref{eq:Ly0_eigenvalue_problem} by $w'$ and integrate by parts to get
	\begin{equation}
	\Lambda_0 = \frac{w''(0)}{\int_0^\infty w'\Phi_0 dy},
	\end{equation}
	where we remark that the denominator is negative since $w'\leq 0$ for all $y\geq 0$. The claim follows by noting that $w''(0)=0$ when $y_0=0$ and $w''(0)\gtrless0$ for $y_0\gtrless y_{0c}$.
\end{proof}

\begin{lemma}\label{lem:second_eig_Ly0} Let $\Lambda_1$ be the second eigenvalue of $\mathscr{L}_{y_0}$. Then $\Lambda_1<0$ for all $y_0\geq 0$.
\end{lemma}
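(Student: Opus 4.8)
The plan is to combine a soft monotonicity argument in the shift parameter $y_0$ with a single explicit computation at the base point $y_0=0$. The key observation is that, after the change of variables $z=y+y_0$, the eigenvalue problem \eqref{eq:Ly0_eigenvalue_problem} is exactly the restriction of the \emph{fixed} full-line operator $L\Phi\equiv\Phi''-\Phi+2w_c(z)\Phi$ to the half-line $[y_0,\infty)$ with a Neumann condition at the left endpoint $z=y_0$. Increasing $y_0$ thus amounts to sliding the Neumann wall to the right, into the region where the potential $2w_c$ is smaller. This strongly suggests that all eigenvalues decrease as $y_0$ increases, and it then suffices to control the second eigenvalue at $y_0=0$.

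First I would make the monotonicity precise via the min--max principle. Working with the self-adjoint operator $-\mathscr{L}_{y_0}$ on $L^2([0,\infty))$, whose quadratic form is
\begin{equation*}
Q_{y_0}[\Phi]=\int_0^\infty\Bigl(|\Phi'|^2+|\Phi|^2-2w_c(y+y_0)|\Phi|^2\Bigr)\,dy,\qquad \Phi\in H^1([0,\infty)),
\end{equation*}
I would note that since $w_c$ is strictly decreasing on $[0,\infty)$ (from \eqref{eq:wc_def}), for each fixed $y\ge 0$ the map $y_0\mapsto w_c(y+y_0)$ is decreasing, so $Q_{y_0}[\Phi]$ is nondecreasing in $y_0$ for every fixed $\Phi$, while the form domain $H^1([0,\infty))$ is independent of $y_0$. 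The Courant--Fischer characterization of the eigenvalues of $-\mathscr{L}_{y_0}$ lying below its essential spectrum then shows that each $-\Lambda_k(y_0)$ is nondecreasing in $y_0$, i.e. every eigenvalue $\Lambda_k(y_0)$ is nonincreasing in $y_0$. In particular $\Lambda_1(y_0)\le\Lambda_1(0)$ whenever a second discrete eigenvalue exists.

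It then remains to evaluate the base case $y_0=0$, which is the concrete input. At $y_0=0$ we have $w=w_c$, and the Neumann condition at the origin permits an even reflection identifying the discrete spectrum of \eqref{eq:Ly0_eigenvalue_problem} with the \emph{even} part of the spectrum of $L$. The full-line operator $L$ has the well-known discrete spectrum $\{5/4,\,0,\,-3/4\}$, with eigenfunctions of even, odd, and even parity respectively (the odd eigenvalue $0$ being the translation mode $w_c'$). I would confirm this either by citing the standard Pöschl--Teller computation or by exhibiting the explicit even eigenfunctions $\sech^3(y/2)$ and a $\sech(y/2)$ times a quadratic in $\tanh(y/2)$; the nodal-count ordering of eigenfunctions for a symmetric single-well potential pins down the parities. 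The two even eigenvalues supply the Neumann eigenvalues at $y_0=0$, so $\Lambda_0(0)=5/4$ and $\Lambda_1(0)=-3/4$. Combined with the monotonicity this gives $\Lambda_1(y_0)\le-3/4<0$ for all $y_0\ge 0$.

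The step I expect to require the most care is the essential-spectrum bookkeeping: since the essential spectrum of $\mathscr{L}_{y_0}$ is $(-\infty,-1]$, I must make sure the second min--max level is compared correctly and that it corresponds to an honest discrete eigenvalue rather than the bottom of the continuum. This is in fact harmless here, because the bound $\Lambda_1(y_0)\le-3/4$ lies strictly above the edge $\Lambda=-1$: whenever a second eigenvalue exists it is trapped in $(-1,-3/4]$ and is therefore negative, while if the monotone family eventually crosses into the essential spectrum there is simply no second discrete eigenvalue and nothing to prove. The remaining routine point is verifying the parity assignment in the full-line spectrum, which follows from the standard ordering of Schrödinger eigenfunctions by their number of nodes.
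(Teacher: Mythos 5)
Your proof is correct, but it reaches the conclusion by a genuinely different route from the paper. The paper's argument is a direct Sturm-type comparison, uniform in $y_0$: assuming $\Lambda_1\ge 0$, orthogonality to the one-signed ground state produces a nodal interval $(a,b)$ of $\Phi_1$, and integrating $\Phi_1$ against the explicit solution $w'=w_c'(\cdot+y_0)$ of $\mathscr{L}_{y_0}w'=0$ (which violates the Neumann condition at $y=0$ but still annihilates the bulk terms) leaves only the boundary contributions $w'(b)\Phi_1'(b)-w'(a)\Phi_1'(a)$, whose signs give a contradiction both on an interior nodal interval and on the terminal one with $b=\infty$; no explicit spectral data are required. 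You instead combine Courant--Fischer monotonicity of the min--max levels in the shift $y_0$ (legitimate, since $w_c(y+y_0)$ decreases pointwise in $y_0$ while the form domain $H^1([0,\infty))$ is fixed) with the exactly solvable base case: the P\"oschl--Teller spectrum $\{5/4,0,-3/4\}$ of the full-line operator, the parity assignment from nodal counts, and the even-reflection identification of the Neumann half-line spectrum, giving $\Lambda_1(0)=-3/4$ and hence $\Lambda_1(y_0)\le -3/4$ whenever a second discrete eigenvalue exists. Your route buys a sharper quantitative bound and a transparent picture of how the spectrum moves with the shift, at the price of invoking the explicit $y_0=0$ spectrum and the (correctly handled) bookkeeping at the edge of the essential spectrum $(-\infty,-1]$; the paper's proof is more elementary and self-contained but purely qualitative. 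Both are complete.
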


\begin{proof}
	First note that the second eigenfunction $\Phi_1$ must cross zero at least once since $\int_0^\infty \Phi_0\Phi_1dy = 0$ and $\Phi_0$ is of one sign. Next we assume toward a contradiction that $\Lambda_1 \geq 0$. We begin by showing that $\Phi_1$ has exactly one zero in $0<y<\infty$. Assume that $\Phi_1$ has more than one zero and choose $0<a<b<\infty$ such that $\Phi_1(a)=\Phi_1(b)=0$ and $\Phi_1>0$ in $a<y<b$. Then $\Phi_1'(a)>0$ and $\Phi_1'(b)<0$ so we obtain the contradiction
	\begin{equation}\label{eq:temp_Lambda_1_proof}
	0 \geq \Lambda_1 \int_a^b w' \Phi_1 dy = \int_a^b w'\mathscr{L}_{y_0}\Phi_1 dy = w'(b)\Phi_1'(b) - w'(a)\Phi_1'(a)>0,
	\end{equation}
	where we have used $\mathscr{L}_{y_0}w'=0$ and $w'<0$ for all $y>0$. Thus $\Phi_1$ has a unique zero $0<a<\infty$ and we may assume that $\Phi_1\lessgtr 0$ for $y\lessgtr a$. Setting $b=\infty$ in \eqref{eq:temp_Lambda_1_proof} we get a contradiction.
\end{proof}


In Figure \ref{fig:L0_eigenvalues} we plot the principal and second eigenvalues of the operator $\mathscr{L}_{y_0}$ which we calculated numerically (see Appendix \ref{app:conjecture} for details on the numerical method). 

Lemma \ref{lem:principal_eig_Ly0} implies that the NLEP will have different stability properties depending on whether $y_0$ is greater than or smaller than $y_{0c}$. We will henceforth refer to $0\leq y_0<y_{0c}$ and $y_0>y_{0c}$ as the \textit{small-shift} and \textit{large-shift} cases respectively. When $y_0 = 0$ it is known that for $\mu>0$ sufficiently large, the NLEP \eqref{eq:rigorous_nlep} is stable. In this sense the nonlocal term appearing in \eqref{eq:rigorous_nlep} can stabilize the spectrum of the linearized operator $\mathscr{L}_{y_0}$. Since all the eigenvalues of $\mathscr{L}_{y_0}$ are negative in the large-shift case we expect the spectrum of the NLEP \eqref{eq:rigorous_nlep} to remain stable for all $\mu\geq 0$. Restricting our attention to \textit{real} eigenvalues, we have the following stability result for the large-shift case.

\begin{theorem}\label{thm:large_shift_real_stable}
	All \textit{real} eigenvalues of the NLEP \eqref{eq:rigorous_nlep} are negative when $y_0 > y_{0c}$.
\end{theorem}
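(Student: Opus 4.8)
My plan is to rule out nonnegative real eigenvalues by reducing \eqref{eq:rigorous_nlep} to a scalar characteristic equation and then pinning down the sign of the associated multiplier. The essential structural input is that, in the large-shift regime $y_0>y_{0c}$, the self-adjoint operator $\mathscr{L}_{y_0}$ is \emph{negative definite}: Lemma \ref{lem:principal_eig_Ly0} gives $\Lambda_0<0$ and Lemma \ref{lem:second_eig_Ly0} gives $\Lambda_1<0$, and since $\Lambda_0$ is the largest eigenvalue every eigenvalue is strictly negative. Consequently, for every $\lambda\ge0$ the shifted operator $\lambda-\mathscr{L}_{y_0}$ is a positive definite one-dimensional Schr\"odinger operator on $[0,\infty)$ with the Neumann condition at $y=0$, so $\mathscr{L}_{y_0}-\lambda$ is boundedly invertible and $\lambda$ lies in its resolvent set.

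Since $\lambda$ and all coefficients in \eqref{eq:rigorous_nlep} are real, the real and imaginary parts of any eigenfunction are themselves eigenfunctions, so I may take $\phi\not\equiv 0$ real. Suppose, for contradiction, that $\lambda\ge0$ is a real eigenvalue. If $\int_0^\infty w\phi\,dy=0$, then \eqref{eq:rigorous_nlep} collapses to $\mathscr{L}_{y_0}\phi=\lambda\phi$, forcing $\lambda$ into the spectrum of $\mathscr{L}_{y_0}$, which is impossible for $\lambda\ge0$; hence $\int_0^\infty w\phi\,dy\neq0$. Inverting $\mathscr{L}_{y_0}-\lambda$, pairing the resulting identity with $w$, and dividing by $\int_0^\infty w\phi\,dy\neq0$ then yields the characteristic equation $\mu\,\mathscr{F}_{y_0}(\lambda)=1$, with $\mathscr{F}_{y_0}$ as in \eqref{eq:F_y0_def}.

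The heart of the proof is the claim that $\mathscr{F}_{y_0}(\lambda)<0$ for every $\lambda\ge0$. I would establish this by an inverse-positivity argument: because $\lambda-\mathscr{L}_{y_0}$ is a positive definite one-dimensional Schr\"odinger operator, its resolvent is positivity preserving (equivalently, its Green's function on $[0,\infty)$ is positive), so $(\lambda-\mathscr{L}_{y_0})^{-1}w^2>0$ since $w^2>0$. Hence $(\mathscr{L}_{y_0}-\lambda)^{-1}w^2<0$ pointwise, and as $w>0$ we obtain $\int_0^\infty w(\mathscr{L}_{y_0}-\lambda)^{-1}w^2\,dy<0$, i.e. $\mathscr{F}_{y_0}(\lambda)<0$. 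With this in hand, for $\mu\ge0$ we have $\mu\,\mathscr{F}_{y_0}(\lambda)\le0<1$, so the characteristic equation has no root with $\lambda\ge0$, contradicting the assumed eigenvalue and completing the argument.

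The main obstacle is precisely this positivity claim: although $\lambda-\mathscr{L}_{y_0}$ is coercive, its potential $1+\lambda-2w$ is not pointwise positive, so the maximum principle cannot be applied naively; one must instead invoke the one-dimensional equivalence between positive definiteness (disconjugacy) and inverse positivity, building the Green's function from the positive Neumann and decaying solutions of $(\lambda-\mathscr{L}_{y_0})u=0$. A more computational alternative, useful as a check, is to verify the endpoints directly: $\mathscr{F}_{y_0}(0)<0$ from the closed form \eqref{eq:Fy00} (here $w_c''(y_0)>0$ and $w_c'(y_0)<0$ make the correction term negative and dominant), and $\mathscr{F}_{y_0}(\lambda)\to0^-$ as $\lambda\to+\infty$ since $(\mathscr{L}_{y_0}-\lambda)^{-1}w^2\sim-\lambda^{-1}w^2$, and then exclude interior sign changes. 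Finally, I note that the sign condition $\mu\ge0$ (the regime emphasized just before the theorem) is genuinely used, as it is exactly what makes $\mu\,\mathscr{F}_{y_0}(\lambda)$ nonpositive and hence unequal to $1$.
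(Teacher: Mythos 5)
Your proposal is correct and follows the same overall skeleton as the paper's proof: both reduce the NLEP \eqref{eq:rigorous_nlep} for a putative real eigenvalue $\lambda\ge 0$ to the scalar identity $1=\mu\,\mathscr{F}_{y_0}(\lambda)$ (after noting invertibility of $\mathscr{L}_{y_0}-\lambda$ via Lemma \ref{lem:principal_eig_Ly0}) and then derive a sign contradiction from the negativity of $\int_0^\infty w(\mathscr{L}_{y_0}-\lambda)^{-1}w^2$ together with $\mu\ge 0$. The genuine difference lies in how that crucial negativity is established. The paper proves a self-contained comparison result (Lemma \ref{lem:sign_lemma}) using the explicit barrier $g=w''-\beta w'$, which exploits the special structure $\mathscr{L}_{y_0}w'=0$ and the identity $\mathscr{L}_{y_0}w''=(w')^2\cdot(-1)$ available for the known profile $w_c$; this is elementary and entirely hands-on, and it yields the pointwise statement $(\mathscr{L}_{y_0}-\lambda)^{-1}w^2<0$ needed in the theorem. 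You instead invoke the classical one-dimensional equivalence between positive definiteness (disconjugacy) and inverse positivity of the Green's function, correctly flagging that the non-signed potential $1+\lambda-2w$ prevents a naive maximum principle. That route is valid, but to be complete it requires the Sturm--Liouville bookkeeping you allude to: zero-extension of test functions to show the mixed Neumann--Dirichlet problems on subintervals remain positive definite, hence that the Neumann solution at $y=0$ and the decaying solution at infinity are both positive and independent, so the Green's function has one sign. What your version buys is generality (it would survive replacing $w_c$ by any profile for which $\mathscr{L}_{y_0}$ is negative definite) and a cleaner conceptual picture; what the paper's version buys is a short, fully explicit argument with no appeal to external theory. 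Two small points in your favor: you explicitly dispose of the case $\int_0^\infty w\phi=0$ and you make visible where the hypothesis $\mu\ge 0$ enters, both of which the paper's proof leaves implicit.
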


To prove this, we first prove the following lemma.

\begin{lemma}\label{lem:sign_lemma}
	Let $y_0 > y_{0c}$ and suppose that $\phi$ satisfies
	\begin{equation}
	\mathscr{L}_{y_0} \phi -\lambda\phi \geq 0,\quad 0<y<\infty;\qquad \phi'(0) \geq 0;\qquad \phi\rightarrow 0,\quad \text{as }y\rightarrow+\infty,
	\end{equation}
	where $\lambda \geq 0$. Then $\phi < 0$ for all $y\geq 0$.
\end{lemma}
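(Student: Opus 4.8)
The plan is to recast the differential inequality into a form governed by a maximum principle, using a strictly positive supersolution of $\mathscr{L}_{y_0}-\lambda$ as a weight. Since $y_0>y_{0c}$, Lemma~\ref{lem:principal_eig_Ly0} furnishes a principal eigenvalue $\Lambda_0<0$ of $\mathscr{L}_{y_0}$ whose eigenfunction $\Phi_0$ is of one sign; I normalize it so that $\Phi_0>0$ on $[0,\infty)$, with $\Phi_0'(0)=0$ and $\Phi_0(0)>0$. For any $\lambda\geq 0$ this gives $(\mathscr{L}_{y_0}-\lambda)\Phi_0=(\Lambda_0-\lambda)\Phi_0<0$, so $\Phi_0$ is a positive supersolution, and the strict negativity of $\Lambda_0-\lambda$ is precisely where the hypothesis $y_0>y_{0c}$ enters. (One could instead use $-w'>0$, which satisfies $\mathscr{L}_{y_0}(-w')=0$, but the Neumann condition $\Phi_0'(0)=0$ makes the boundary analysis cleaner.)

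First I would set $\phi=\Phi_0\psi$ and substitute into $\mathscr{L}_{y_0}\phi-\lambda\phi\geq 0$. Using $\mathscr{L}_{y_0}\Phi_0=\Lambda_0\Phi_0$ and dividing by $\Phi_0>0$ gives
\begin{equation*}
\psi''+2\frac{\Phi_0'}{\Phi_0}\psi'+(\Lambda_0-\lambda)\psi\geq 0,\qquad 0<y<\infty,
\end{equation*}
an inequality whose zeroth-order coefficient $\Lambda_0-\lambda$ is strictly negative. The boundary data transform as $\psi'(0)=\phi'(0)/\Phi_0(0)\geq 0$, using $\Phi_0'(0)=0$. It then suffices to prove $\psi<0$.

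I would argue by contradiction, supposing that $\phi\not\equiv 0$ yet $\sup_{y\geq 0}\psi\geq 0$. At a nonnegative interior maximum $y^\ast>0$ one has $\psi'(y^\ast)=0$ and $\psi''(y^\ast)\leq 0$, so the displayed inequality forces $(\Lambda_0-\lambda)\psi(y^\ast)\geq 0$; as $\Lambda_0-\lambda<0$ this requires $\psi(y^\ast)\leq 0$, hence the maximal value is $0$ and the strong maximum principle yields $\psi\equiv 0$, i.e. $\phi\equiv 0$. A nonnegative maximum at $y=0$ forces $\psi'(0)=0$ (a left-endpoint maximum needs $\psi'(0)\leq 0$, while $\psi'(0)\geq 0$), and evaluating the inequality at $0$ then gives $\psi''(0)\geq(\lambda-\Lambda_0)\psi(0)\geq 0$, contradicting maximality unless $\psi\equiv 0$. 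In every case a nonnegative supremum forces $\phi\equiv 0$, so for nontrivial $\phi$ one concludes $\sup\psi<0$ and hence $\phi=\Phi_0\psi<0$ on $[0,\infty)$.

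The step I expect to be the main obstacle is controlling $\psi=\phi/\Phi_0$ as $y\to\infty$: the hypothesis only gives $\phi\to 0$ with no rate, so a priori the supremum of $\psi$ could escape to infinity and evade the maximum-principle dichotomy above. To close this gap I would compare decay rates. Far from the origin $2w$ is exponentially small, so where $\phi>0$ the inequality reduces to a perturbation of $\phi''\geq(1+\lambda)\phi$, forcing decay at least at rate $\sqrt{1+\lambda}\geq 1$; on the other hand $\Phi_0\sim c\,e^{-\sqrt{1+\Lambda_0}\,y}$ with $\sqrt{1+\Lambda_0}<1$ because $\Lambda_0<0$. Choosing an exponential barrier $e^{-\kappa y}$ with $\sqrt{1+\Lambda_0}<\kappa<\sqrt{1+\lambda}$ and applying the generalized maximum principle on a half-line $[Y,\infty)$ bounds the positive part of $\phi$ by a multiple of $e^{-\kappa y}$, whence $\limsup_{y\to\infty}\psi\leq 0$. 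This guarantees that any nonnegative supremum of $\psi$ is attained at a finite point, where the argument of the previous paragraph applies and completes the proof.
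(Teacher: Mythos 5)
Your argument is correct in substance but takes a genuinely different route from the paper. The paper works directly with $\phi$ on a maximal interval of positivity $(a,b)$ and tests against the explicit positive function $g=w''-\beta w'$, which is a strict supersolution of $\mathscr{L}_{y_0}-\lambda$ because $\mathscr{L}_{y_0}w'=0$, $\mathscr{L}_{y_0}w''=-(w')^2\le 0$, and --- this is exactly where $y_0>y_{0c}$ enters --- $w''>0$ on $[0,\infty)$; a single Green's identity on $(a,b)$ then produces the sign contradiction, with no spectral input and no discussion of decay rates (the boundary terms at $b=\infty$ vanish automatically since $\phi,\phi'\to 0$). You instead build the positive strict supersolution out of the principal eigenfunction $\Phi_0$ with $\Lambda_0<0$ (Lemma \ref{lem:principal_eig_Ly0}), perform the ground-state transform $\phi=\Phi_0\psi$, and run a classical maximum principle; the price is the extra barrier argument needed to show $\limsup_{y\to\infty}\phi/\Phi_0\le 0$, which you correctly identify as the main obstacle and resolve by comparing the decay exponents $\sqrt{1+\lambda}\ge 1$ and $\sqrt{1+\Lambda_0}<1$. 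Both proofs ultimately rest on the same fact --- the existence of a positive strict supersolution of $\mathscr{L}_{y_0}-\lambda$ when $y_0>y_{0c}$ and $\lambda\ge 0$ --- but the paper's choice of $g$ makes the argument self-contained and elementary, while yours is more systematic and, as a bonus, delivers the strict inequality $\phi<0$ (for $\phi\not\equiv 0$) via the strong maximum principle, whereas the paper's contradiction argument literally only rules out $\phi>0$ somewhere, i.e.\ gives $\phi\le 0$. Three small points to tighten: the sub-case $\sup\psi=0$ attained only in the limit $y\to\infty$ does not force $\phi\equiv 0$ as you assert, but it is harmless since then $\psi<0$ pointwise anyway; the boundary sub-case $\psi(0)=0$ with $\psi'(0)=0$ needs Hopf's boundary point lemma rather than just ``$\psi''(0)\ge 0$ contradicts maximality''; and your decay comparison implicitly uses $\Lambda_0>-1$ together with a genuine two-sided asymptotic $\Phi_0\sim c\,e^{-\sqrt{1+\Lambda_0}\,y}$ with $c>0$, both true but worth a word of justification.
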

\begin{proof}
	Assume toward a contradiction that $\phi > 0$ in $0\leq a < y < b\leq \infty$. Without loss of generality we may assume that $\phi(a)=0$ if $a>0$ and $\phi(0)>0$ if $a=0$. Then, for any such $0\leq a<b\leq \infty$ we have
	\begin{equation}
	\phi(a)\geq 0,\quad \phi'(a)\geq 0,\quad \phi(b)=0,\quad \phi'(b)\leq 0.
	\end{equation}
	Let $g(y) \equiv w''(y) - \beta w'(y)$ where $\beta \equiv \tfrac{\max_{y\geq 0}|w'''(y)|}{w''(a)}$ is well-defined and positive. Then $g>0$ for all $y\geq 0$, $g'(a)\leq 0$, and moreover $(\mathscr{L}_{y_0}-\lambda) g = \mathscr{L}_{y_0} w'' - \lambda g = -(w')^2 -\lambda g < 0$. Integrating by parts we obtain the contradiction
	\begin{equation}
	0 < \int_a^b g (\mathscr{L}_{y_0}-\lambda)\phi dy - \int_a^b \phi (\mathscr{L}_{y_0} - \lambda) g dy = g(b)\phi'(b) - g(a)\phi'(a) - g'(b)\phi(b) + g'(a)\phi(a) \leq 0.
	\end{equation}
\end{proof}

\begin{proof}[Proof [Theorem \ref{thm:large_shift_real_stable}]] Suppose that $\lambda \geq 0$ is an eigenvalue of \eqref{eq:rigorous_nlep} so that by Lemma \ref{lem:principal_eig_Ly0} the operator $\mathscr{L}_{y_0}-\lambda$ is invertible and from \eqref{eq:rigorous_nlep} we calculate
	\begin{equation*}
	\phi = \mu\frac{\int_0^\infty w\phi}{\int_0^\infty w^2} (\mathscr{L}_{y_0}-\lambda)^{-1} w^2.
	\end{equation*}
	But $w^2>0$ so by Lemma \ref{lem:sign_lemma} we obtain the contradiction
	\begin{equation}
	1 = \mu\frac{\int_0^\infty w (\mathscr{L}_{y_0} - \lambda)^{-1} w^2}{\int_0^\infty w^2} < 0.
	\end{equation}
\end{proof}

From Theorem \ref{thm:large_shift_real_stable} we immediately deduce that the NLEP does not admit a zero eigenvalue for any $\mu\geq 0$ when $y_0>y_{0c}$. On the other hand, when $0\leq y_0\leq y_{0c}$ we suspect that the NLEP admits a zero eigenvalue for an appropriate choice of $\mu\geq 0$. When $y_0 = y_{0c}$ this is the case for $\mu=0$. When $0\leq y_0<y_{0c}$ we set $\lambda=0$ in \eqref{eq:rigorous_nlep} and obtain
\begin{equation}\label{eq:muc_temp1}
\mathscr{L}_{y_0} \phi  = \mu\frac{\int_0^\infty w\phi}{\int_0^\infty w^2} w^2.
\end{equation}
Using \eqref{eq:L_y0_inverses} we calculate $\phi = \mathscr{L}_{y_0}^{-1}w^2$ and substitute back into \eqref{eq:muc_temp1} to deduce that $\lambda=0$ is an eigenvalue if and only if
\begin{equation}\label{eq:muc_def}
\mu = \mu_c(y_0) \equiv \frac{\int_0^\infty w^2}{\int_0^\infty w\mathscr{L}_{y_0}^{-1} w^2} = \frac{\int_0^\infty w^2}{\int_0^\infty w^2 + \frac{w'(0)w(0)^2}{2w''(0)}} .
\end{equation}
Note that $\mu_c(y_0)\lessgtr 0$ if $y_0 \gtrless y_{0c}$. In terms of the critical value $\mu_c$ we have the following instability result for the small-shift case.

\begin{theorem}\label{thm:unstbale_small_shift}
	Let $0\leq y_0 < y_{0c}$ and $0\leq \mu < \mu_c$ where critical value $\mu_c$ is defined in \eqref{eq:muc_def}. Then the NLEP \eqref{eq:rigorous_nlep} admits a positive real eigenvalue.
\end{theorem}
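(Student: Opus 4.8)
The plan is to reduce the search for a positive eigenvalue to finding a root of a scalar characteristic function, and then to locate that root by the intermediate value theorem on the interval $(0,\Lambda_0)$, where $\Lambda_0$ denotes the principal eigenvalue of $\mathscr{L}_{y_0}$. By Lemma \ref{lem:principal_eig_Ly0} the hypothesis $y_0<y_{0c}$ guarantees $\Lambda_0>0$, while by Lemma \ref{lem:second_eig_Ly0} every other eigenvalue is negative; hence $\mathscr{L}_{y_0}-\lambda$ is invertible for each $\lambda\in[0,\Lambda_0)$. The case $\mu=0$ is immediate, since then \eqref{eq:rigorous_nlep} reduces to $\mathscr{L}_{y_0}\phi=\lambda\phi$, whose principal eigenvalue $\Lambda_0>0$ is already a positive real eigenvalue. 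It therefore remains to treat the range $0<\mu<\mu_c$, which is nonempty because $\mu_c>0$ when $y_0<y_{0c}$.

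For $\lambda\in[0,\Lambda_0)$ I would introduce the characteristic function
\[
g(\lambda) \equiv \mu\,\frac{\int_0^\infty w\,(\mathscr{L}_{y_0}-\lambda)^{-1}w^2\,dy}{\int_0^\infty w^2\,dy},
\]
and set $\phi_\lambda\equiv(\mathscr{L}_{y_0}-\lambda)^{-1}w^2\neq 0$, understood with the Neumann condition $\phi_\lambda'(0)=0$ and decay at infinity built into the resolvent. A direct substitution shows that whenever $g(\lambda^*)=1$ for some $\lambda^*\in(0,\Lambda_0)$, the function $\phi_{\lambda^*}$ solves \eqref{eq:rigorous_nlep} with eigenvalue $\lambda^*$, since
\[
\mathscr{L}_{y_0}\phi_{\lambda^*} - \mu\frac{\int_0^\infty w\phi_{\lambda^*}}{\int_0^\infty w^2}w^2 - \lambda^*\phi_{\lambda^*} = w^2\bigl(1-g(\lambda^*)\bigr)=0.
\]
Thus it suffices to produce a root of $g(\lambda)=1$ in $(0,\Lambda_0)$, which bypasses any concern over whether $\int_0^\infty w\phi\neq 0$.

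The final step is to evaluate $g$ at the two endpoints. At $\lambda=0$, the integral identity in \eqref{eq:L_y0_integrals} together with the definition of $\mu_c$ in \eqref{eq:muc_def} gives $g(0)=\mu/\mu_c$, which is strictly less than $1$ by the assumption $\mu<\mu_c$. At the other endpoint I would show $g(\lambda)\to+\infty$ as $\lambda\to\Lambda_0^-$ by isolating the contribution of the simple principal eigenvalue in the spectral decomposition of the resolvent: with $\Phi_0$ the $L^2$-normalized principal eigenfunction,
\[
(\mathscr{L}_{y_0}-\lambda)^{-1}w^2 = \frac{\int_0^\infty \Phi_0 w^2\,dy}{\Lambda_0-\lambda}\,\Phi_0 + O(1),\qquad \lambda\to\Lambda_0^-,
\]
so that $g(\lambda)=\dfrac{\mu}{\int_0^\infty w^2\,dy}\cdot\dfrac{\bigl(\int_0^\infty w\Phi_0\,dy\bigr)\bigl(\int_0^\infty \Phi_0 w^2\,dy\bigr)}{\Lambda_0-\lambda}+O(1)$. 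The crucial point, and the main technical obstacle, is the sign of this residue: because Lemma \ref{lem:principal_eig_Ly0} guarantees $\Phi_0$ is of one sign and $w>0$, both factors $\int_0^\infty w\Phi_0\,dy$ and $\int_0^\infty \Phi_0 w^2\,dy$ are strictly positive, whence $g(\lambda)\to+\infty$. Since $\mathscr{L}_{y_0}-\lambda$ has no spectrum on $[0,\Lambda_0)$, the function $g$ is continuous there; combined with $g(0)<1$ and $g(\lambda)\to+\infty$ as $\lambda\to\Lambda_0^-$, the intermediate value theorem yields some $\lambda^*\in(0,\Lambda_0)$ with $g(\lambda^*)=1$, which by the preceding paragraph is a positive real eigenvalue of the NLEP.
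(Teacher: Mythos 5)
Your proposal is correct and follows essentially the same route as the paper: both reduce the NLEP to a scalar characteristic equation in $\lambda$ (your $g(\lambda)=1$ is an affine rescaling of the paper's $h(\lambda)=0$), show the value at $\lambda=0$ lies on the correct side of the threshold precisely because $\mu<\mu_c$, establish blow-up to $+\infty$ as $\lambda\to\Lambda_0^-$ by splitting off the principal-eigenfunction component whose residue $\bigl(\int_0^\infty w\Phi_0\bigr)\bigl(\int_0^\infty w^2\Phi_0\bigr)$ is positive since $\Phi_0$ is of one sign, and conclude by the intermediate value theorem. Your minor refinements --- treating $\mu=0$ separately and verifying directly that a root of $g(\lambda^*)=1$ makes $(\mathscr{L}_{y_0}-\lambda^*)^{-1}w^2$ an eigenfunction, thereby sidestepping the assumption $\int_0^\infty w\phi\neq 0$ --- are sound, and your asserted $O(1)$ bound on the remainder is exactly what the paper proves explicitly via the variational estimate using $\Lambda_1<0$.
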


\begin{proof}
	Let $\Lambda_0$ be the principal eigenvalue of $\mathscr{L}_{y_0}$. First note that by Lemma \ref{lem:principal_eig_Ly0} and \ref{lem:second_eig_Ly0} the principal and second eigenvalues of $\mathscr{L}_{y_0}$ satisfy $\Lambda_1 < 0 < \Lambda_0$. Moreover the corresponding eigenfunction $\Phi_0$ is of one sign and we may assume that $\Phi_0>0$ and $\int_0^\infty \Phi_0^2 = 1$. Observe that if $\lambda_0\neq \Lambda_0$ is a positive eigenvalue of the NLEP \eqref{eq:rigorous_nlep} then
	\begin{equation*}
	\phi = \mu\frac{\int_0^\infty w\phi}{\int_0^\infty w^2} (\mathscr{L}_{y_0} - \lambda_0)^{-1} w^2,
	\end{equation*}
	and since $\int_0^\infty w\phi\neq 0$ the above equation is equivalent to $h(\lambda_0) = 0$ where
	\begin{equation}
	h(\lambda) \equiv \int_0^\infty w (\mathscr{L}_{y_0}-\lambda)^{-1} w^2 - \frac{\int_0^\infty w^2}{\mu}.
	\end{equation}
	We now show that such a $\lambda_0$ can always be found in $0<\lambda_0<\Lambda_0$ for $0\leq \mu <\mu_c$.
	
	First we calculate $h(0) = \int_0^\infty w^2 (\mu_{c}^{-1} - \mu^{-1}) < 0$. Next we we let $\psi$ be the unique solution to
	\begin{equation*}
	(\mathscr{L}_{y_0} - \lambda)\psi = w^2,\quad 0<y<\infty;\qquad \psi'(0) = 0.
	\end{equation*}
	Decomposing $\psi = c_0\Phi_0 + \psi^\perp$ where $\int_0^\infty\Phi_0\psi^\perp = 0$ we find that $\psi^\perp$ satisfies
	\begin{equation}\label{eq:psi^perp_eq}
	(\mathscr{L}_{y_0}-\lambda)\psi^\perp = w^2 - c_0(\Lambda_0-\lambda)\Phi_0,\quad 0<y<\infty;\qquad (\psi^\perp)'(0) = 0.
	\end{equation}
	Multiplying by $\Phi_0$ and integrating by parts we obtain $c_0 = (\Lambda_0-\lambda)^{-1}\int_0^\infty w^2\Phi_0$ and therefore
	\begin{equation}\label{eq:temp_h_expression}
	h(\lambda) = \frac{\int_0^\infty w^2 \Phi_0\int_0^\infty w\Phi_0}{\Lambda_0 - \lambda} + \int_0^\infty w\psi^\perp - \frac{\int_0^\infty w^2}{\mu}.
	\end{equation}
	On the other hand, if we multiply \eqref{eq:psi^perp_eq} by $\psi^\perp$ and integrate then we obtain
	\begin{equation}\label{eq:temp_psi^perp_integral}
	-\int_0^\infty |\psi^\perp|^2 \biggl(\lambda + \frac{-\int_0^\infty\psi^\perp\mathscr{L}_{y_0}\psi^\perp}{\int_0^\infty |\psi^\perp|^2}\biggr) = \int_0^\infty w^2 \psi^\perp.
	\end{equation}
	By Lemma \ref{lem:second_eig_Ly0} and the variational characterization of the second eigenvalue of $\mathscr{L}_{y_0}$ we obtain
	\begin{equation*}
	0 < -\Lambda_1 = \inf_{\substack{\Phi\in H^2([0,\infty)) \\ \int_0^\infty\Phi\Phi_0=0}} \frac{\int_0^\infty |\Phi'|^2 + |\Phi|^2 - 2 w|\Phi|^2}{|\Phi|^2} \leq \frac{-\int_0^\infty \psi^\perp \mathscr{L}_{y_0}\psi^\perp}{\int_0^\infty |\psi^\perp|^2}.
	\end{equation*}
	Substituting into \eqref{eq:temp_psi^perp_integral} we calculate $||\psi^\perp||_{L^2([0,\infty))}^2 \leq \lambda^{-1}||w^2||_{L^2([0,\infty))} ||\psi^\perp||_{L^2([0,\infty))}$ so that $||\psi^\perp||_{L^2([0,\infty))}$ and hence also $\int_0^\infty w \psi^\perp$ are bounded as $\lambda\rightarrow\Lambda_0 > 0$. Therefore, from \eqref{eq:temp_h_expression} we deduce $h(\lambda)\rightarrow +\infty$ as $\lambda\rightarrow\Lambda_0^-$. By a continuity argument we deduce the existence of a $\lambda_0\in(0,\Lambda_1)$ such that $h(\lambda_0)=0$.
\end{proof}

We conclude this section by establishing sufficient conditions for the stability of the NLEP \eqref{eq:rigorous_nlep} in both the small- and large-shift cases. Suppose that $\phi = \phi_R + i\phi_I$ and $\lambda = \lambda_R + i\lambda_I$ satisfies the NLEP. Separating real and imaginary parts in \eqref{eq:rigorous_nlep} then yields the system
\begin{subequations}
	\begin{align}
	& \mathscr{L}_{y_0}\phi_R - \mu\frac{\int_0^\infty w\phi_R}{\int_0^\infty w^2}w^2 = \lambda_R\phi_R - \lambda_I\phi_I,\qquad \mathscr{L}_{y_0}\phi_I - \mu\frac{\int_0^\infty w\phi_I}{\int_0^\infty w^2}w^2 = \lambda_R\phi_I + \lambda_I\phi_R.
	\end{align}
\end{subequations}
Multiplying the first and second equations by $\phi_R$ and $\phi_I$ respectively, integrating, and then adding them together gives
\begin{equation}
\lambda_R \int_0^\infty|\phi|^2 = -L_1(\phi_R,\phi_R) - L_1(\phi_I,\phi_I),
\end{equation}
where we define
\begin{equation}\label{eq:def_L1}
L_1(\Phi,\Phi) \equiv \int_0^\infty |\Phi'|^2 + \Phi^2 - 2w\Phi^2 + \mu\frac{\int_0^\infty w\Phi\int_0^\infty w^2\Phi}{\int_0^\infty w^2}.
\end{equation}
It is clear that if $L_1(\Phi,\Phi) > 0$ for all $\Phi\in H^2([0,\infty))$ then the NLEP \eqref{eq:rigorous_nlep} will be linearly stable. In the next theorem we determine sufficient conditions on $\mu\geq 0$ and $y_0\geq 0$ for which the NLEP is linearly stable.

\begin{figure}
	\centering
	\centering
	\begin{subfigure}{0.33\textwidth}
		\centering
		\includegraphics[scale=0.65]{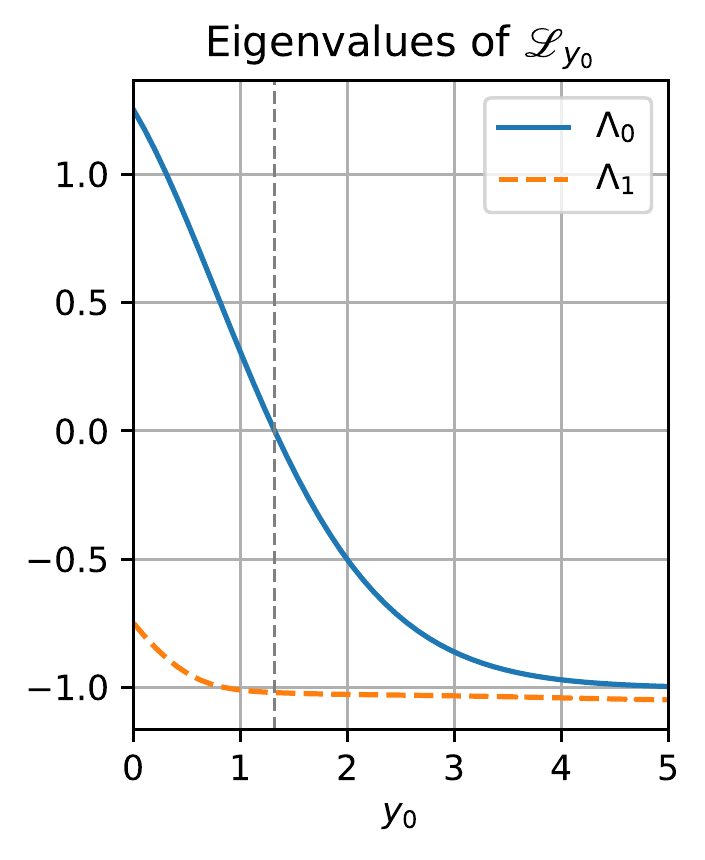}
		\caption{}\label{fig:L0_eigenvalues}
	\end{subfigure}%
	\begin{subfigure}{0.33\textwidth}
		\centering
		\includegraphics[scale=0.65]{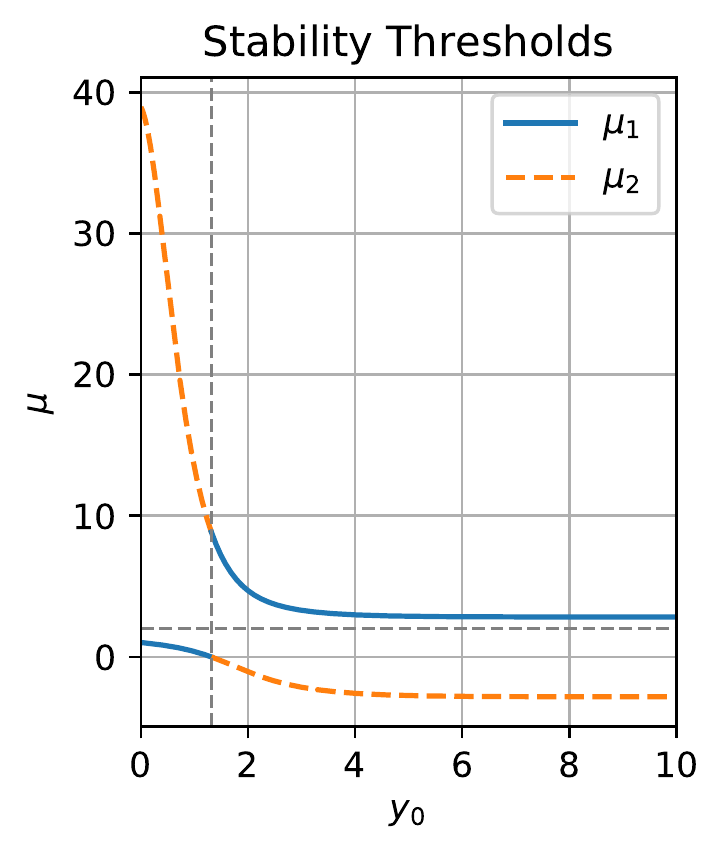}
		\caption{}\label{fig:mu1_mu2}
	\end{subfigure}%
	\begin{subfigure}{0.33\textwidth}
		\centering
		\includegraphics[scale=0.65]{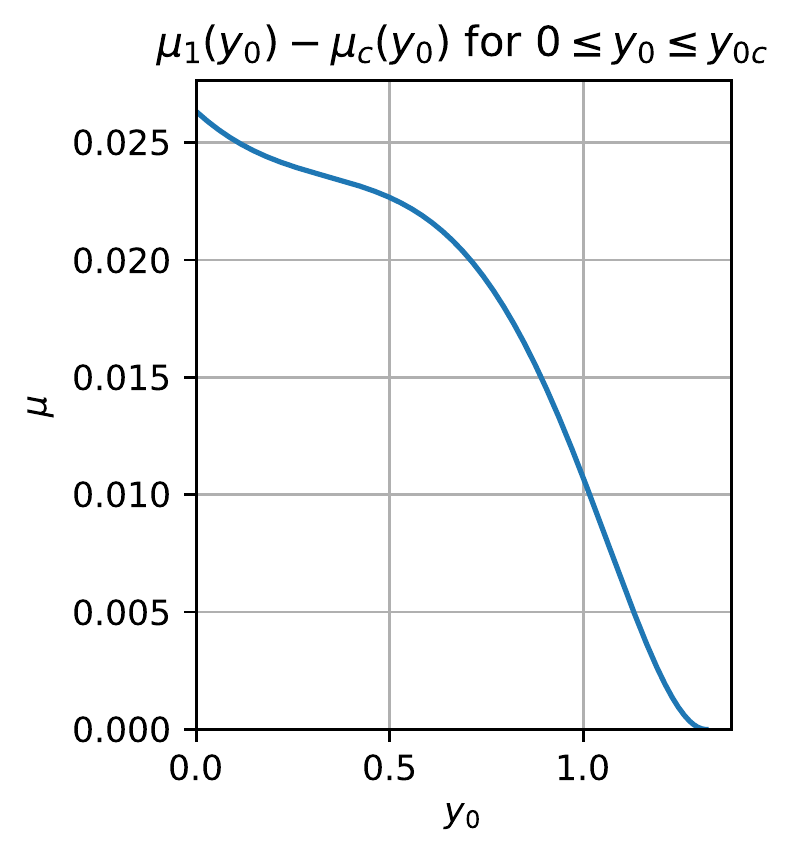}
		\caption{}\label{fig:mu1_minus_mu2}
	\end{subfigure}%
	\caption{(A) Plot of the numerically computed principal and second eigenvalues of the operator $\mathscr{L}_{y_0}$. The dashed vertical line corresponds to $y_0=y_{0c}$. (B) Plot of the stability thresholds $\mu_1$ and $\mu_2$ as functions of $y_0$. The dashed vertical and horizontal lines correspond to $y_0=y_{0c}$ and $\mu=2$ respectively. The NLEP \eqref{eq:rigorous_nlep} has been rigorously demonstrated to be stable in the region bounded by the curves $\mu_1$ and $\mu_2$. Note that $\mu_1$ and $\mu_2$ are interchanged as $y_0$ passes through $y_{0c}$. (C) Plot of $\mu_1(y_0)-\mu_c(y_0)$ for $0\leq y_0 < y_{0c}$. The NLEP is unstable for $\mu<\mu_c$ and stable for $\mu_1<\mu<\mu_2$ when $0\leq y_0<y_{0c}$. It is conjectured that the NLEP is stable for $\mu>\mu_c$.}\label{fig:rigorous}
\end{figure}

\begin{theorem}\label{thm:stability_theorem}
	If $0\leq y_0 < y_{0c}$ and $\mu_1(y_0)<\mu<\mu_2(y_0)$, or $y_0>y_{0c}$ and $0\leq \mu < \mu_1(y_0)$ where
	\begin{align}\label{eq:mu1_mu2_def}
	& \mu_1(y_0) \equiv \frac{2\int_0^\infty w^2}{\int_0^\infty w \mathscr{L}_{y_0}^{-1} w^2 + \sqrt{\int_0^\infty w \mathscr{L}_{y_0}^{-1} w\int_0^\infty w^2 \mathscr{L}_{y_0}^{-1} w^2}},\\
	& \mu_2(y_0) \equiv \frac{2\int_0^\infty w^2}{\int_0^\infty w \mathscr{L}_{y_0}^{-1} w^2 - \sqrt{\int_0^\infty w \mathscr{L}_{y_0}^{-1} w\int_0^\infty w^2 \mathscr{L}_{y_0}^{-1} w^2}},
	\end{align}
	then Re$\lambda<0$ for all eigenvalues of the NLEP \eqref{eq:rigorous_nlep}.
\end{theorem}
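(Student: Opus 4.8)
The plan is to prove the theorem through the sufficient condition isolated just before the statement: it suffices to show that the quadratic form $L_1$ in \eqref{eq:def_L1} is positive definite on $H^2([0,\infty))$. Indeed, the identity $\lambda_R\int_0^\infty|\phi|^2 = -L_1(\phi_R,\phi_R)-L_1(\phi_I,\phi_I)$ shows that if $L_1>0$ for all nonzero $\Phi$ then no eigenvalue can have $\mathrm{Re}\,\lambda\ge 0$. Writing $L_1(\Phi,\Phi)=-\int_0^\infty\Phi\,\mathscr{L}_{y_0}\Phi+\tfrac{\mu}{S}\,ab$ with $S\equiv\int_0^\infty w^2$, $a\equiv\int_0^\infty w\Phi$, and $b\equiv\int_0^\infty w^2\Phi$, the key observation is that the nonlocal term depends on $\Phi$ only through the pair $(a,b)$. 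I would therefore first minimize the local part $-\int_0^\infty\Phi\,\mathscr{L}_{y_0}\Phi$ over all $\Phi$ with $(a,b)$ held fixed, thereby reducing positivity of $L_1$ on the infinite-dimensional space to positivity of a single $2\times 2$ quadratic form $F(a,b)$.

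Carrying out this reduction, the Lagrange condition for the constrained minimum is the Euler--Lagrange equation $\mathscr{L}_{y_0}\Phi=\alpha w+\beta w^2$, so that (using invertibility of $\mathscr{L}_{y_0}$ for $y_0\ne y_{0c}$, guaranteed by Lemmas \ref{lem:principal_eig_Ly0}--\ref{lem:second_eig_Ly0}) the minimizer is $\Phi=\mathscr{L}_{y_0}^{-1}(\alpha w+\beta w^2)$ and the minimal value equals $-(a,b)\,T^{-1}(a,b)^T$, where $T$ is the symmetric matrix with entries $P\equiv\int_0^\infty w\,\mathscr{L}_{y_0}^{-1}w$, $Q\equiv\int_0^\infty w\,\mathscr{L}_{y_0}^{-1}w^2$, and $R\equiv\int_0^\infty w^2\,\mathscr{L}_{y_0}^{-1}w^2$ --- precisely the quantities evaluated in closed form in \eqref{eq:L_y0_inverses}--\eqref{eq:L_y0_integrals}. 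Hence $F(a,b)=-(a,b)\,T^{-1}(a,b)^T+\tfrac{\mu}{S}ab$, and a direct computation of the determinant and $(1,1)$-entry of the symmetric matrix representing $F$ shows that $F$ is positive definite precisely when $\mu$ lies strictly between $\mu_1(y_0)$ and $\mu_2(y_0)$. The thresholds emerge as the two roots in $\mu$ of the vanishing of this determinant, equivalently as $2S$ divided by the eigenvalues $Q\pm\sqrt{PR}$ of $\bigl(\begin{smallmatrix}Q&P\\R&Q\end{smallmatrix}\bigr)$, reproducing \eqref{eq:mu1_mu2_def}; the signs of $P,Q,R$ read off from \eqref{eq:L_y0_integrals} (governed by $w''(0)=w_c''(y_0)$) then convert the abstract interval into the two stated regimes, with $\mu_2<0<\mu_1$ in the large-shift case so that the admissible range $\mu\ge 0$ becomes $[0,\mu_1)$.

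The main obstacle is justifying that the constrained minimization is bounded below, i.e.\ that $-\mathscr{L}_{y_0}$ restricted to $V\equiv\{w,w^2\}^\perp$ is positive definite, so that $F$ genuinely captures the infimum of $L_1$. In the large-shift case this is immediate, since $\mathscr{L}_{y_0}$ has entirely negative spectrum by Lemmas \ref{lem:principal_eig_Ly0}--\ref{lem:second_eig_Ly0} and hence $-\mathscr{L}_{y_0}$ is positive definite outright. In the small-shift case, however, Lemmas \ref{lem:principal_eig_Ly0}--\ref{lem:second_eig_Ly0} give exactly one positive eigenvalue $\Lambda_0>0>\Lambda_1$, so $-\mathscr{L}_{y_0}$ is indefinite and coercivity on $V$ is not automatic. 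Here I would invoke the constrained-index identity $n_-\!\left(-\mathscr{L}_{y_0}\big|_V\right)=n_-\!\left(-\mathscr{L}_{y_0}\right)-n_-(M)$, where $M$ has entries $M_{ij}=-\int_0^\infty w^{(i)}\mathscr{L}_{y_0}^{-1}w^{(j)}$ with $w^{(1)}=w,\ w^{(2)}=w^2$, so that $M=-T$: since $n_-(-\mathscr{L}_{y_0})=1$, the identity simultaneously forces $n_-(M)=1$ (equivalently $\det T<0$, i.e.\ $\mu_1<\mu_2$) and yields $n_-(-\mathscr{L}_{y_0}|_V)=0$, which is exactly the required coercivity. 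Combining the two ingredients --- $F$ positive definite together with $-\mathscr{L}_{y_0}|_V$ positive definite (the latter also handling the residual case $(a,b)=(0,0)$) --- gives $L_1(\Phi,\Phi)>0$ for every $\Phi\ne 0$, whence $\mathrm{Re}\,\lambda<0$ for all eigenvalues of \eqref{eq:rigorous_nlep} in the stated parameter ranges.
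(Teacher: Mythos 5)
Your overall strategy is genuinely different from the paper's. The paper works with the same symmetrized form $L_1$ from \eqref{eq:def_L1}, but proves its positivity by continuation: it starts from a parameter point where positivity is known ($\mu=2$, $y_0=0$ via Lemma 5.1(2) of \cite{wei_1999} in the small-shift case; $\mu=0$ in the large-shift case), and argues that positivity can only be lost through a zero eigenvalue of $L_1$, which forces $\Phi=c_0\mathscr{L}_{y_0}^{-1}w^2+c_1\mathscr{L}_{y_0}^{-1}w$ and hence the vanishing of the $2\times2$ determinant \eqref{eq:temp_solvability}; the window $\mu_1<\mu<\mu_2$ is precisely where that determinant expression stays negative. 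You instead prove positivity directly: writing $\Phi=\Phi^*+v$ with $\Phi^*\in\mathrm{span}\{\mathscr{L}_{y_0}^{-1}w,\mathscr{L}_{y_0}^{-1}w^2\}$ matching the moments $(a,b)$ and $v\in V=\{w,w^2\}^{\perp}$, the cross terms vanish and $L_1(\Phi,\Phi)=F(a,b)-\int_0^\infty v\,\mathscr{L}_{y_0}v$, reducing everything to positive definiteness of the $2\times2$ form $F$ (your analysis of $F$ is correct and reproduces $\mu_1,\mu_2$, including $\mu_2<0\le\mu<\mu_1$ in the large-shift case) plus coercivity of $-\mathscr{L}_{y_0}$ on $V$. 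This route avoids both the appeal to \cite{wei_1999} and the connectedness/continuation argument, and the large-shift case is indeed immediate since $-\mathscr{L}_{y_0}>0$ there.

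However, there is a genuine gap exactly at the crux of the small-shift case. The constrained-index identity reads $n_-\bigl(-\mathscr{L}_{y_0}|_V\bigr)=n_-\bigl(-\mathscr{L}_{y_0}\bigr)-n_-(M)-z(M)$ (you dropped the kernel term $z(M)$; also the identity itself is standard but appears nowhere in the paper, so it needs a citation or proof). More importantly, it is one equation in two unknowns: with $n_-(-\mathscr{L}_{y_0})=1$ it only gives $n_-(M)+z(M)\le1$, and it is equally consistent with $n_-(M)=0$ together with $n_-\bigl(-\mathscr{L}_{y_0}|_V\bigr)=1$, i.e.\ with exactly the failure of coercivity you are trying to rule out. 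So the identity cannot ``simultaneously force'' $n_-(M)=1$ and $n_-\bigl(-\mathscr{L}_{y_0}|_V\bigr)=0$; as written, that deduction is circular. The repair is cheap but must be supplied: exhibit one negative direction of $M$ independently. From \eqref{eq:L_y0_integrals}, for $0\le y_0<y_{0c}$ one has $w'(0)\le0$ and $w''(0)<0$, hence $P\equiv\int_0^\infty w\,\mathscr{L}_{y_0}^{-1}w=\tfrac{3}{4}\int_0^\infty w^2+\tfrac{3w'(0)w(0)^2}{4w''(0)}\ge\tfrac{3}{4}\int_0^\infty w^2>0$, so $M_{11}=-P<0$ and $n_-(M)\ge1$; combined with the identity this pins down $n_-(M)=1$, $z(M)=0$, and $n_-\bigl(-\mathscr{L}_{y_0}|_V\bigr)=0$, and as a byproduct $\det T<0$, so the window $(\mu_1,\mu_2)$ is genuinely nonempty. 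With that one line added (and the index identity properly referenced), your argument is complete.
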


\begin{proof}
	We first prove the result for $0\leq y_0 < y_{0c}$. When $\mu=2$ and $y_0=0$, Lemma 5.1 (2) in \cite{wei_1999} implies that $L_1(\Phi,\Phi)>0$ for all $\Phi\in H^2((\infty,\infty))$ and hence, by restricting to even functions, also for all $\Phi\in H^2([0,\infty))$. In particular, by the variational characterization of the principal eigenvalue, this implies that the principal eigenvalue of the self-adjoint operator
	\begin{equation}
	L_1\Phi \equiv \mathscr{L}_{y_0}\Phi - \frac{\mu}{2}\frac{\int_0^\infty w\Phi}{\int_0^\infty w^2} w^2  - \frac{\mu}{2}\frac{\int_0^\infty w^2 \Phi}{\int_0^\infty w^2} w,
	\end{equation}
	must be negative. We then perturb $y_0\geq 0$ and $\mu$ until $L_1$ has a zero eigenvalue and for which we may solve
	\begin{equation}
	\Phi = c_0\mathscr{L}_{y_0}^{-1}w^2 + c_1\mathscr{L}_{y_0}^{-1}w.
	\end{equation}
	Substituting back into $L_1\Phi = 0$ we obtain the system
	\begin{equation}
	\biggl(\frac{\mu}{2}\frac{\int_0^\infty w \mathscr{L}_{y_0}^{-1}w^2}{\int_0^\infty w^2} - 1\biggr) c_0 + \frac{\mu}{2}\frac{\int_0^\infty w \mathscr{L}_{y_0}^{-1}w}{\int_0^\infty w^2} c_1 = 0,\qquad \frac{\mu}{2}\frac{\int_0^\infty w^2 \mathscr{L}_{y_0}^{-1}w^2}{\int_0^\infty w^2} c_0 + \biggl(\frac{\mu}{2}\frac{\int_0^\infty w^2 \mathscr{L}_{y_0}^{-1}w}{\int_0^\infty w^2} - 1\biggr) c_1 = 0.
	\end{equation}
	Since $\int_0^\infty w^2\mathscr{L}_{y_0}^{-1} w = \int_0^\infty w \mathscr{L}_{y_0}^{-1} w^2$ a nontrivial solutions exists if and only if
	\begin{equation}\label{eq:temp_solvability}
	\biggl(\frac{\mu}{2}\frac{\int_0^\infty w\mathscr{L}_{y_0}^{-1} w^2}{\int_0^\infty w^2} - 1\biggr)^2 - \frac{\mu^2}{4} \frac{\int_0^\infty w\mathscr{L}_{y_0}^{-1}w \int_0^\infty w^2 \mathscr{L}_{y_0}^{-1} w^2}{\bigl(\int_0^\infty w^2\bigr)^2 } = 0,
	\end{equation}
	where explicit formulae for each integral can be found in \eqref{eq:L_y0_integrals}. When $\mu=2$ and $y_0 = 0$ the left hand side of \eqref{eq:temp_solvability} equals $ -9/10 < 0$ and therefore we have stability for $0\leq y_0<y_{0c}$ if
	\begin{equation}
	\biggl(\frac{\mu}{2}\frac{\int_0^\infty w \mathscr{L}_{y_0}^{-1} w^2}{\int_0^\infty w^2} - 1\biggr)^2 - \frac{\mu^2}{4} \frac{\int_0^\infty w \mathscr{L}_{y_0}^{-1} w\int_0^\infty w^2 \mathscr{L}_{y_0}^{-1} w^2}{\bigl(\int_0^\infty w^2\bigr)^2 }  < 0,
	\end{equation}
	which is easily seen to be equivalent to $\mu_1<\mu<\mu_2$.
	
	The thresholds $\mu_1$ and $\mu_2$ are singular as $y_0\rightarrow y_{0c}$ and therefore the continuity argument from above does not extend to $y_0>y_{0c}$. However $L_1(\Phi,\Phi)>0$ by Lemma \ref{lem:principal_eig_Ly0} if $y_0>y_{0c}$ and $\mu=0$. Therefore we proceed with the same continuity argument as above, but starting from $\mu=0$. This yields the same criteria, but since $\mathscr{L}_{y_0}^{-1} w^2 < 0$ by Lemma \ref{lem:sign_lemma} the sufficient condition is now $\mu_2(y_0)< 0\leq \mu < \mu_1(y_0)$.
	
\end{proof}

Both of the stability thresholds $\mu_1$ and $\mu_2$ defined in \eqref{eq:mu1_mu2_def} as well as the instability threshold $\mu_c$ defined in \eqref{eq:muc_def} are easily computed using \eqref{eq:L_y0_integrals}. In Figures \ref{fig:mu1_mu2} and \ref{fig:mu1_minus_mu2} we plot the stability thresholds and the difference $\mu_1-\mu_c$ respectively. In particular, from the plot in \ref{fig:mu1_minus_mu2} we see that $\mu_1>\mu_c$. We conjecture, that as in the $y_0=0$ case, the NLEP is stable for all $\mu>\mu_c$. In Appendix \ref{app:conjecture} we provide numerical support for this conjecture by plotting $\text{Re}\lambda_0$ versus $\mu$ and $y_0$ in Figure \ref{fig:conjecture_1}. In addition, we plot $\Lambda_0-\text{Re}(\lambda_0)$ in Figure \ref{fig:conjecture_2} which suggest that $\text{Re}(\lambda_0) \leq \Lambda_0$.

\section{Examples}\label{sec:examples}

In this section we illustrate the effect of introducing a nonzero boundary flux for the activator by considering three distinct examples. Specifically, we first study the stability of a single boundary spike concentrated at $x=0$ when $A\geq 0$ and $B=0$. Using a winding number argument we illustrate that the stability of the single spike is improved by increasing the boundary flux $A$. Moreover, we illustrate that if $A$ exceed a threshold, then the spike is stable independently of the parameters $\tau\geq 0$ and $D>0$. We then consider the structure and stability of a two-boundary-spike pattern when the boundary fluxes are equal, $A=B\geq 0$. One of the key findings is that if $A>0$ then the range of $D>0$ values for which asymmetric patterns exist is extended. Additionally, by assuming that $\tau\ll 1$ we study the stability of both symmetric and asymmetric two-boundary-spike patterns to competition (zero eigenvalue crossing) instabilities. We demonstrate that one branch of asymmetric patterns is always stable. Similarly, in our final example we consider a two-boundary-spike pattern with a one-sided boundary flux $A\geq 0$ and $B=0$. We demonstrate the existence of several asymmetric patterns, with a certain branch of these patterns always being stable. For each example we include full numerical simulations of the GM system \eqref{eq:pde_gm} using the finite element software FlexPDE \cite{flexpde}.

\subsection{Example 1: One Boundary Spike at $x=0$ with $A>0$ and $B=0$}

\begin{figure}[t!]
	\centering
	\begin{subfigure}{0.33\textwidth}
		\centering
		\includegraphics[scale=0.675]{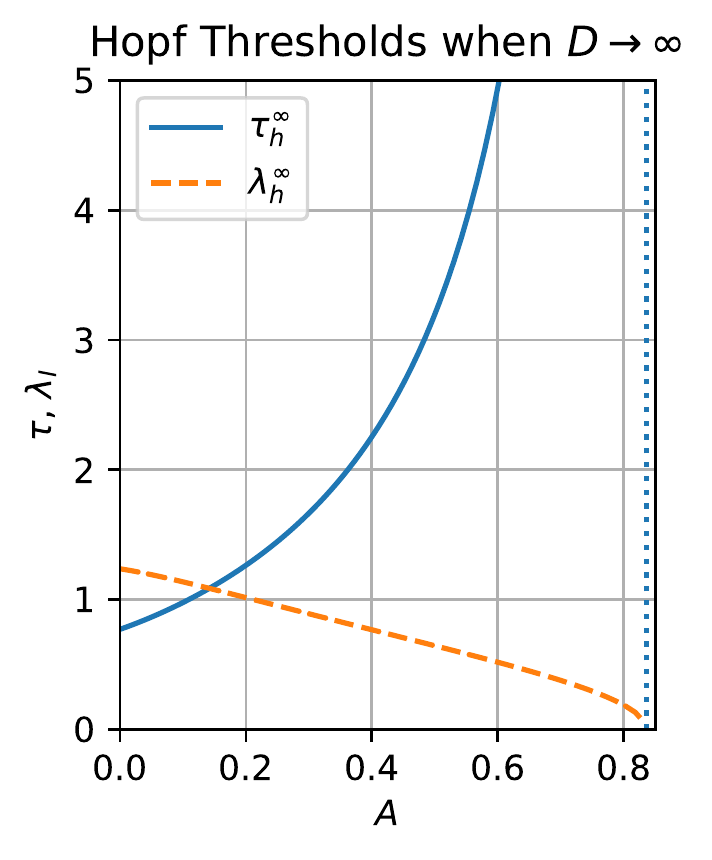}
		\caption{}\label{fig:hopf_inf}
	\end{subfigure}%
	\begin{subfigure}{0.33\textwidth}
		\centering
		\includegraphics[scale=0.675]{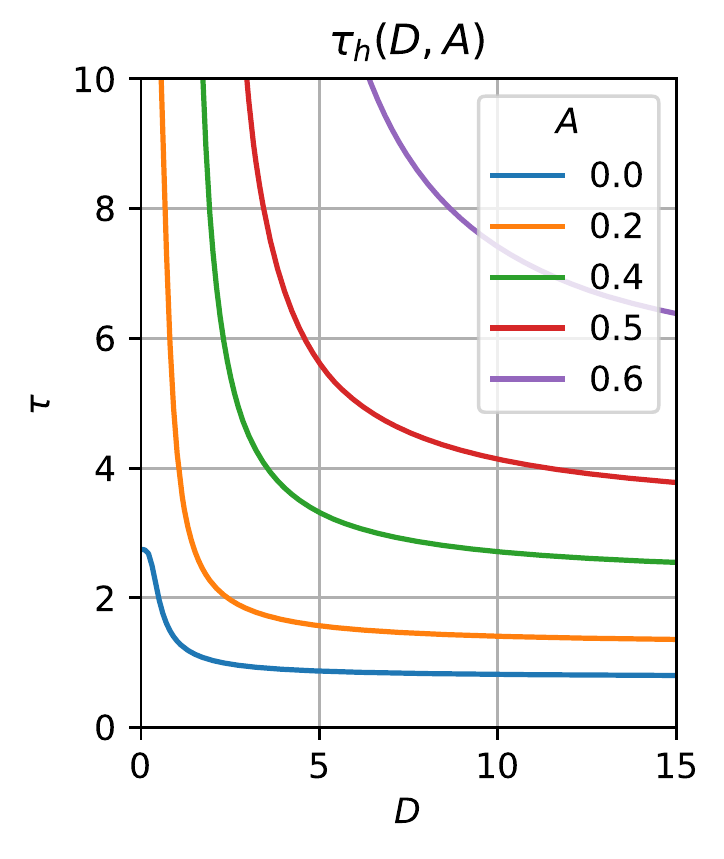}
		\caption{}\label{fig:hopf_tau}
	\end{subfigure}%
	\begin{subfigure}{0.33\textwidth}
		\centering
		\includegraphics[scale=0.675]{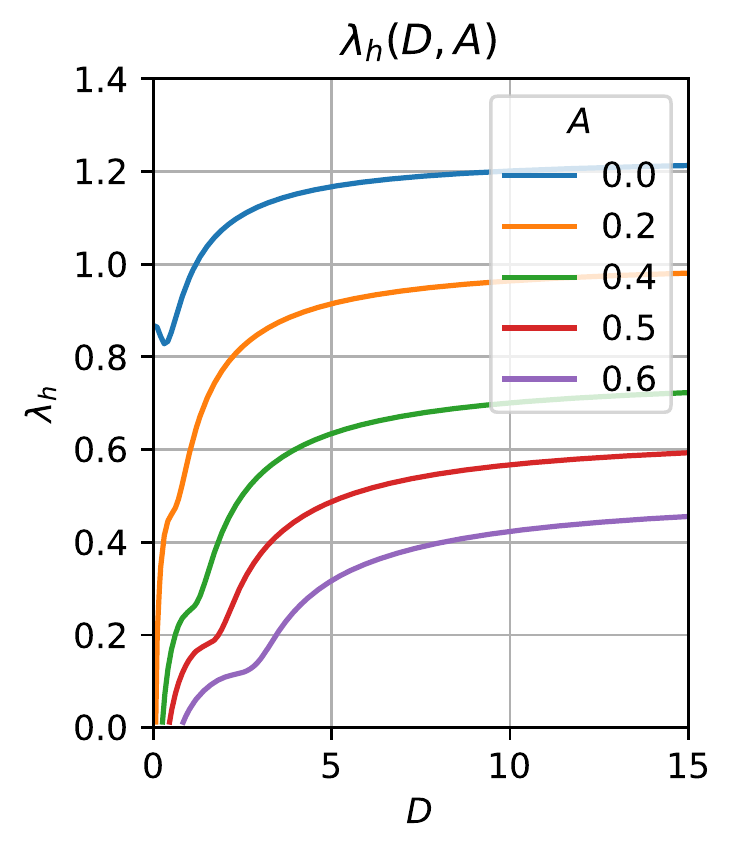}
		\caption{}\label{fig:hopf_lambda}
	\end{subfigure}%
	\caption{Hopf bifurcation threshold and accompanying eigenvalue for a single boundary-spike solution with one-sided boundary flux $A\geq 0$ in (A) the shadow limit, $D\rightarrow\infty$, and (B and C) for finite $D>0$ at select values of $0\leq A < q_{0c}$. In (A) the dashed vertical line corresponds to the threshold $A=q_{0c}$ beyond which no Hopf instabilities occur.}\label{fig:hopf_thresholds}
\end{figure}

In this example we assume that $B=0$ and investigate the role of a non-negative flux, $A\geq 0$, on the stability of a single boundary spike concentrated at $x=0$. We denote the left boundary shift parameter by $y_0 = y_L$ so that using \eqref{eq:greens_func} and \eqref{eq:gluing_parameters} we reduce  \eqref{eq:nlep} to \eqref{eq:rigorous_nlep} where
\begin{equation}
\mu(\lambda) = 2\frac{\omega_0\tanh\omega_0}{\omega_\lambda\tanh\omega_\lambda}.
\end{equation}
Recalling that $\omega_\lambda = \sqrt{(1+\tau\lambda)/D}$ we first observe that $0 < \mu(\lambda) \leq 2$ for all real-valued $\lambda\geq 0$ so that by Theorem \ref{thm:stability_theorem} and Figure \ref{fig:mu1_mu2} the NLEP has no non-negative real eigenvalues. Next we determine whether the NLEP has any unstable complex-valued eigenvalues by using a winding number argument. Assuming that $\lambda$ is not in the spectrum of $\mathscr{L}_{y_0}$ we let $\phi = (\mathscr{L}_{y_0}-\lambda)^{-1} w_c(y+y_0)^2$ so that as in \S\ref{subsec:algebraic} the NLEP reduces to the algebraic equation
\begin{equation}
\mathscr{A}_{y_0}(\lambda) \equiv \frac{1}{\mu(\lambda)} - \mathscr{F}_{y_0}(\lambda) = 0,
\end{equation}
where $\mathscr{F}_{y_0}(\lambda)$ is defined in \eqref{eq:F_y0_def}. In Figure \ref{fig:F_properties} we plot $\mathscr{F}_{y_0}(0)$ versus $y_0$, as well as the real and imaginary parts of $\mathscr{F}_{y_0}(i\lambda_I)$ versus $\lambda_I$ for select values of $y_0$. We integrate in $\lambda$ over a closed counter-clockwise contour consisting of the imaginary axis and a large semicircle in the right half-plane. Since
\begin{equation}\label{eq:F_mu_asymptotics}
\mathscr{F}_{y_0}(\lambda)=O(|\lambda|^{-1})\quad\text{and}\quad \mu(\lambda) = O(\lambda^{-1/2})\qquad \text{as}\quad |\lambda|\rightarrow\infty,\quad \text{Re}\lambda>0,
\end{equation}
the change in argument of $\mathscr{A}_{y_0}(\lambda)$ over the large semicircle is $\frac{\pi}{2}$. Moreover, in $\text{Re}\lambda>0$, $\mu(\lambda)\neq 0$ whereas by Lemmas \ref{lem:principal_eig_Ly0} and \ref{lem:second_eig_Ly0} we deduce that $\mathscr{F}_{y_0}(\lambda)$ has one (resp. zero) pole(s) if $y_0<y_{0c}$ (resp. $y_0\geq y_{0c}$). Letting $Z$ denote the number of zeros of $\mathscr{A}_{y_0}(\lambda)$ in $\text{Re}\lambda>0$ it follows from the argument principle that
\begin{equation}
Z = \frac{1}{\pi}\Delta \text{arg}\mathscr{A}_{y_0}(i\lambda_I)\bigr|_{+\infty}^0  + \begin{cases} 5/4, & y_0 < y_{0c}, \\ 1/4, & y_0\geq y_{0c},\end{cases}
\end{equation}
where the first term on the right hand side denotes the change in argument of $\mathscr{A}_{y_0}(\lambda)$ as $\lambda$ follows the imaginary axis from $\lambda=+i\infty$ to $\lambda=0$. Note in addition that we have used $\mathscr{A}_{y_0}(\bar{\lambda}) = \overline{\mathscr{A}_{y_0}(\lambda)}$ to obtain $\Delta\arg\mathscr{A}_{y_0}(i\lambda_I)\bigr|_{+\infty}^{-\infty} = 2\Delta\arg\mathscr{A}_{y_0}(i\lambda_I)\bigr|_{+\infty}^0$. From \eqref{eq:F_mu_asymptotics} we immediately deduce that $\arg\mathscr{A}_{y_0}(+i\infty) = \pi/4$. On the other hand, using \eqref{eq:L_y0_integrals}, we evaluate $\mathscr{A}_{y_0}(0) = \frac{1}{2} - \mathscr{F}_{y_0}(0)\lessgtr0$ for $y_0\lessgtr y_{0c}$ (see also Figure \ref{fig:F0}). We will consider the cases $y_0\geq y_{0c}$ and $y_0<y_{0c}$ separately below.

\begin{figure}[t!]
	\centering
	\begin{subfigure}{0.33\textwidth}
		\centering
		\includegraphics[scale=0.675]{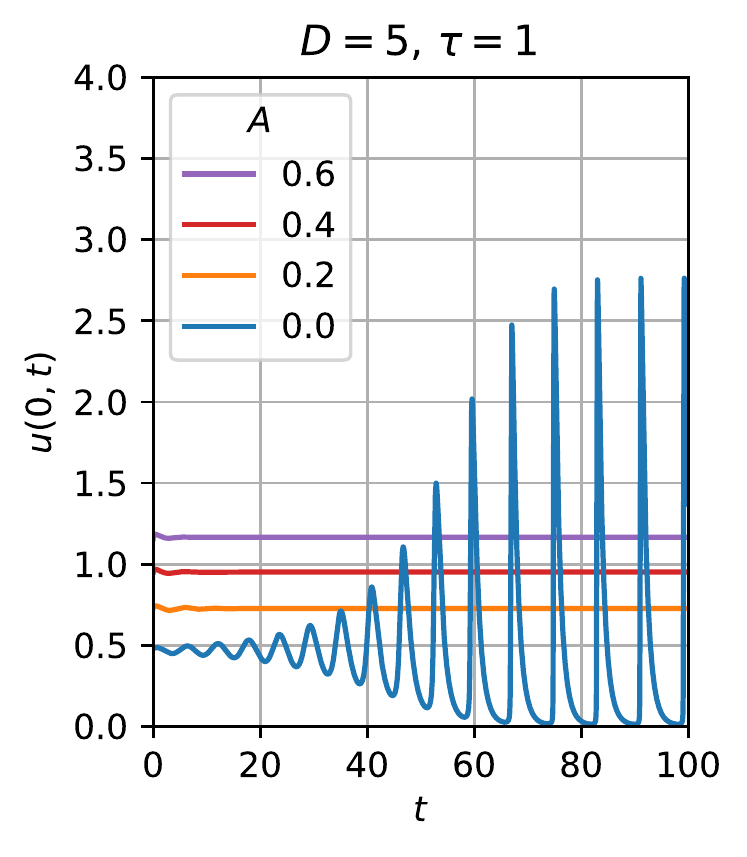}
		\caption{}\label{fig:hopf_0}
	\end{subfigure}%
	\begin{subfigure}{0.33\textwidth}
		\centering
		\includegraphics[scale=0.675]{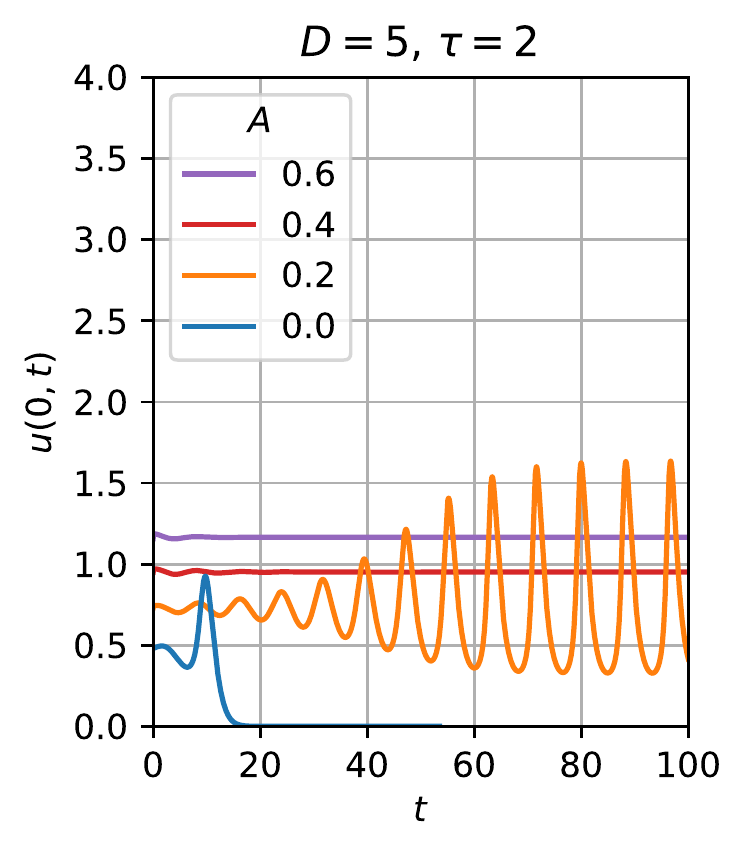}
		\caption{}\label{fig:hopf_1}
	\end{subfigure}%
	\begin{subfigure}{0.33\textwidth}
		\centering
		\includegraphics[scale=0.675]{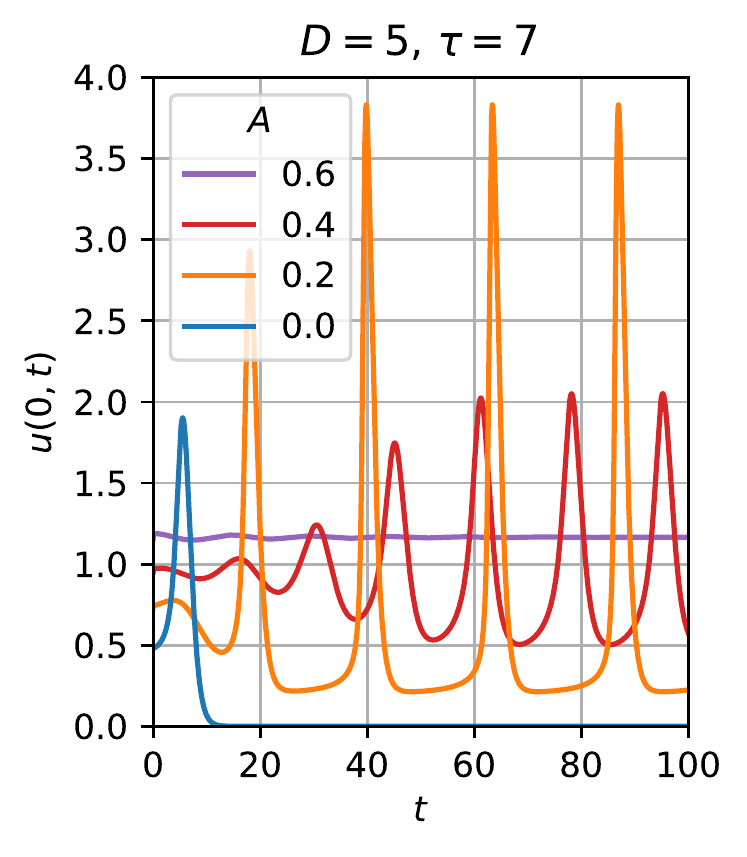}
		\caption{}\label{fig:hopf_2}
	\end{subfigure}%
	\caption{Plots of $u(0,t)$ for a one boundary-spike solution with one-sided boundary flux $x=0$ (i.e.\@ $A\geq 0$ and $B=0$) with $\varepsilon=0.005$. Note that increasing the boundary flux $A$ stabilizes the single boundary-spike solution for fixed values of $D$ and $\tau$.}\label{fig:hopf_numerical}
\end{figure}

If $y_0\geq y_{0c}$ then $\text{Re}\mathscr{F}_{y_0}(i\lambda_I) < 0$ for all $\lambda_I>0$ (see Figure \ref{fig:real_F_imag}) so $\mathscr{A}_{y_0}(i\lambda_I)$ never crosses the imaginary axis for all $\lambda_I>0$. As a result $\Delta\arg\mathscr{A}_{y_0}(i\lambda_I)|_{\infty}^{0}=-\pi/4$ and therefore $Z=0$ if $y_0>y_{0c}$. Since $y_0$ is monotone decreasing in $D$ when $A>0$, we deduce that there is a threshold $D_c(A)$ such that the single spike pattern is stable for all $\tau\geq 0$ if $D\geq D_c(A)$. Substituting $y_0=y_{0c}$ and using \eqref{eq:gluing_parameters} and \eqref{eq:y0c_def} yields the threshold parameter
\begin{equation}\label{eq:q0c_def}
q_{0c} = \frac{4+ 3\sqrt{3}}{11} \approx 0.83601,
\end{equation}
with which $D_c(A)$ is found by solving the transcendental equation
\begin{equation}
\tanh D_c^{-1/2} = q_{0c}^{-1}AD_c^{-1/2}.
\end{equation}
It is easy to see that if $A\geq q_{0c}$ then $D_c=\infty$ is the only positive solution. On the other hand, if $0<A<q_{0c}$ then this equation has a unique positive solution that is monotone increasing in $A$ and satisfies $D_c\rightarrow 0^+$ as $A\rightarrow 0^+$ and $D_c\rightarrow\infty$ as $A\rightarrow q_{0c}^-$. In summary we deduce that the single spike pattern is stable for all $D>0$ and $\tau\geq 0$ if $A\geq q_{0c}$, or for all $\tau\geq 0$ if $0<D\leq D_c(A)$ and $0<A<q_{0c}$. To determine the stability when $0\leq A < q_{0c}$ and $D>D_c(A)$ we must consider the case $0\leq y<y_{0c}$.

Next we assume that $0\leq y_{0} < y_{0c}$. We begin by considering the \textit{shadow limit}, defined by $D\rightarrow\infty$, for which $\mu(i\lambda_I)\sim 2 (1+i\tau\lambda_I)^{-1}$ and hence also
$$
\text{Re}\mathscr{A}_{y_0}(i\lambda_I)\sim \frac{1}{2} - \text{Re}\mathscr{F}_{y_0}(i\lambda_I),\qquad \text{Im}\mathscr{A}_{y_0}(i\lambda_I)\sim \frac{\tau\lambda_I}{2} - \text{Im}\mathscr{F}_{y_0}(i\lambda_I),\qquad D\rightarrow\infty.
$$
Since $\mathscr{F}_{y_0}(0)>1/2$ and $\text{Re}\mathscr{F}_{y_0}(i\lambda_I)\rightarrow 0$ as $\lambda_I\rightarrow\infty$ (see Appendix \ref{app:F_properties} and accompanying Figure \ref{fig:F_properties}) we deduce that there exists a solution to $\text{Re}\mathscr{A}_{y_0}(i\lambda_I)=0$. Moreover, in Figure \ref{fig:real_F_imag} we observe that when $\text{Re}F_{y_0}(i\lambda_I)$ is positive it is also monotone decreasing in $\lambda_I$. Therefore there exists a unique eigenvalue $\lambda_h^\infty$ and time constant $\tau_h^\infty = 2\text{Im}\mathscr{F}_{y_0}(i\lambda_I)/\lambda_h^\infty$ such that $\mathscr{A}_{y_0}(i\lambda_h^\infty)=0$. Furthermore, since $\text{Im}\mathscr{A}_{y_0}(i\lambda_h^\infty)\lessgtr 0$ if $\tau \lessgtr \tau_h^\infty$ we get
$$
\Delta\arg\mathscr{A}_{y_0}(i\lambda_I)|_{\infty}^{0}=\begin{cases} -5\pi/4, & \tau<\tau_h^\infty,\\ 3\pi/4,  &\tau>\tau_h^\infty.\end{cases}
$$
The single boundary spike solution therefore undergoes a Hopf bifurcation as $\tau$ exceeds the Hopf bifurcation threshold $\tau_h^\infty$. Using the shadow limit threshold as an initial guess, we numerically continue the Hopf bifurcation threshold for finite values of $D>0$ to obtain the Hopf bifurcation threshold $\tau_h(D,A)$ and accompanying critical eigenvalue $\lambda=i\lambda_h(D,A)$ shown in Figure \ref{fig:hopf_tau} and \ref{fig:hopf_lambda} respectively.

The above analysis, together with the plots of $\tau_h^\infty(A)$ and $\tau_h(D,A)$ in Figures \ref{fig:hopf_inf} and \ref{fig:hopf_tau} respectively, indicate that the single boundary spike solution is stabilized as $A>0$ is increased. Additionally, if $A$ exceeds the threshold $q_{0c}$ given in \eqref{eq:q0c_def}, then the single boundary spike is stable independently of the parameters $\tau\geq 0$ and $D>0$. We illustrate the onset of Hopf instabilities when $D=5$ for $\tau=1,2,7$ and $A=0,0.2,0.4,0.6$ by numerically computing the solution of \eqref{eq:pde_gm} using FlexPDE 6 \cite{flexpde} and plotting $u(0,t)$ in Figure \ref{fig:hopf_numerical}. In particular we observe that the single spike pattern is stabilized by increasing the boundary flux $A$. Additionally, our numerical simulations show good qualitative agreement with the Hopf bifurcation thresholds plotted in Figure \ref{fig:hopf_tau}.

\subsection{Example 2: Two Boundary Spikes with $A=B\geq 0$}

In this example we investigate the role of equal boundary fluxes on the structure and stability of a two-boundary-spike pattern. Using the method of \S\ref{subsec:gluing-method}, a two-boundary-spike pattern is found by letting $l_L=l$ and $l_R=1-l$ and solving \eqref{eq:gluing_equation_2}, which is explicitly given by
\begin{equation}\label{eq:two_bdry_spike_eq}
\frac{\tanh\omega_0 l}{\eta(y_L)\cosh\omega_0 l} - \frac{\tanh\omega_0(1-l)}{\eta(y_R)\cosh\omega_0(1-l)} = 0,
\end{equation}
for $0<l<1$ where $\eta$ is given by $\eqref{eq:eta_def}$ and $y_L$ and $y_R$ are given by \eqref{eq:gluing_parameters}. Note that by \eqref{eq:gluing_solution} the algebraic equation \eqref{eq:two_bdry_spike_eq} is equivalent to
\begin{equation}\label{eq:two_bdry_spike_eq_explicit}
\frac{\xi_L}{\xi_R} = \frac{\cosh\omega_0l}{\cosh\omega_0(1-l)},
\end{equation}
from which we deduce that $l\lessgtr 1/2$ implies $\xi_L \lessgtr \xi_R$. In particular $l=1/2$ solves \eqref{eq:two_bdry_spike_eq} for all $A\geq 0$ and since in this case $\xi_L=\xi_R$ we refer to it as the symmetric solution. For the remainder of this example we will construct asymmetric two-spike patterns for which $0<l<1/2$ (by symmetry the case $l>1/2$ is identical) and then study the linear stability of both the symmetric and asymmetric patterns.

Before constructing asymmetric two-boundary-spike patterns for $A\geq 0$ we first recall the following existence result from \cite{ward_2002_asymmetric} in the case $A=0$. Specifically, we let $z=\omega_0 l_L$ and $\tilde{z}=\omega_0 l_R$ so that when $A=0$ the system \eqref{eq:gluing_equation} (and hence also \eqref{eq:two_bdry_spike_eq}) is equivalent to
\begin{equation}\label{eq:bdry_bdry_A0}
z + \tilde{z} = \omega_0,\qquad b(z) = b(\tilde{z}),\qquad b(z) \equiv \frac{\tanh z}{\cosh z}.
\end{equation}
It follows from Result 2.3 (with $k_1=k_2=1$, $\mu=1$, and $r=1$) of \cite{ward_2002_asymmetric} that \eqref{eq:two_bdry_spike_eq} has a unique solution $0<l<1$ if and only if 
\begin{equation}\label{eq:A=0_asymmetric_threshold}
0<D<D_{c1}\equiv [2\log(1+\sqrt{2})]^{-2}\approx 0.322.
\end{equation}
When $A>0$ we solve \eqref{eq:two_bdry_spike_eq} numerically and find that for given values of $D$ and $A>0$ it accepts zero, one, or two solutions in the range $0<l<1/2$. Rather than solving \eqref{eq:two_bdry_spike_eq} numerically for $l$ as a function of $A$ and $D$, we found it more convenient to solve for $A=A(D,l)$. The results of our numerical calculations are shown in Figure \ref{fig:BOT_BB_BA_colormap} where we plot $A=A(D,l)$ as well as the curve $l=l_\text{max}(D)$ along which $A(D,l)$ is maximized for a fixed value of $D$, and the curve $l=l_{c1}(D)$ along which $A(D,l_{c1}(D)) = A(D,1/2)$ for $D_{c1}<D<D_{c2}\approx0.660$ and $l_{c1}(D)=0$ for $0<D<D_{c1}$. Consequently, \eqref{eq:two_bdry_spike_eq} has zero solutions in $0<l<1/2$ if $A>A_\text{max}(D)\equiv A(D,l_\text{max}(D))$, whereas it has two solutions, one with $l<l_{c1}(D)$ and the other with $l>l_{c1}(D)$, if $A_\text{max}(D) > A > A_{c1}(D) \equiv A(D,l_{c1}(D))$ for $0<D<D_{c2}$. For all other values of $D>0$ and $A>0$ equation \eqref{eq:two_bdry_spike_eq} has exactly one solution in $0<l<1/2$. We summarize these existence thresholds in Figure \ref{fig:BOT_BB_BA_thresholds} where we note in particular that $A>0$ greatly extends the range of $D$ values over which asymmetric two-boundary-spike patterns exist.

\begin{figure}[t!]
	\centering
	\begin{subfigure}{0.5\textwidth}
		\centering
		\includegraphics[scale=0.675]{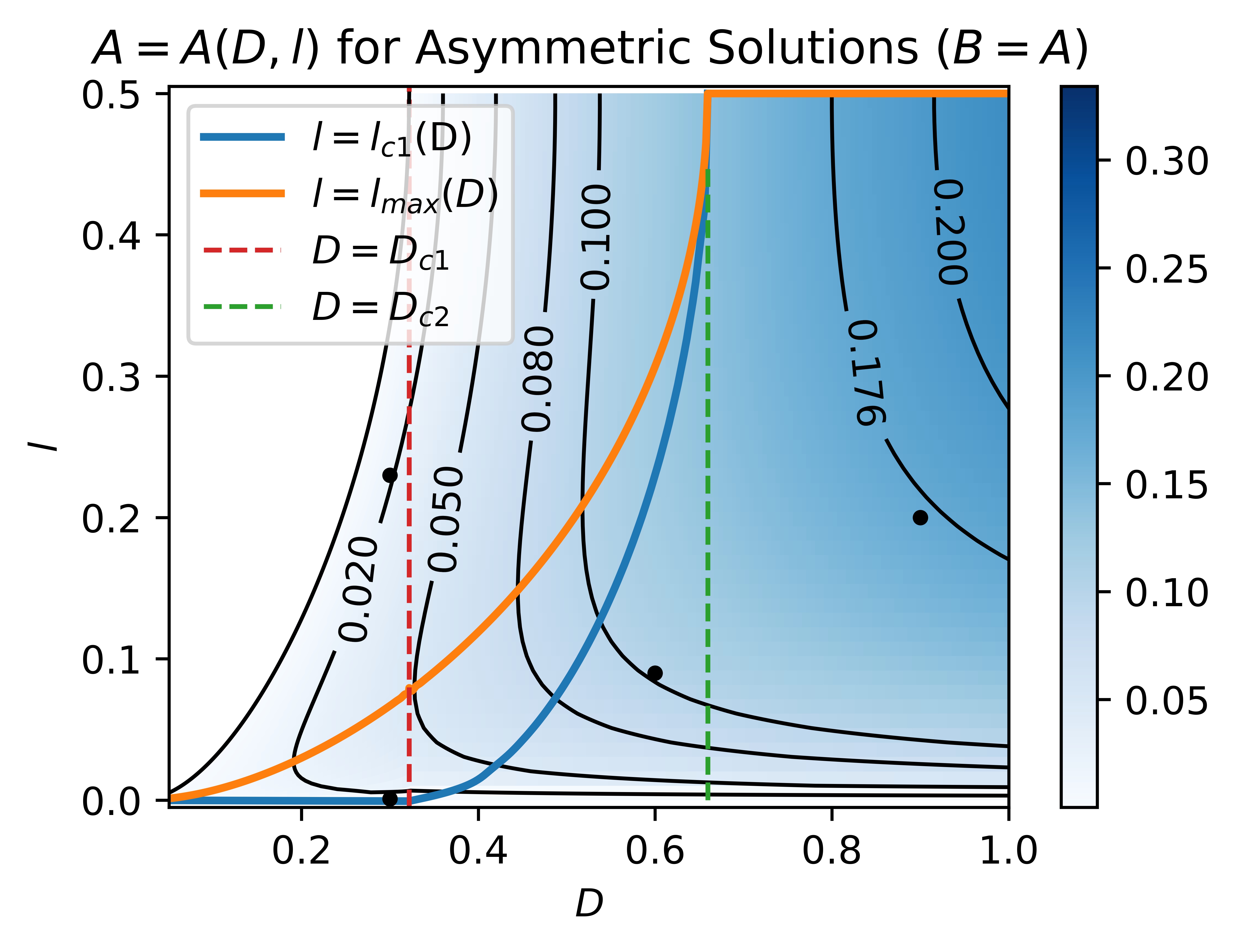}
		\caption{}\label{fig:BOT_BB_BA_colormap}
	\end{subfigure}%
	\begin{subfigure}{0.5\textwidth}
		\centering
		\includegraphics[scale=0.675]{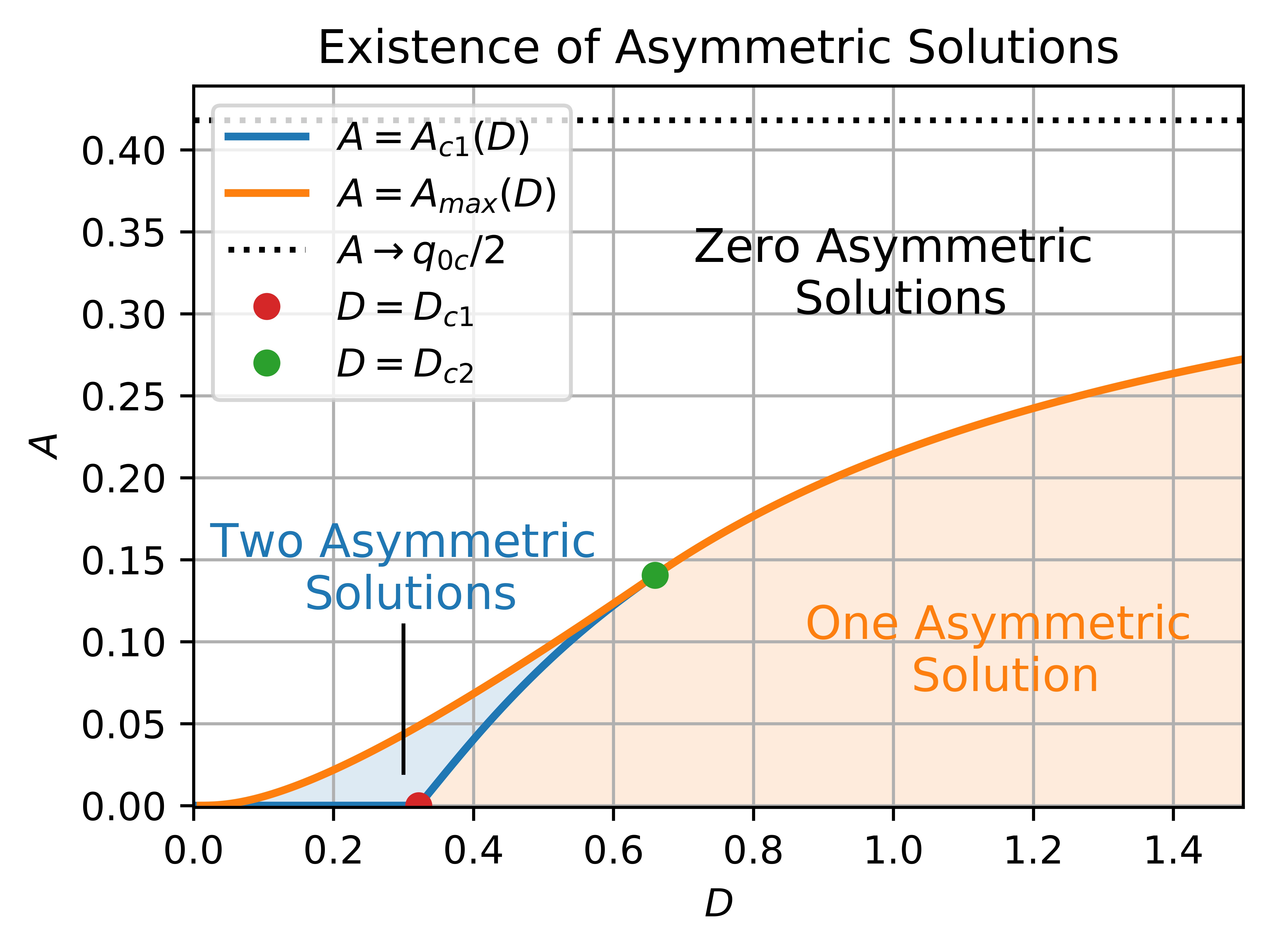}
		\caption{}\label{fig:BOT_BB_BA_thresholds}
	\end{subfigure}%
	\caption{(A) Plots of $A=A(D,l)$ for the construction of asymmetric two-boundary-spikes in Example 2. The curve $l=l_\text{max}(D)$ indicates the values of $l$ at which $A(D,l)$ is maximized, while $l_{c1}(D)<1/2$ indicates the values of $l$ at which $A(D,l)$ has the same value as $A(D,1/2)$. The curve $l=l_\text{max}(D)$ also corresponds to the competition instability threshold for asymmetric two-boundary-spike patterns. If $l_\text{max}(D)<l<1/2$ then the corresponding asymmetric pattern is unstable, whereas it is stable if $l<l_\text{max}(D)$. (B) Thresholds for the existence of zero, one, or two asymmetric two-boundary-spike solutions in the presence of equal boundary fluxes. The values of $A=A_{c1}(D)$ for $D_{c1}<D<D_{c2}$ and $A=A_\text{max}(D)$ for $D>D_{c2}$ correspond to the competition instability threshold for the symmetric two-boundary-spike pattern. It is linearly unstable for values of $A$ below this threshold, and stable otherwise. Note that no asymmetric solutions exist for $A=0$ and $D>D_{c1}$.}\label{fig:BOT_BB_BA}
\end{figure}

We now consider the linear stability of both the symmetric and asymmetric two-boundary-spike patterns constructed above. Note that since $x_L=0$ and $x_R=1$ are fixed there are no slow drift dynamics and the linear stability of the two-spike patterns is completely determined by the $\mathcal{O}(1)$ eigenvalues calculated from the NLEP \eqref{eq:nlep}. Moreover, we assume that $\tau=0$ so that no Hopf instabilities arise (see Example 1) and for which we can exclusively focus on zero eigenvalue crossing, or competition, instabilities. We proceed by first using the rigorous results of \S\ref{sec:rigorous} to determine the linear stability of the symmetric two-spike patterns constructed above, and we will then use the algebraic reduction outlined in \S\ref{subsec:algebraic} to determine the stability of the remaining asymmetric two-spike patterns.

Using \eqref{eq:greens_func} and \eqref{eq:gluing_solution}, the NLEP \eqref{eq:nlep} for the symmetric two-spike pattern constructed above is explicitly given by
\begin{equation*}
\mathscr{L}_{y_0}\pmb{\phi} - 2\omega_0 \tanh\bigl(\tfrac{\omega_0}{2}\bigr)\frac{\int_0^\infty w_c(y+y_0)\mathscr{G}_{\omega_0}\pmb{\phi} dy}{\int_0^\infty w_c(y+y_0)^2 dy} w_c(y+y_0)^2 = \lambda\pmb{\phi},\quad 0<y<\infty;\qquad \pmb{\phi}'(0) = 0,
\end{equation*}
where
\begin{equation}\label{eq:symm_vector_matrix_def}
\pmb{\phi} = \begin{pmatrix}\phi_1 \\ \phi_2\end{pmatrix},\quad \mathscr{G}_{\omega_0} = \frac{1}{\omega_0 \sinh\omega_0}\begin{pmatrix} \cosh\omega_0 & 1 \\ 1 & \cosh\omega_0 \end{pmatrix}.
\end{equation}
The Green's matrix is symmetric and of constant row sum and therefore has eigenvectors $(1,\pm 1)^T$. Substituting $\pmb{\phi} = (\phi,\pm\phi)^T$ into the NLEP therefore yields two uncoupled scalar NLEPs of the form \eqref{eq:rigorous_nlep} with $\mu=\mu_{\pm}$ where
\begin{equation}
\mu_+ \equiv 2,\qquad \mu_- \equiv 2\tanh^2\bigl(\tfrac{\omega_0}{2}\bigr).
\end{equation}
From Theorem \ref{thm:stability_theorem} and accompanying Figure \ref{fig:mu1_mu2} we immediately deduce that the $\pmb{\phi}_+$ mode is linearly stable. On the other hand, by Theorem \ref{thm:unstbale_small_shift} the $\pmb{\phi}_{-}$ mode is unstable if $\mu_- < \mu_c$, where $\mu_c$ is defined by \eqref{eq:muc_def}. We therefore calculate the competition instability threshold by numerically solving $2\tanh^2 \tfrac{\omega_0}{2} = \mu_c(y_0)$ where $y_0=y_L=y_R$ is the shift parameter given by \eqref{eq:gluing_parameters} with $l=1/2$. Our numerical calculations indicate that the resulting instability threshold coincides with the values
\begin{equation}
A(D,1/2) = \begin{cases} A_{c1}(D), & D_{c1}<D<D_{c2}, \\ A_\text{max}(D), & D > D_{c2}, \end{cases}
\end{equation}
calculated above. In particular, the symmetric two-spike pattern is linearly unstable for all $A<A(D,1/2)$ when $D>D_{c1}$. Furthermore, since $\mu_c(y_0)<0$ for $y_0>y_{0c}$ we determine from \eqref{eq:gluing_equation_2} that there are no competition instabilities if
\begin{equation}
A>\omega_0^{-1}q_{0c}\tanh(\omega_0/2),
\end{equation}
where $q_{0c}$ is the threshold identified in Example 1 and is explicitly given by \eqref{eq:q0c_def}. Note that, analogous to the results in Example 1, in the shadow limit ($D\rightarrow\infty$) there are no competition instabilities for the symmetric pattern if $A>q_{0c}/2$ (see Figure \ref{fig:BOT_BB_BA_thresholds}). As in \S\ref{sec:rigorous} we conjecture and have numerically supported that the symmetric two-spike pattern is linearly stable for $\mu>\mu_c$ and hence for all $A > A(D,1/2)$. Finally, as is clear from Figure \ref{fig:BOT_BB_BA_thresholds}, increasing $A>0$ expands the range of $D$ values over which the symmetric two-boundary-spike pattern is linearly stable.

For the asymmetric two-boundary-spike solutions constructed above the NLEP \eqref{eq:nlep} is not diagonalizable since $w_c(y+y_L)\not\equiv w_c(y+y_R)$. We therefore can't directly apply the rigorous results of \S\ref{sec:rigorous}. To determine the competition instability we instead use the algebraic reduction outlined in \S\ref{subsec:algebraic} and seek parameter values such that
\begin{equation}\label{eq:ex_2_asy_alg}
\det(\mathbb{I}_2 - 2\omega_0^2\mathscr{G}_{\omega_0}\mathscr{D}_0) = 0
\end{equation}
where $\mathbb{I}_2$ is the $2\times 2$ identity matrix, $\mathscr{G}_{\omega_0}$ is the $2\times 2$ Green's matrix given in \eqref{eq:symm_vector_matrix_def}, and 
\begin{equation}\label{eq:example_2_comp_det}
\mathscr{D}_0 = \frac{1}{\omega_0} \begin{pmatrix} \tanh\omega_0 l \mathscr{F}_{y_L}(0) & 0 \\ 0 & \tanh\omega_0(1-l)\mathscr{F}_{y_R}(0)\end{pmatrix}.
\end{equation}
Substituting the function $A=A(D,l)$ calculated above into \eqref{eq:ex_2_asy_alg} we can solve for $l$ as a function of $D$ using standard numerical methods (specifically we used a combination of Scipy's brentq and fsolve function in Python 3.6.8). Our computations indicate that the resulting competition instability threshold coincides with the curves $l_\text{max}(D)$ for $D>0$ and $l=1/2$ for $D_{c1}<D<D_{c2}$. In fact, we can show that this is the case explicitly by first differentiating the quasi-equilibrium equation $\bm{B}=\bm{0}$ with respect to $l$ to get
\begin{equation}\label{eq:bdry_bdry_diff_quasi}
\nabla_{\pmb{\xi}} \pmb{B} \biggl(\frac{\partial\pmb{\xi}}{\partial l} + \frac{\partial\pmb{\xi}}{\partial A}\frac{\partial A}{\partial l}\biggr) + \frac{\partial\pmb{B}}{\partial A} \frac{\partial A}{\partial l} = 0.
\end{equation}
Along the curve $l_\text{max}(D)$ for $D>0$ the function $A(D,l)$ is maximized whereas, by symmetry, it is minimized along $l=1/2$ for $D_{c1}<D<D_{c2}$. In both cases $\partial A / \partial l = 0$ along these curves so that \eqref{eq:bdry_bdry_diff_quasi} becomes $\nabla_{\bm{\xi}}\bm{B} \partial\bm{\xi}/\partial l =  \bm{0}$. Differentiating \eqref{eq:two_bdry_spike_eq_explicit} with respect to $l$ implies that $\partial \bm{\xi} / \partial l\neq \bm{0}$ and therefore we deduce that $\nabla_{\bm{\xi}}\bm{B}$ is singular along $l=l_\text{max}(D)$ and $l=1/2$. By the discussion of \S\ref{subsec:algebraic} it follows that along these curves the algebraic equation \eqref{eq:ex_2_asy_alg} is satisfied and they therefore correspond to competition instability thresholds. Note in particular that the competition instability threshold along $l=1/2$ corresponds to the competition instability threshold for the symmetric two-spike pattern. As an immediate consequence it follows that $A_\text{max}\rightarrow q_{0c}/2$ as $D\rightarrow\infty$.

To determine in which of the regions demarcated by the competition instability thresholds the asymmetric two-boundary-spike patterns are linearly stable and unstable, we calculate the stability of the asymmetric patterns along the $A=0$ curve. As outlined in Appendix \ref{app:A0_stability}, the asymmetric two-boundary spike patterns when $A=0$ are always linearly unstable, and therefore we deduce that the asymmetric two-boundary-spike patterns in the region bounded by $l=1/2$ and $l=l_\text{max}(D)$ for $0<D<D_{c2}$ are linearly unstable. On the other hand, numerically calculating the dominant eigenvalue of the NLEP \eqref{eq:nlep} (see Appendix \ref{app:conjecture} for description of numerical method) for select parameter values with $l<l_\text{max}(D)$ we find that such asymmetric two-boundary spike patterns are linearly stable. In particular, we note that in the region where there are two asymmetric patterns (i.e. for $A_{c1}(D)<A<A_\text{max}(D)$), the pattern with $l>l_\text{max}(D)$ is linearly unstable while that with $l<l_\text{max}(D)$ is linearly stable. Moreover, the single asymmetric two-boundary-spike pattern that exists for $A < A_\text{max}(D)$ and $D>D_{c1}$ is linearly stable.

\begin{figure}[t!]
	\centering
	\begin{subfigure}{0.33\textwidth}
		\centering
		\includegraphics[scale=0.675]{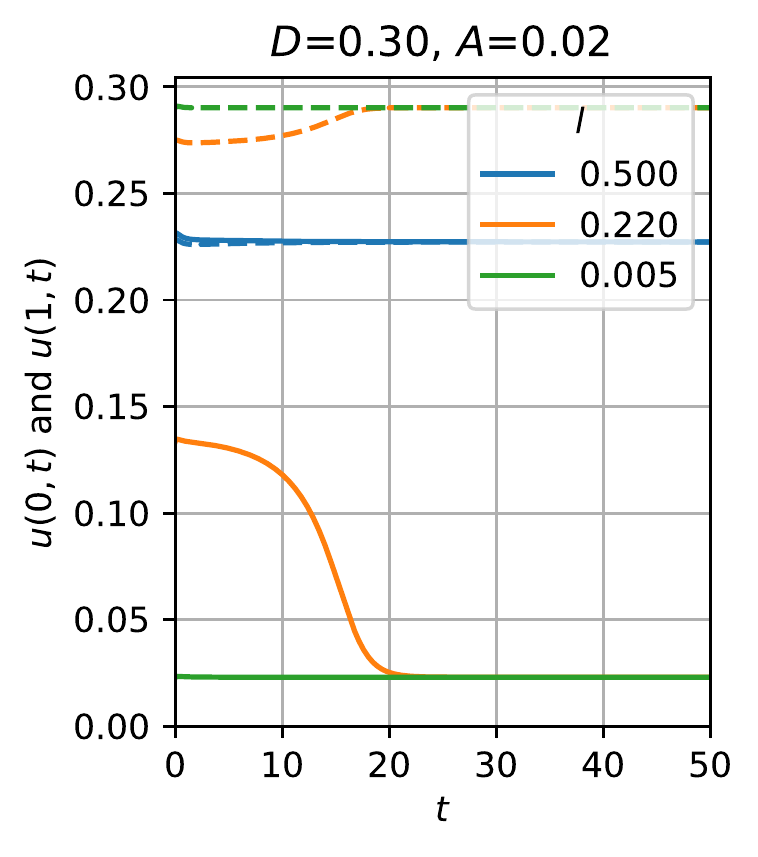}
		\caption{}\label{fig:bot_bb_ba_sim_1}
	\end{subfigure}%
	\begin{subfigure}{0.33\textwidth}
		\centering
		\includegraphics[scale=0.675]{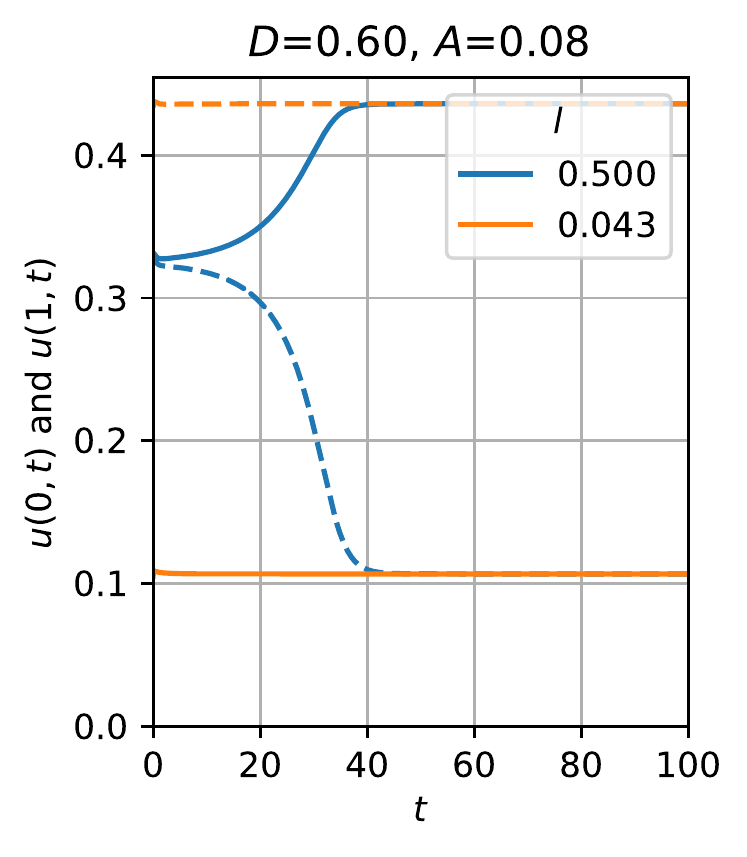}
		\caption{}\label{fig:bot_bb_ba_sim_2}
	\end{subfigure}%
	\begin{subfigure}{0.33\textwidth}
		\centering
		\includegraphics[scale=0.675]{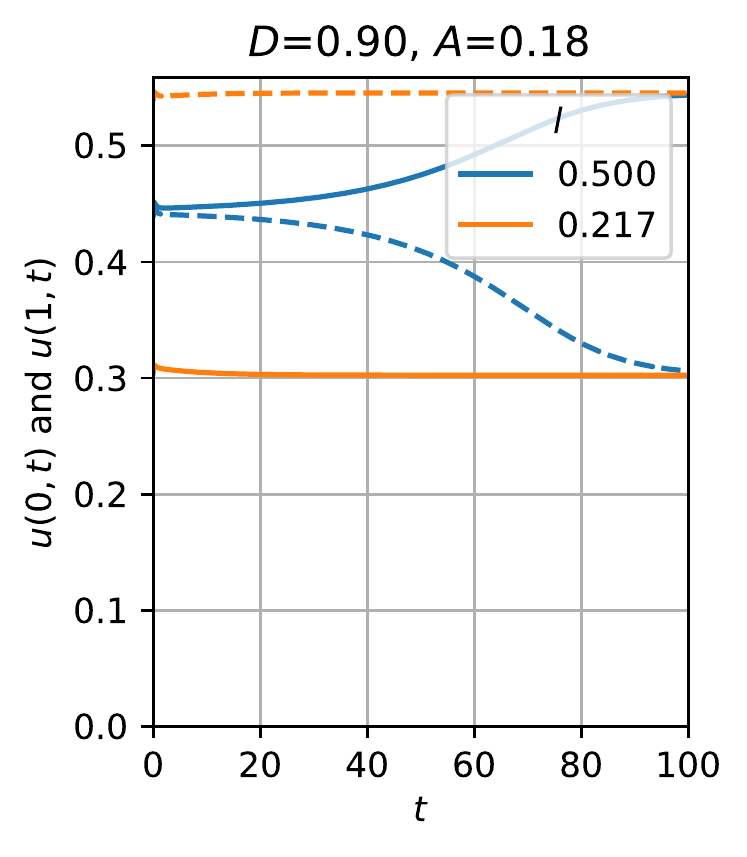}
		\caption{}\label{fig:bot_bb_ba_sim_3}
	\end{subfigure}%
	\caption{Results of numerical simulation of \eqref{eq:pde_gm} using FlexPDE 6 \cite{flexpde} with $\varepsilon=0.005$, $\tau=0.1$, and select values of $D$ and $A$. In each plot the solid (resp. dashed) lines correspond tot he spike height at $x=0$ (resp. $x=1$). Both the asymptotically constructed symmetric ($l=1/2$) and asymmetric $0<l<1/2$ solutions were used as initial conditions. See Figure \ref{fig:BOT_BB_BA_colormap} for position of parameter values relative to existence and stability thresholds.}\label{fig:bot_bb_ba_numerical}
\end{figure}

Finally we support our asymptotic predictions by numerically solving \eqref{eq:pde_gm} using FlexPDE 6 \cite{flexpde} with parameters $\varepsilon=0.005$ and $\tau = 0.1$ for select values of $D$ and $A$. Letting $u_e$ and $v_e$ be the any of the symmetric or asymmetric two-boundary-spike patterns constructed above, we use
\begin{equation}\label{eq:flexpde-initial-condition}
u(x,0) = (1 + 0.025\cos(20x))u_e(x),\qquad v(x,0)=(1+0.025\cos 20x)v_e(x),
\end{equation}
as initial conditions and simulate \eqref{eq:pde_gm} sufficiently long that the solution settles. The results of our numerical simulations indicate good agreement with the asymptocally calculated linear stability thresholds for both symmetric and asymetric two-boundary-spike patterns. We include in Figure \ref{fig:bot_bb_ba_numerical} the results of our numerical calculations for select values of $D$, $l$, and $A$ indicated by black markers in Figure \ref{fig:BOT_BB_BA_colormap}. In particular, in Figure \ref{fig:bot_bb_ba_sim_1} we plot the spike heights at $x_L=0$ (solid) and $x_R=1$ (dashed) for $D=0.30$ and $A=0.02$ with initial conditions given by the two-boundary-spike pattern constructed with $l=0.5$ which is symmetric and predicted to be stable, as well as $l=0.220$ and $l=0.005$ which are both asymmetric but predicted to be linearly unstable and stable respectively. It is clear from the resulting plots that our asymptotic predictions hold in this numerical simulations. Additionally, we observe that the unstable asymmetric two-spike pattern tends toward the linearly stable pattern. We observed this trend for all our numerical simulations in which $l_\text{max}<l<1/2$ though predicting this long-time behaviour analytically is beyond the scope of this paper. Similarly we numerically simulate the dynamics of a symmetric and asymmetric two-boundary-spike pattern when $D=0.60$ and $A=0.08$ (Figure \ref{fig:bot_bb_ba_sim_2}) and when $D=0.90$ and $A=0.18$ (Figure \ref{fig:bot_bb_ba_sim_3}). In both cases the symmetric and asymmetric patterns are predicted to be linearly unstable and stable respectively, which agrees with the outcomes observed in our numerical simulations.

\subsection{Example 3: Two Boundary Spikes with a One Sided Flux ($A\geq 0$ and $B=0$)}

\begin{figure}[t!]
	\centering
	\begin{subfigure}{0.34\textwidth}
		\centering
		\includegraphics[scale=0.65]{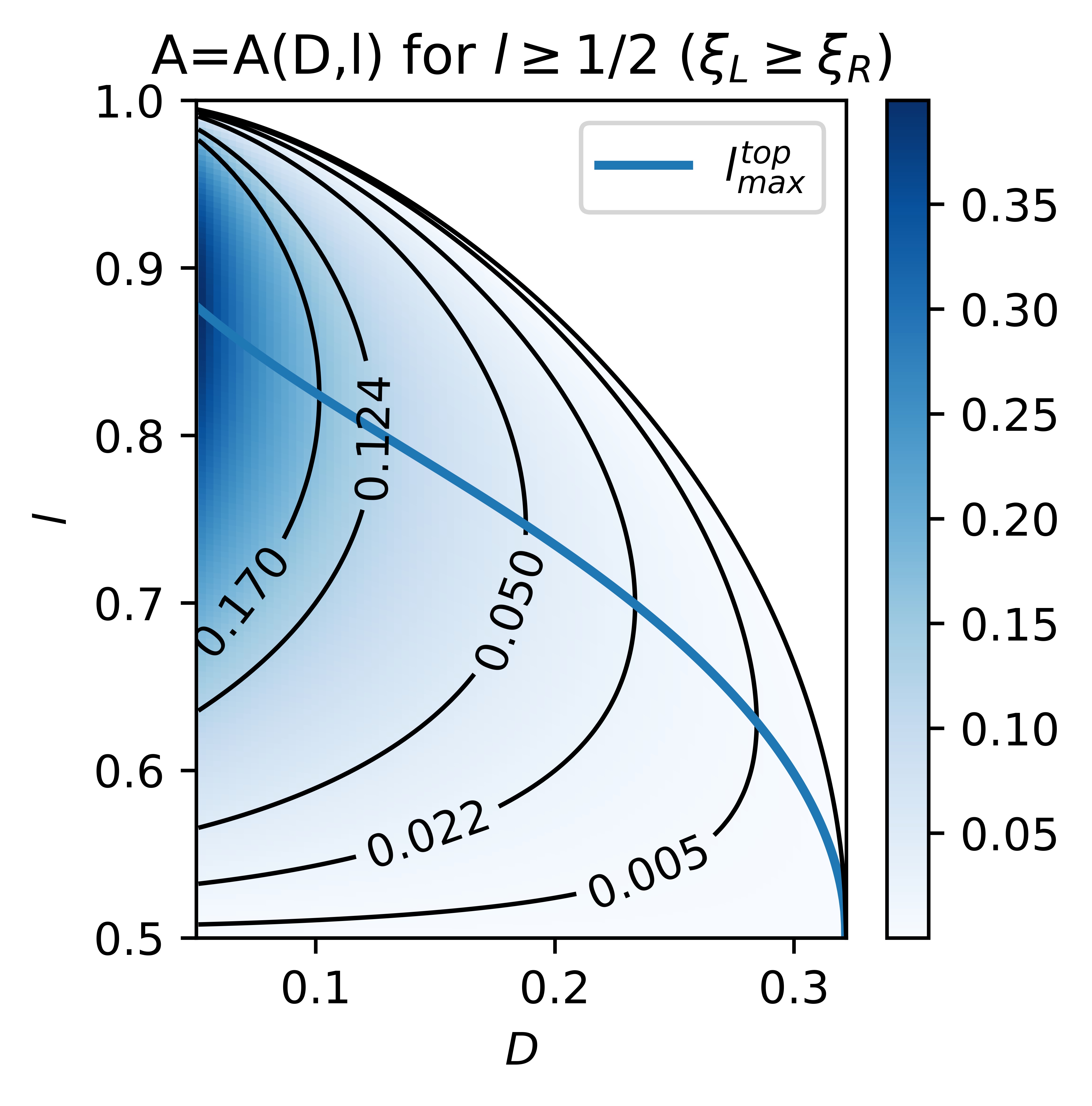}
		\caption{}\label{fig:TOP_BB_B0_color}
	\end{subfigure}%
	\begin{subfigure}{0.34\textwidth}
		\centering
		\includegraphics[scale=0.65]{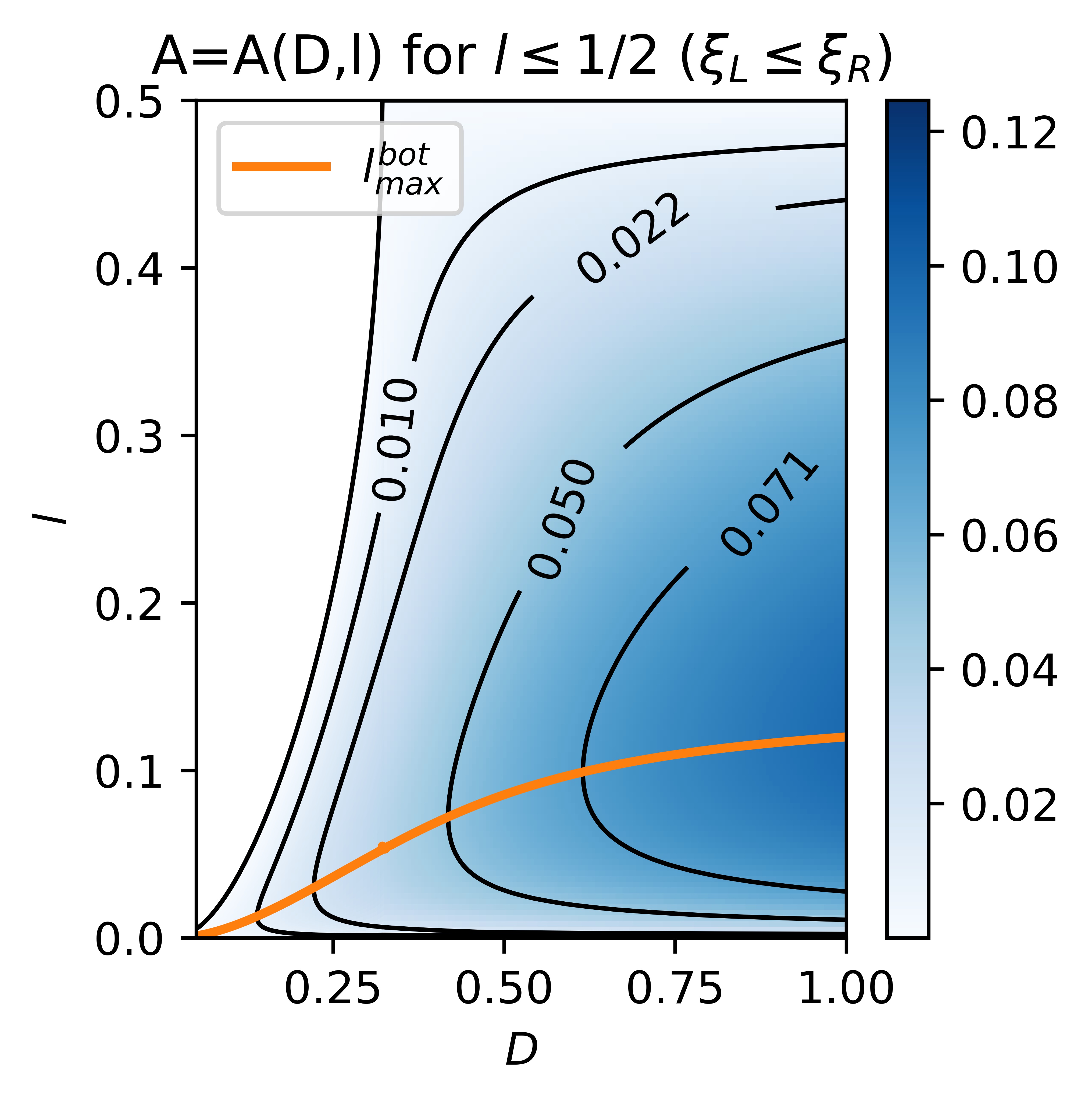}
		\caption{}\label{fig:BOT_BB_B0_color}
	\end{subfigure}%
	\begin{subfigure}{0.32\textwidth}
		\centering
		\includegraphics[scale=0.65]{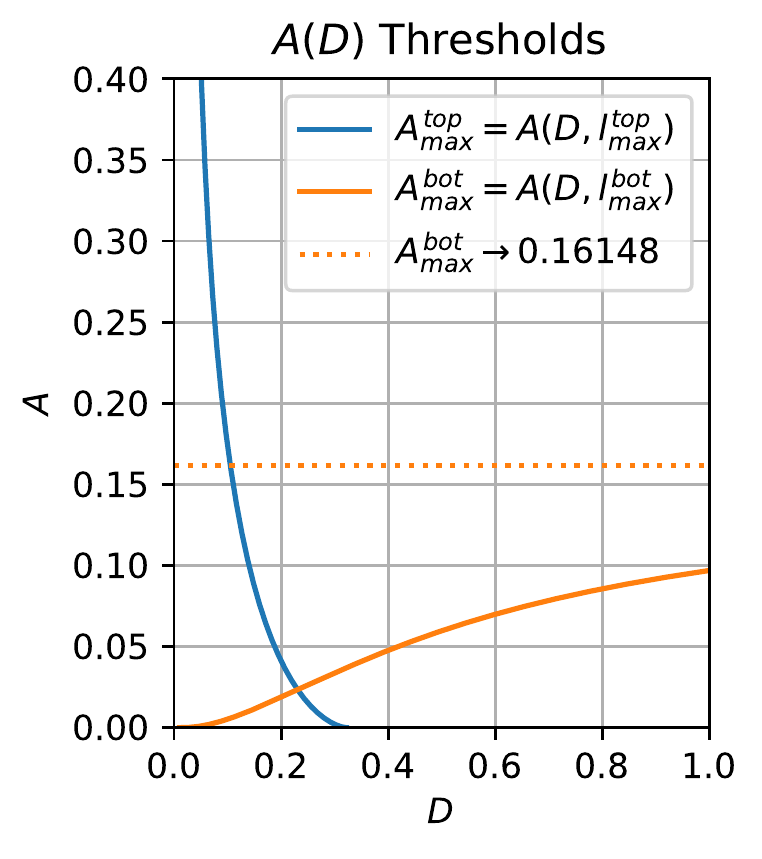}
		\caption{}\label{fig:BB_B0_thresholds}
	\end{subfigure}%
	\caption{Plot of $A=A(D,l)$ for Example 3 obtained by solving \eqref{eq:two_bdry_spike_example_3} when (A) $0<l<1/2$ and (B) $1/2<l<1$. The solid curves $l=l_\text{max}^\text{top}(D)$ and $l_\text{max}^\text{bot}(D)$ indicate the values of $l$ at which $A(D,l)$ is maximized as well as the competition instability threshold in the $l>1/2$ and $l<1/2$ regions respectively. The corresponding existence thresholds of $A$ versus $D$ are plotted against $D$ in (C).}\label{fig:BB_B0}
\end{figure}

In this example we investigate the effect of a one sided boundary flux ($A\geq0$ and $B=0$) on the structure and linear stability of a two-boundary-spike pattern. Letting $l_L=l$ and $l_R=1-l$, the gluing equation \eqref{eq:gluing_equation_2} of \S\ref{subsec:gluing-method} becomes
\begin{equation}\label{eq:two_bdry_spike_example_3}
\frac{\tanh\omega_0l}{\eta(y_L)\cosh\omega_0 l} - \frac{\tanh\omega_0(1-l)}{3\cosh\omega_0(1-l)} = 0,
\end{equation}
which is to be solved for $0<l<1$ where $\eta(y_L)$ and $y_L=y_0(\tfrac{\omega_0A}{\tanh\omega_0})$ are given by \eqref{eq:eta_def} and \eqref{eq:gluing_parameters} respectively. Note that since $\eta(y_L)<3$ for all $A>0$ it follows that $l=1/2$ is a solution of \eqref{eq:two_bdry_spike_example_3} if and only if $A=0$. In particular $\xi_L\neq \xi_R$ for all $A>0$ and by the asymmetry of the boundary fluxes the cases $l\lessgtr 1/2$, for which $\xi_L\lessgtr\xi_R$, must be considered separately. On the other hand, when $A=0$ we apply the same results from \cite{ward_2002_asymmetric} summarized in Example 2.

Proceeding as in Example 2 we numerically solve \eqref{eq:two_bdry_spike_example_3} for $A=A(D,l)>0$ when $l<1/2$ and $l>1/2$. In addition we compute $l=l_\text{max}^\text{bot}(D)$ and $l=l_\text{max}^\text{top}(D)$ defined as the curves along which $A(D,l)$ is maximized in the regions $l<1/2$ and $l>1/2$ respectively. In Figures \ref{fig:TOP_BB_B0_color} and \ref{fig:BOT_BB_B0_color} we plot $A=A(D,l)$ together with $l_\text{max}^\text{top}$ and $l_\text{max}^\text{bot}$ in the regions $1/2<l<1$ and $0<l<1/2$ respectively. In each region the maximum value of $A$ given by $A_\text{max}^\text{top}(D) \equiv A(D,l_\text{max}^\text{top}(D))$ and $A_\text{max}^\text{bot}(D) \equiv A(D,l_\text{max}^\text{bot}(D)$ and plotted in Figure \ref{fig:BB_B0_thresholds} gives an existence threshold for the boundary flux beyond which no two-boundary-spike with $\xi_L>\xi_R$ and $\xi_L<\xi_R$ exists respectively. In particular, a two-boundary-spike pattern with $\xi_L>\xi_R$ only exists if $A<A_\text{max}^\text{top}(D)$ and $D$ satisfies \eqref{eq:A=0_asymmetric_threshold}, whereas a two-boundary-spike pattern with $\xi_L<\xi_R$ exists for all $D>0$ provided that $A<A_\text{max}^\text{bot}(D)$. Furthermore, by letting $D\rightarrow\infty$ in \eqref{eq:two_bdry_spike_example_3} we numerically calculate $l_\text{max}^\text{bot}\rightarrow 0.13772$ and $A_\text{max}^\text{bot}(D)\rightarrow 0.16148$ as $D\rightarrow\infty$ and this horizontal asymptote is indicated in Figure \ref{fig:BB_B0_thresholds}. 

Next we consider the linear stability of the two-boundary-spike patterns constructed above when $\tau=0$. In particular we restrict our attention to competition instabilities which arise through a zero eigenvalue crossing. Proceeding as in Example 2 we first deduce that both $l=l_\text{max}^\text{top}(D)$ and $l=l_\text{max}^\text{bot}(D)$ yield a competition instability threshold. Furthermore, we verify that these are the only competition instability thresholds by numerically computing the algebraic equation \eqref{eq:algebraic_eq} where $\mathscr{G}_{\omega_0}$ and $\mathscr{D}_0$ are given by \eqref{eq:symm_vector_matrix_def} and \eqref{eq:example_2_comp_det} respectively. Since all asymmetric two-boundary-spike patterns when $A=0$ are unstable with respect to competition instabilities (see Example 2 and Appendix \ref{app:A0_stability}), we immediately deduce that all asymmetric two-boundary spike patterns with $\xi_L>\xi_R$ and $\xi_L<\xi_R$ are linearly unstable when $l>l_\text{max}^\text{top}(D)$ and $l>l_\text{max}^\text{bot}(D)$ respectively, and are linearly stable otherwise. In particular, the non-zero boundary flux $A>0$ both extends the range of parameter values for which asymmetric patterns exists and are linearly stable.

\begin{figure}[t!]
	\centering
	\begin{subfigure}{0.34\textwidth}
		\centering
		\includegraphics[scale=0.65]{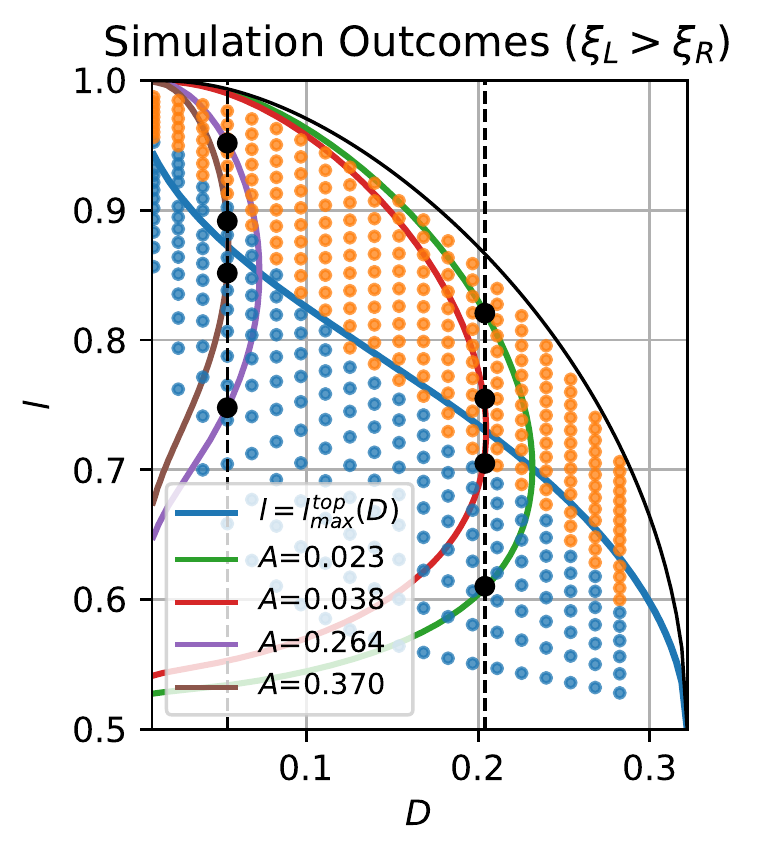}
		\caption{}\label{fig:top_bb_b0_simulations}
	\end{subfigure}%
	\begin{subfigure}{0.34\textwidth}
		\centering
		\includegraphics[scale=0.65]{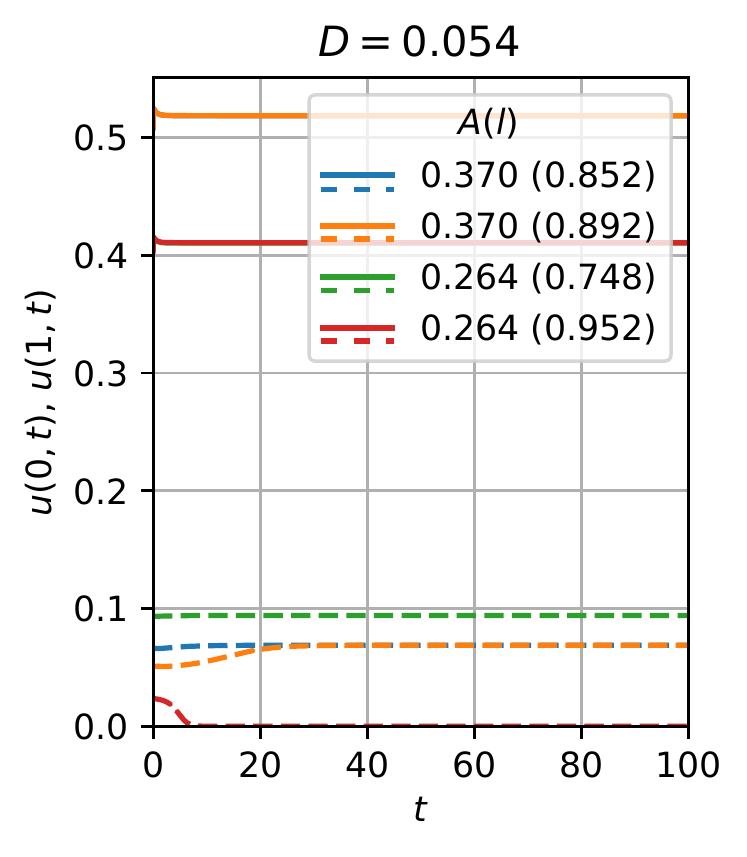}
		\caption{}\label{fig:top_bb_b0_simulations_0}
	\end{subfigure}%
	\begin{subfigure}{0.32\textwidth}
		\centering
		\includegraphics[scale=0.65]{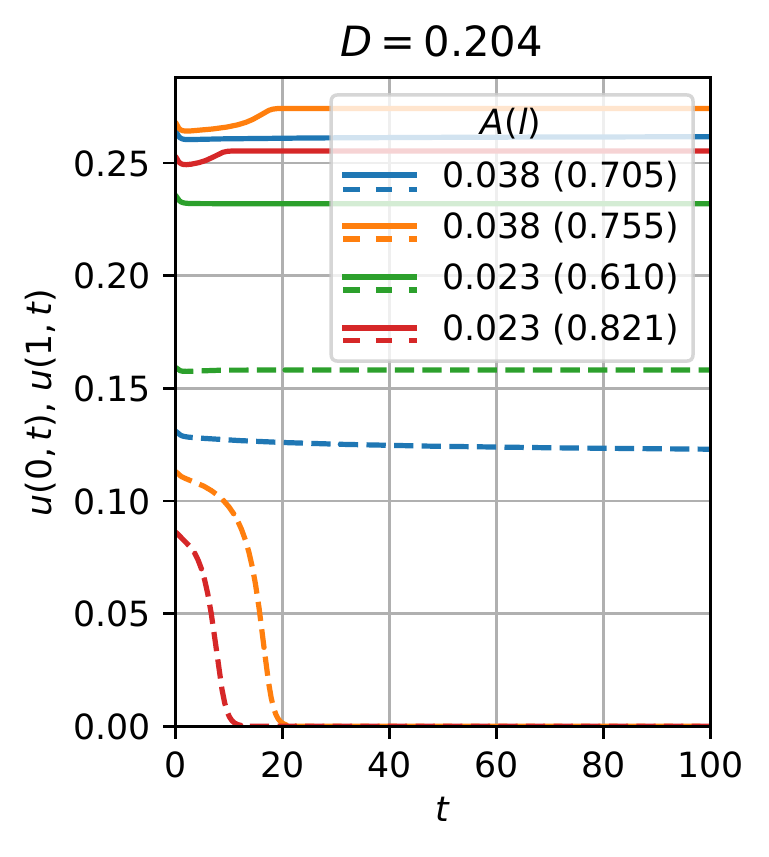}
		\caption{}\label{fig:top_bb_b0_simulations_1}
	\end{subfigure}%
	\caption{Numerical simulations for Example 2 when $\xi_L>\xi_R$. (A) Outcome of numerical simulation of \eqref{eq:pde_gm} starting from the asymmetric two-boundary-spike pattern constructed using the indicated values of $D$, $l$, and $A$. Blue and orange markers indicate the two-boundary spike pattern settled to the stable two-spike pattern (i.e. with $l<l_\text{max}^\text{top}(D)$) or collapsed to a single spike pattern respectively. Black dots indicate values of $D$, $A$, and $l$ for which the spike heights are plotted over time in Figures (B) and (C). The left and right dashed vertical lines indicate $D=0.054$ and $D=0.204$ respectively. In (B) and (C) we plot spike heights at $x=0$ (solid) and $x=1$ (dashed) at given values of $D$ and $A$ and with initial condition specified by indicated value of $l$.}\label{fig:top_bb_b0_numerical}
\end{figure}

\begin{figure}[t!]
	\centering
	\begin{subfigure}{0.34\textwidth}
		\centering
		\includegraphics[scale=0.65]{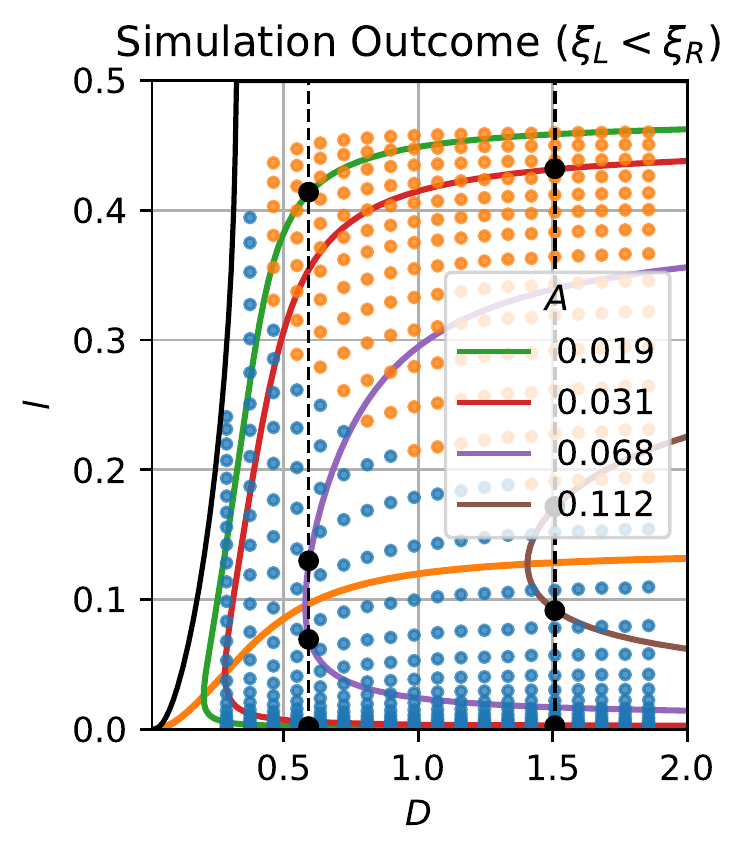}
		\caption{}\label{fig:bot_bb_b0_simulations}
	\end{subfigure}%
	\begin{subfigure}{0.34\textwidth}
		\centering
		\includegraphics[scale=0.65]{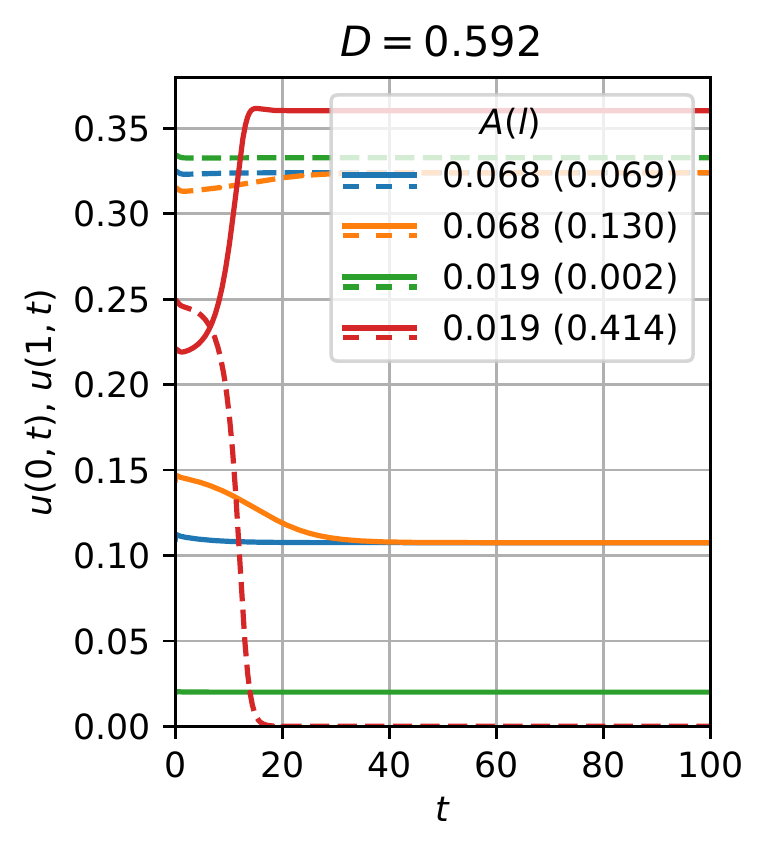}
		\caption{}\label{fig:bot_bb_b0_simulations_0}
	\end{subfigure}%
	\begin{subfigure}{0.32\textwidth}
		\centering
		\includegraphics[scale=0.65]{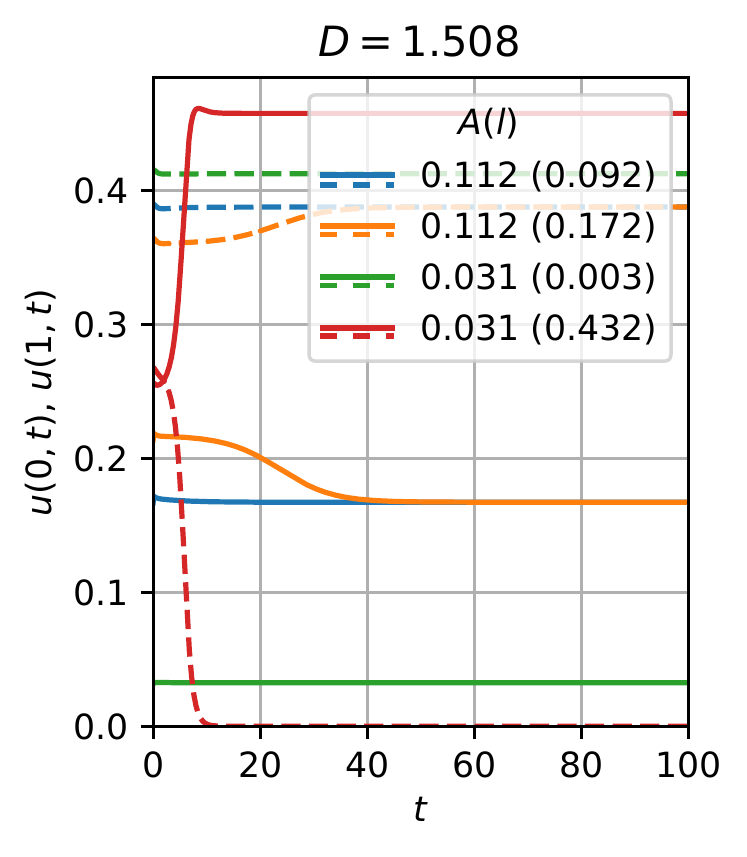}
		\caption{}\label{fig:bot_bb_b0_simulations_1}
	\end{subfigure}%
	\caption{Numerical simulations for Example 2 when $\xi_L<\xi_R$. (A) Outcome of numerical simulation of \eqref{eq:pde_gm} starting from the asymmetric two-boundary-spike pattern constructed using the indicated values of $D$, $l$, and $A$. Blue and orange markers indicate the two-boundary spike pattern settled to the stable two-spike pattern (i.e. with $l<l_\text{max}^\text{bot}(D)$) or collapsed to a single spike pattern respectively. Black dots indicate values of $D$, $A$, and $l$ for which the spike heights are plotted over time in Figures (B) and (C). The left and right dashed vertical lines indicate $D=0.592$ and $D=1.508$ respectively. In (B) and (C) we plot spike heights at $x=0$ (solid) and $x=1$ (dashed) at given values of $D$ and $A$ and with initial condition specified by indicated value of $l$.}\label{fig:bot_bb_b0_numerical}
\end{figure}

To support our asymptotic predictions we numerically calculate solutions to \eqref{eq:pde_gm} when $\varepsilon=0.005$ and $\tau=0.1$ for select values of $D$ and $A < \max\{A_\text{max}^\text{top}(D),A_\text{max}^\text{bot}(D)\}$. For each pair $(D,A)$ we let $0<l<1$ be any of the values for which $A(D,l)=A$ and then let $(u_e(x),v_e(x))$ be the corresponding equilibrium pattern constructed above. Using \eqref{eq:flexpde-initial-condition} as an initial condition we then solve \eqref{eq:pde_gm} numerically using FlexPDE 6 \cite{flexpde}. The results of our numerical simulations are illustrated in Figures \ref{fig:top_bb_b0_numerical} and \ref{fig:bot_bb_b0_numerical} when $l$ is chosen to be in $1/2<l<1$ and $0<l<1/2$ respectively. Specifically, in Figure \ref{fig:top_bb_b0_simulations} (resp. \ref{fig:bot_bb_b0_simulations}) we indicate with a blue or orange marker respectively whether the solution settles (after simulating for $0<t<200$) to the asymptotically predicted stable equilibrium with $l<l_\text{max}^\text{top}(D)$ (resp. $l<l_\text{max}^\text{bot}(D)$) or to a one-boundary spike solution in which the spike at $x=1$ collapses respectively. In Figures \ref{fig:top_bb_b0_simulations_0} and \ref{fig:top_bb_b0_simulations_1} (resp. \ref{fig:bot_bb_b0_simulations_0} and \ref{fig:bot_bb_b0_simulations_1}) we show the spike heights as functions of time at select values of $A$ and $D$ using an unstable and stable value of $l$ in $1/2<l<1$ (resp. $0<l<1/2$) to construct the initial condition (see captions for more details). The results of these numerical simulations are in good agreement with our asymptotic predictions. However, we comment that in the numerical outcomes shown in Figure \ref{fig:top_bb_b0_simulations} some of the asymmetric patterns which are predicted to be stable collapse. We expect that this error due to a combination of small errors from the asymptotic theory, numerical errors from the time integration of \eqref{eq:pde_gm}, as well as the close proximity to the fold point $A=A_\text{max}^\text{top}(D)$ for these values of $l$.

\subsection{Example 4: One Boundary and Interior Spike with One-Sided Feed ($A\geq 0$, $B=0$)}\label{sec:example_4}

In this final example we extend the results of Example 3 to the case where there is one boundary spike at $x_L=0$ and one interior spike at $0<x_1<1$. The asymptotic construction of the resulting two spike patterns, as well as the analysis of their linear stability on an $\mathcal{O}(1)$ timescale proceeds as in the previous example. However, since $x_1$ is an equilibrium of the slow dynamics equation \eqref{eq:ode_dynamics}, we must now also determine the stability of the two-spike pattern on an $\mathcal{O}(\varepsilon^{-2})$ timescale by analyzing the linearization of \eqref{eq:ode_dynamics}. We remark that an alternative approach to determine the stability of multi-spike patterns on an $\mathcal{O}(\varepsilon^{-2})$ timescale is to calculate the $\mathcal{O}(\varepsilon^2)$ eigenvalues of the linearization \eqref{eq:pde_sys_stability} though we do not pursue this approach further (see for example \cite{iron_2001} for the analysis of $\mathcal{O}(\varepsilon^2)$ eigenvalues).

Using the method of \S\ref{subsec:gluing-method}, with $l_L=l$ and $l_1=(1-l)/2$ equation \eqref{eq:gluing_equation_2} becomes
\begin{equation}\label{eq:example_4_asy_eq}
\frac{\tanh\omega_0 l}{\eta(y_L)\cosh\omega_0l} - \frac{\tanh\omega_0\tfrac{1-l}{2}}{3\cosh\omega_0\tfrac{1-l}{2}} = 0,
\end{equation}
which is to be solved for $0<l<1$ where $\eta(y_L)$ and $y_L=y_0(\omega_0 A/\tanh\omega_0l)$ are given by \eqref{eq:eta_def} and \eqref{eq:gluing_parameters} respectively. As in the previous examples we observe that $\xi_L=\xi_1$ if and only if $l=1/3$ which is a solution to \eqref{eq:example_4_asy_eq} if and only if $A=0$. In particular, together with the discussion in Example 3 we deduce that there are no symmetric two-spike patterns when there is a one-sided positive boundary flux $A>0$. As in the previous examples we also note that $\xi_L\lessgtr\xi_1$ if $l\lessgtr 1/3$. Next we note that since $l_1=(1-l)/2$ the relevant asymmetric equilibrium results for $A=0$ from \cite{ward_2002_asymmetric} summarized in Example 2 must be modified. In particular, letting $z=\omega_0 l_L$ and $\tilde{z} = \omega_0 l_1$, equations \eqref{eq:gluing_equation_1} and \eqref{eq:gluing_equation_2} when $A=0$ become
\begin{equation}
z + 2\tilde{z} = \omega_0,\qquad b(z)=b(\tilde{z}),
\end{equation}
where $b(z)$ is given in \eqref{eq:bdry_bdry_A0}. Then Result 2.3 of \cite{ward_2002_asymmetric} (with $k_1=1$ and $k_2=2$) implies that a unique asymmetric two-spike solution with $z\leq\tilde{z}$ exists if and only if
\begin{equation}
D < D_m \equiv [3 \log(1 + \sqrt{2})]^{-2}\approx 0.143,
\end{equation}
whereas Result 2.4 of \cite{ward_2002_asymmetric} (with $k_1=2$ and $k_2=1$) implies that there are either exactly one or two asymmetric two-spike solutions with $z>\tilde{z}$ if and only if
\begin{equation}
D<D_{m}\quad\text{or}\quad D_m<D<D_{m1}\equiv [2\sinh^{-1}(1/2) + \sinh^{-1}(2)]^{-2} \approx 0.17274,
\end{equation}
respectively. Proceeding as in Examples 2 and 3 we can then numerically calculate $A=A(D,l)$ from \eqref{eq:example_4_asy_eq} in the appropriate regions with $0<l<1/3$ and $1/3<l<1$. In Figures \ref{fig:TOP_BI_B0_color} and \ref{fig:BOT_BI_B0_color} we plot $A=A(D,l)$ together with the curves $l=l_\text{max}^\text{top}(D)$ and $l=l_\text{max}^\text{bot}(D)$ along which $A(D,l)$ is maximized. The resulting existence thresholds $A_\text{max}^\text{top}(D) \equiv A(D,l_\text{max}^\text{top}(D))$ and $A_\text{max}^\text{bot}(D) \equiv A(D,l_\text{max}^\text{bot}(D))$ are plotted in Figure \ref{fig:BI_B0_thresholds}. In particular a two spike pattern with a two-spike pattern consisting of one boundary and one interior spike with $\xi_L>\xi_1$ only exists for $D<D_{m1}$ when $A<A_\text{max}^\text{top}(D)$, whereas such a two-spike pattern with $\xi_L<\xi_1$ exists for all $D>0$ provided that $A<A_\text{max}^\text{bot}(D)$. Finally, by taking the limit $D\rightarrow\infty$ in \eqref{eq:example_4_asy_eq} we numerically obtain the limiting values $l_\text{max}^\text{bot}(D)\rightarrow0.0857$ and $A_\text{max}^\text{bot}(D)\rightarrow 0.087174$ as $D\rightarrow\infty$.

\begin{figure}[t!]
	\centering
	\begin{subfigure}{0.34\textwidth}
		\centering
		\includegraphics[scale=0.65]{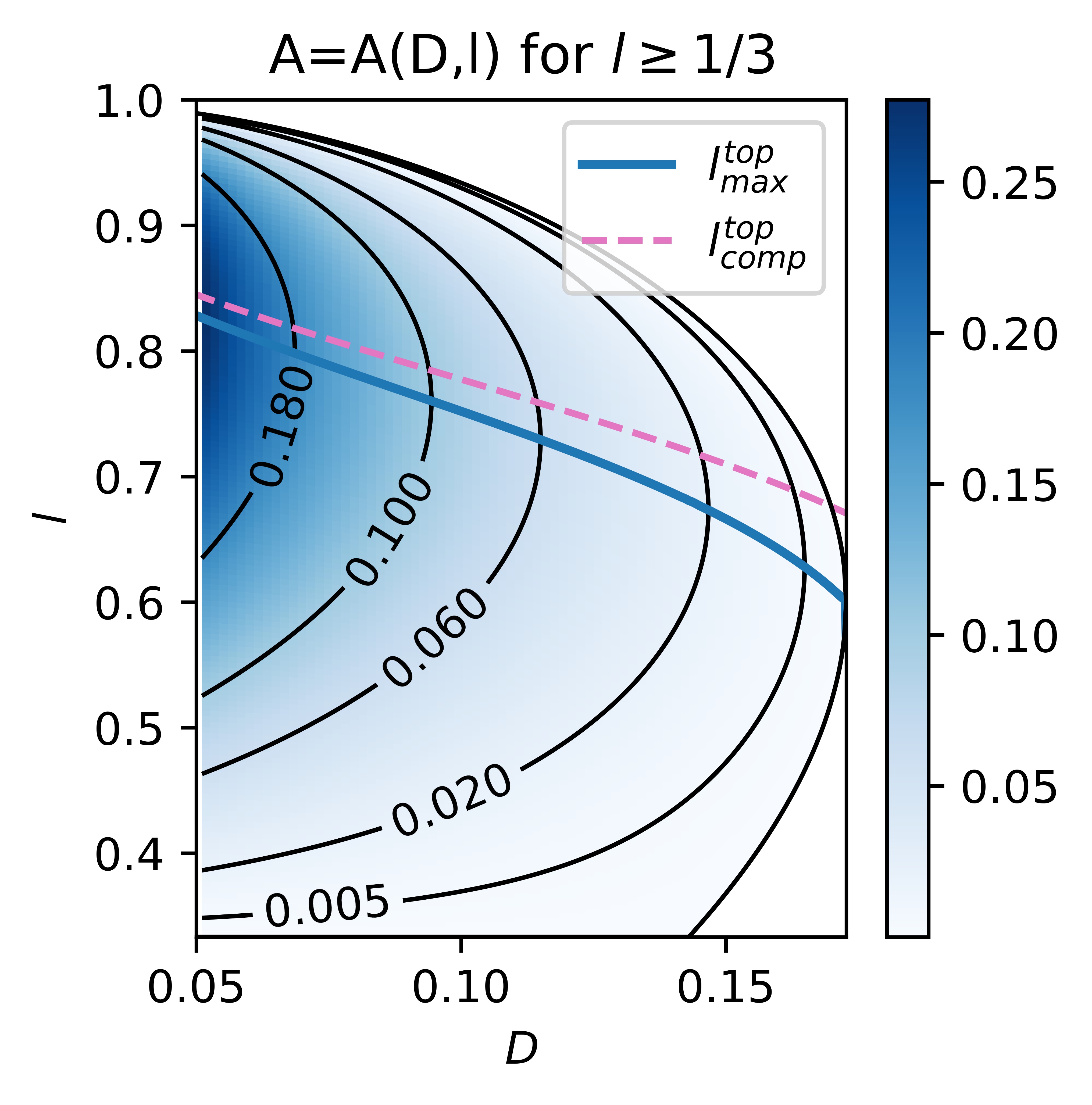}
		\caption{}\label{fig:TOP_BI_B0_color}
	\end{subfigure}%
	\begin{subfigure}{0.34\textwidth}
		\centering
		\includegraphics[scale=0.65]{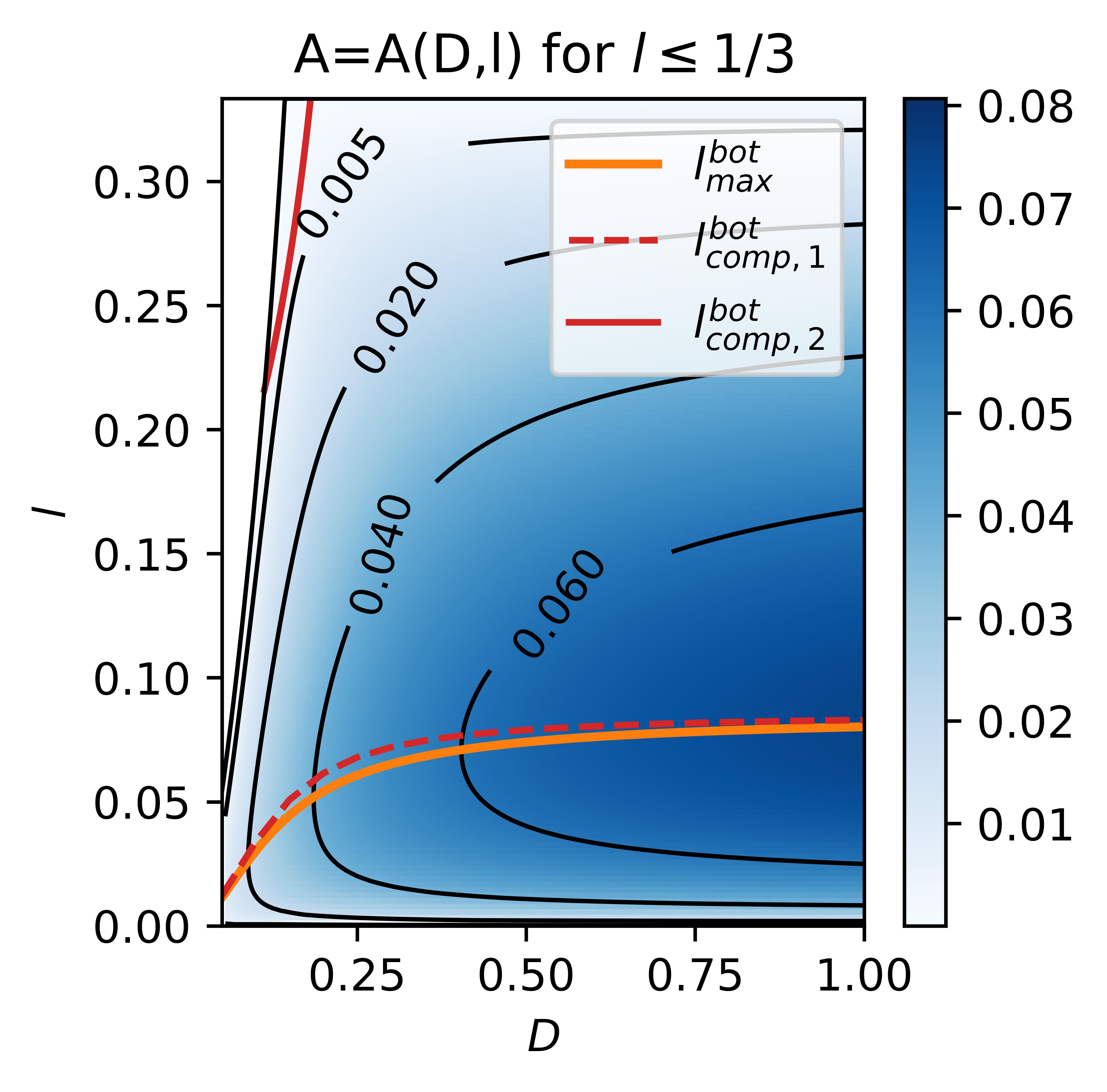}
		\caption{}\label{fig:BOT_BI_B0_color}
	\end{subfigure}%
	\begin{subfigure}{0.32\textwidth}
		\centering
		\includegraphics[scale=0.65]{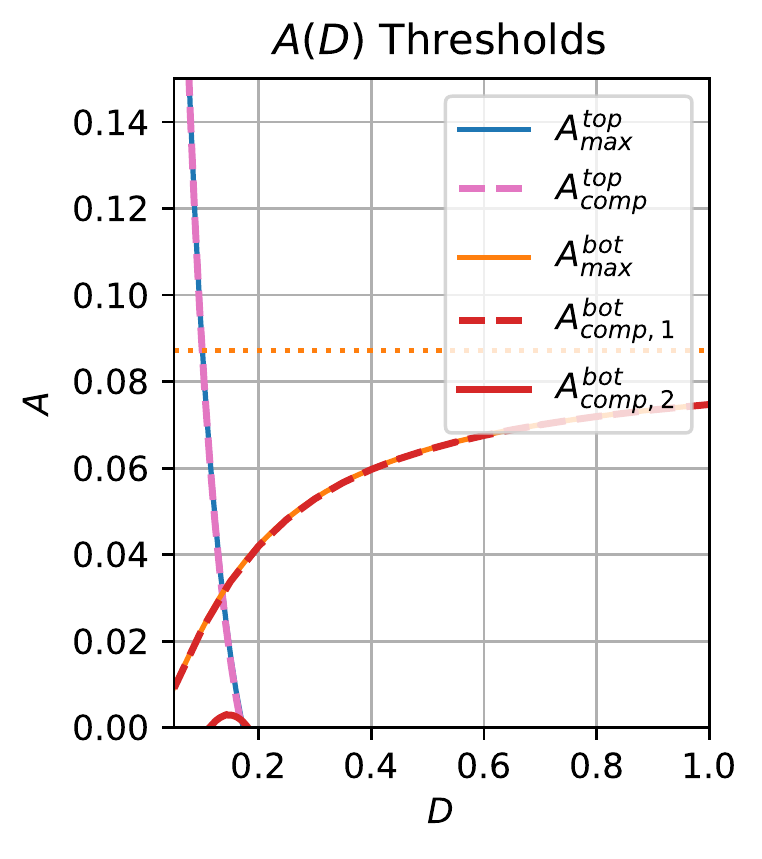}
		\caption{}\label{fig:BI_B0_thresholds}
	\end{subfigure}%
	\caption{Plot of $A=A(D,l)$ obtained by solving \eqref{eq:two_bdry_spike_example_3} when (A) $0<l<1/3$ and (B) $1/3<l<1$. The curves $l=l_\text{max}^\text{top}(D)$ and $l_\text{max}^\text{bot}(D)$ indicate the values of $l$ at which $A(D,l)$ is maximized while the curves $l_\text{comp}^\text{top}(D)$, $l_\text{comp,1}^\text{bot}(D)$, and $l_\text{comp,2}^\text{bot}(D)$ indicate the competition instability thresholds. The corresponding existence thresholds, $A_\text{max}^\text{top}(D)$ and $A_\text{max}^\text{bot}(D)$ are plotted in (C).}\label{fig:BI_B0}
\end{figure}

Next we consider the linear stability of the two-spike patterns constructed above on an $\mathcal{O}(1)$ timescale. As in Examples 2 and 3 we focus exclusively on competition instabilities by assuming that $\tau=0$ and seeking a zero eigenvalue crossing of the NLEP \eqref{eq:nlep}. In contrast to Examples 2 and 3 above the relevant competition instability threshold does not necessarily coincide with the curves $l=l_\text{max}^\text{top}(D)$ and $l=l_\text{max}^\text{bot}$. Indeed, fixing $D>0$ and differentiating the quasi-equilibrium equation $\bm{B}=0$ with respect to $l$ gives
\begin{equation}\label{eq:bdry_intr_diff_quasi}
\nabla_{\pmb{\xi}} \pmb{B} \biggl(\frac{\partial\pmb{\xi}}{\partial l} + \frac{\partial\pmb{\xi}}{\partial A}\frac{\partial A}{\partial l}\biggr) + \frac{\partial\pmb{B}}{\partial A} \frac{\partial A}{\partial l} + \frac{\partial\pmb{B}}{\partial x_1}\frac{d x_1}{d l} = 0,
\end{equation}
which along either $l=l_\text{max}^\text{top}(D)$ or $l=l_\text{max}^\text{bot}(D)$ reduces to
\begin{equation*}
\nabla_{\bm{\xi}}\bm{B}\frac{\partial\bm{\xi}}{\partial l} = -\frac{1}{2}\frac{\partial \bm{B}}{\partial x_1}.
\end{equation*}
Since $\partial\bm{B}/\partial x_1\neq \bm{0}$ it follows that $\nabla_{\bm{\xi}}\bm{B}$ is not necessarily singular along $l_\text{max}^\text{top}(D)$ or $l_\text{max}^\text{bot}(D)$ and in particular these curves need not coincide with the competition instability thresholds. Note however that $\partial\bm{B}/\partial x_1\rightarrow 0$ as $D\rightarrow\infty$ and therefore the competition instability threshold will coincide with $l_\text{max}^\text{bot}(D)$ but only in the limit $D\rightarrow\infty$. Thus, to calculate the appropriate competition instability thresholds we use the algebraic reduction of \S\ref{subsec:algebraic} and numerically solve \eqref{eq:algebraic_eq} when $\lambda=0$ and where the matrices $\mathscr{G}_{\omega_0}$ and $\mathscr{D}_{\omega_0}$ are respectively given by
\begin{subequations}
\begin{equation}
\mathscr{G}_{\omega_0} = \frac{1}{\omega_0\sinh\omega_0}\begin{pmatrix} \cosh\omega_0 & \cosh\omega_0(1-x_1) \\ \text{cosh}\omega_0(1-x_1) & \cosh\omega_0x_1\cosh\omega_0(1-x_1) \end{pmatrix},
\end{equation}
and
\begin{equation}
\mathscr{D}_0 = \frac{1}{\omega_0}\begin{pmatrix} \tanh(\omega_0l) \mathscr{F}_{y_L}(0) & 0 \\ 0 &  2\tanh\bigl(\omega_0\frac{1-l}{2}\bigr)\mathscr{F}_0(0)\end{pmatrix}.
\end{equation}
\end{subequations}
The resulting competition instability threshold $l_\text{comp}^\text{top}(D)$ when $\xi_L>\xi_1$ as well as $l_\text{comp,$i$}^\text{bot}(D)$ ($i=1,2$) when $\xi_L<\xi_1$ are indicated in Figures \ref{fig:TOP_BI_B0_color} and \ref{fig:BOT_BI_B0_color} respectively. Additionally, in Figure \ref{fig:BI_B0_thresholds} we have plotted $A_\text{comp}^\text{top}(D) \equiv A(D,l_\text{comp}^\text{top}(D))$ and $A_\text{comp,$i$}^\text{top}(D) \equiv A(D,l_\text{comp,$i$}^\text{top}(D))$ ($i=1,2$). Finally, the stability result along the $A=0$ curve calculated in Appendix \ref{app:A0_stability} implies that the two-spike pattern with $\xi_L>\xi_1$ is linearly unstable on an $\mathcal{O}(1)$ timescale when $l>l_\text{comp}^\text{top}(D)$, and similarly when $\xi_L<\xi_1$ the two-spike pattern is linearly unstable in the region bounded by the curves $l=l_\text{comp,1}^\text{bot}(D)$ and $l=l_\text{comp,2}^\text{bot}(D)$.

Next we consider the linear stability of the two-spike patterns on an $\mathcal{O}(\varepsilon^{-2})$ timescale. We explicitly calculate the right-hand-side of \eqref{eq:ode_dynamics} by first calculating
\begin{equation*}
\big\langle \partial_xG_{\omega_0}(x,x_1)\big\rangle_{x_1}= \frac{\sinh\omega_0(2x_1-1)}{\sinh\omega_0},\quad \partial_x G_{\omega_0}(x,0)\bigr|_{x_1} = -\frac{\sinh\omega_0(1-x_1)}{\sinh\omega_0},
\end{equation*}
and rearranging the quasi-equilibrium equation \eqref{eq:sys_1} as
\begin{equation*}
\frac{\omega_0^2\xi_L^2\eta(y_L)}{\xi_1} = \frac{1-6\omega_0^2\xi_1 G_{\omega_0}(x_1,x_1)}{G_{\omega_0}(x_1,0)}.
\end{equation*}
so that \eqref{eq:ode_dynamics} becomes
\begin{equation}
\frac{1}{\varepsilon^2}\frac{d x_1}{dt} = -6\omega_0^2 f(x_1),\qquad f(x_1,\xi_1) = \xi_1 - \frac{\tanh\omega_0(1-x_1)}{3\omega_0},
\end{equation}
where $\xi_1$ together with $\xi_L$ and $y_L$ are functions of $x_1$ found by solving \eqref{eq:sys_1} and \eqref{eq:yL_eq_1}. Note that the asymmetric two-spike equilibrium solutions constructed above using the method of \S\ref{subsec:gluing-method} immediately satisfy $f(x_1) = 0$. The linear stability of these asymmetric two-spike patterns on an $\mathcal{O}(\varepsilon^{-2})$ timescale is determined by the sign of $f'(x_1)$; it is stable if $f'(x_1)>0$ and unstable otherwise. We explicitly calculate
\begin{equation}
\frac{d f}{d x_1} = \frac{\partial\xi_1}{\partial x_1} + \frac{1}{3}\sech^2\omega_0(1-x_1),
\end{equation}
where $\partial\xi_1/\partial x_1$ is calculated by first differentiating the quasi-equilibrium equation $\bm{B} = 0$ with respect to $x_1$
\begin{equation}\label{eq:bdry_intr_temp_1}
\nabla_{\bm{\xi}}\bm{B}\frac{\partial\bm{\xi}}{\partial x_1} = -\frac{\partial\bm{B}}{\partial x_1},
\end{equation}
and then solving for $\partial \bm{\xi}/\partial x_1$ which we can do since we are assuming the two-spike pattern is stable on an $\mathcal{O}(1)$ timescale and the matrix $\nabla_{\bm{\xi}}\bm{B}$ is therefore invertible. Numerically evaluating $f'(x_1)$ we find that the drift instability thresholds for which $f'(x_1)=0$ coincide with the curves $l_\text{max}^\text{top}(D)$ and $l_\text{max}^\text{bot}(D)$. In fact, we can show that this is the case analytically by first evaluating \eqref{eq:bdry_intr_diff_quasi} along either $l_\text{max}^\text{top}(D)$ or $l_\text{max}^\text{bot}(D)$ to get
\begin{equation}\label{eq:bdry_intr_temp_2}
\nabla_{\bm{\xi}}\bm{B} \frac{\partial\bm{\xi}}{\partial l} = - \frac{1}{2}\frac{\partial\bm{B}}{\partial x_1}.
\end{equation}
Since the competition instability thresholds do not coincide with the curves $l_\text{max}^\text{top}(D)$ and $l_\text{max}^\text{bot}(D)$, the matrix $\nabla_{\bm{\xi}}\bm{B}$ is invertible along these curves and comparing \eqref{eq:bdry_intr_temp_1} with \eqref{eq:bdry_intr_temp_2} we obtain
\begin{equation}
\frac{\partial \xi_1}{\partial x_1} = 2\frac{\partial\xi}{\partial l} = -\frac{1}{3}\sech^2\omega_0\frac{1-l}{2} = -\frac{1}{3}\sech^2\omega_0(1-x_1).
\end{equation}
In particular $f'(x_1)=0$ along the curves $l=l_\text{max}^\text{top}(D)$ and $l_\text{max}^\text{bot}(D)$. Numerically evaluating $f'(x_1)$ at select values of $l$ above and below these thresholds we determine that the two-spike patterns constructed above with $\xi_L>\xi_1$ or $\xi_L<\xi_1$ are linearly stable on an $\mathcal{O}(\varepsilon^{-2})$ timescale if and only if $l < l_\text{max}^\text{top}(D)$ or $l<l_\text{max}^\text{bot}(D)$ respectively.

As in the previous examples we performed full numerical simulations of \eqref{eq:pde_gm} with FlexPDE 6 \cite{flexpde} to support our asymptotic predictions. Our numerical simulations were found to strongly agree with the predicted stability thresholds. In particular, we observed the following dynamics. For values of $l$ that are stable with respect to both competition and drift instabilities, that is when $l<l_\text{max}^\text{top}$ (resp. $l<l_\text{max}^\text{bot}$) for $l>1/3$ (resp. $l<1/2$) the two-spike pattern was observed to be stable. In the remaining regions (both stable and unstable with respect to competition instabilities) we observed that the interior spike either collapses and the boundary spike collapses to the one-boundary-spike solution, or else the interior spike changes height to the height of the stable pattern and then slowly drifts toward the location of the interior spike in the stable two-spike solution. As in Example 3 we noticed sensitivity to the competition instability threshold which we believe to be primarily due to the flatness of $A$ in this region

\section{Discussion}\label{sec:discussion}


We have extended the asymptotic theory developed for the singularly perturbed one-dimensional GM model to include the possibility of inhomogeneous Neumann boundary conditions for the activator. Additionally, we have rigorously established partial stability and instability results for a class of \textit{shifted} NLEPs. While the shifted NLEPs we considered are closely related to those in  \cite{maini_2007} we highlight that the difference in sign of the shift parameter leads substantial differences in the stability properties of the NLEP. Finally we considered four examples to illustrate the asymptotic and rigorous theory as well as to explore the behaviour of the GM system with non-zero flux boundary conditions. For a one-boundary spike solution we found that the non-zero Neumann boundary condition improves the stability with respect to Hopf instabilities. Moreover, by considering a two-boundary-spike pattern with equal inhomogeneous boundary fluxes we illustrated that the non-zero boundary flux improves the stability of symmetric two-spike patterns and also extends the region of $D>0$ values for which asymmetric patterns exist provided that $A=B>0$ is not larger than a computed threshold. Similar results were obtained when considering a one-sided boundary flux for which we considered patterns where both spikes concentrate on the boundary and where one concentrates on the boundary and the other in the interior. In each of our two-spike pattern examples we observed that there are two asymmetric patterns, where one is always linearly unstable and the other is always linearly stable. In a sense, the stable asymmetric pattern can be considered a \textit{boundary layer} solution that is a direct consequence of the inhomogeneous Neumann boundary condition. In particular its existence is mandated by the inhomogeneous boundary condition which makes a direct comparison with asymmetric spike patterns in the absence of boundary flux conditions difficult. However, our results illustrate that inhomogeneities at the boundaries predispose the GM to forming patterns concentrating at the boundaries in both symmetric and asymmetric configurations. We believe the distinction between interior and boundary-layer like localized pattern will play a key role in understanding more complicated mathematical models such as those incorporating bulk-surface coupling (see Figure 3 in \cite{madzvamuse_2015} for an example of a boundary-layer type pattern in a bulk-surface model).

There are several key open problems and directions for future research. First, our rigorous results for the shifted NLEP do not provide tight bounds for regions of stability and instability. Specifically, in the small shift-case we have determined that the NLEP is unstable if $\mu<\mu_c(y_0)$, and stable if $\mu_1(y_0)<\mu<\mu_2(y_0)$ where we have highlighted that $\mu_c(y_0) < \mu_1(y_0)$. As indicated in \S\ref{sec:rigorous} we conjecture that in fact the shifted NLEP is stable for all $\mu>\mu_c$ and in Appendix \ref{app:conjecture} we provide numerical support for this conjecture. Proving this conjecture is our first open problem. In addition, to calculate the stability of asymmetric patterns for which the shift parameters are different we could not directly use the rigorous results established in \S\ref{sec:rigorous} since the NLEP \eqref{eq:nlep} could not be diagonalized. The development of a rigorous stability theory for NLEP systems of this form is an additional direction for future research.

One of the key insights from our investigation of a two-boundary spike configuration is that the presence of equal or one sided boundary fluxes for the activator greatly extends the range of diffusivity values for which asymmetric patterns exist and are linearly stable. This expanded region of existence and stability parallels that found when spatially inhomogeneous precursors are included in the GM model. However, it can be argued that introducing inhomogeneous flux conditions provides a simpler alternative for generating asymmetric patterns. This warrants further research into the role of inhomogeneous boundary conditions for the activator in both activator-inhibitor and activator-substrate models in one-, two-, and three-dimensional domains.

\appendix
\section{Large $\lambda_I$ Asymptotics of $\mathscr{F}_{y_0}(i\lambda_I)$}\label{app:F_properties}

\begin{figure}[t!]
	\centering
	\begin{subfigure}{0.33\textwidth}
		\centering
		\includegraphics[scale=0.675]{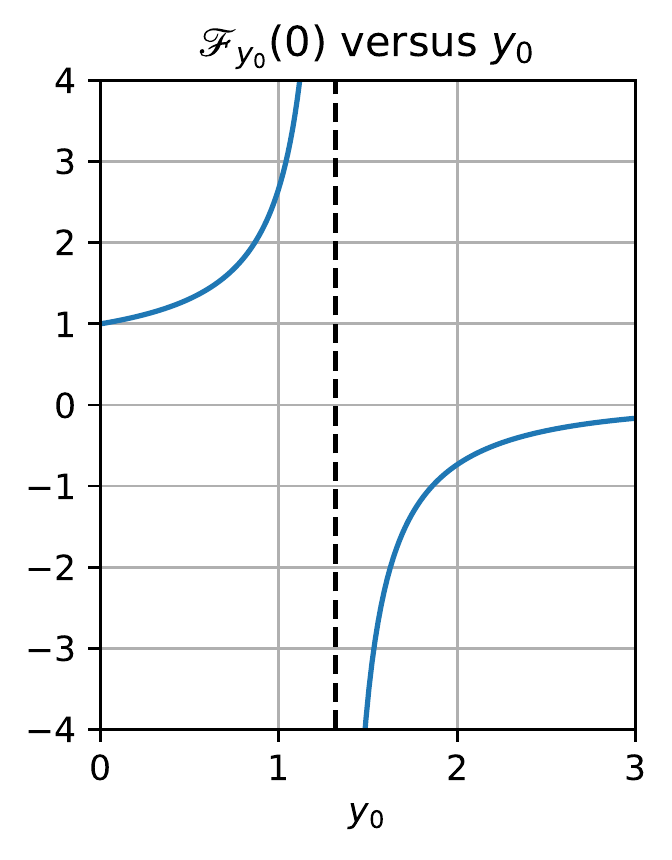}
		\caption{}\label{fig:F0}
	\end{subfigure}%
	\begin{subfigure}{0.33\textwidth}
		\centering
		\includegraphics[scale=0.675]{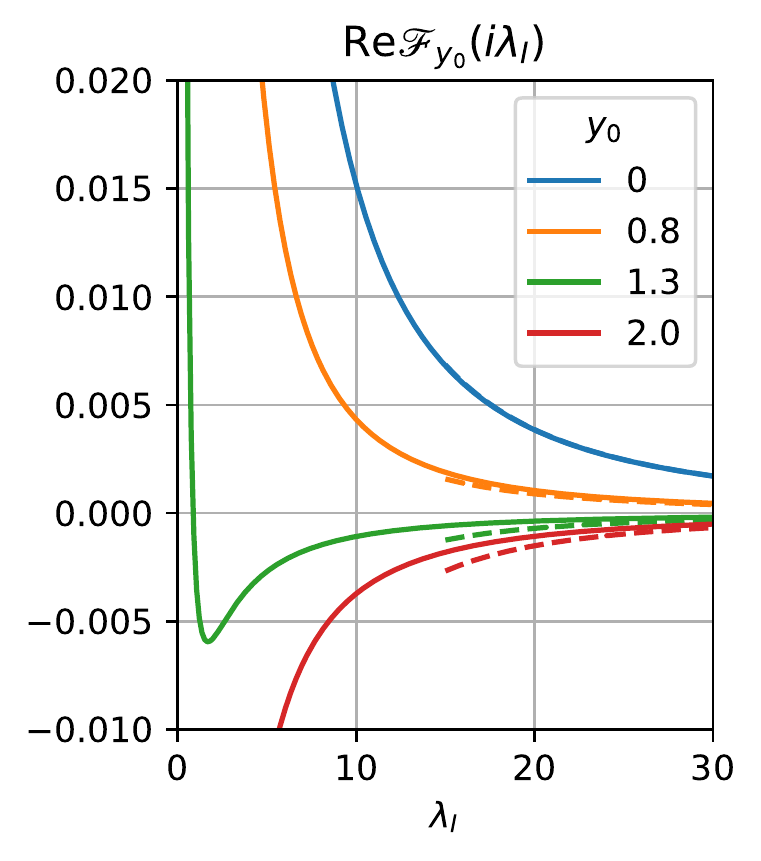}
		\caption{}\label{fig:real_F_imag}
	\end{subfigure}%
	\begin{subfigure}{0.33\textwidth}
		\centering
		\includegraphics[scale=0.675]{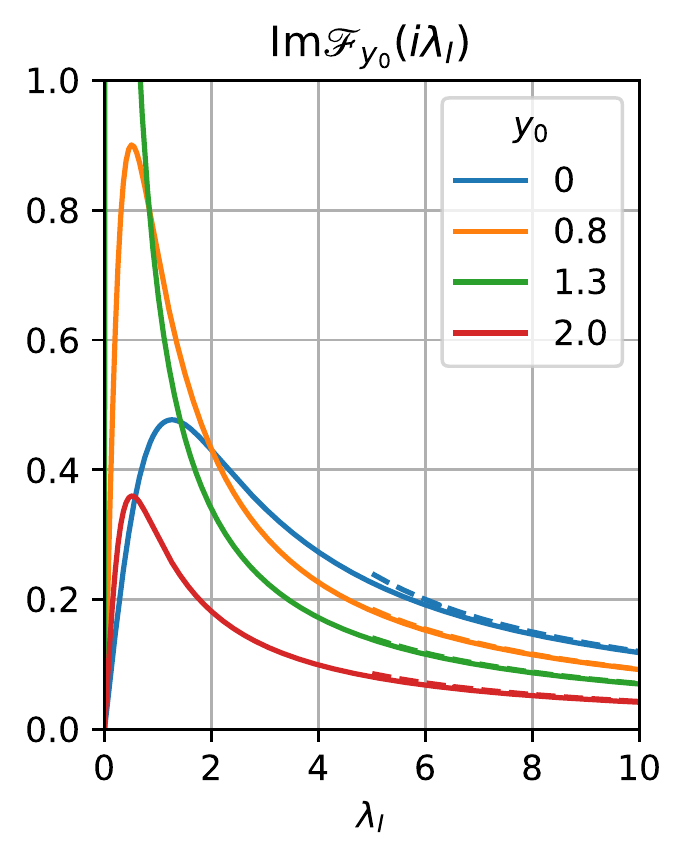}
		\caption{}\label{fig:imag_F_imag}
	\end{subfigure}%
	\caption{(A) Plot of $\mathscr{F}_{y_0}(0)$ versus the shift parameter $y_0$. (B) and (C) Real and imaginary parts of $\mathscr{F}_{y_0}(i\lambda_I)$ for select values of $y_0\geq 0$. The dashed lines indicate the $\lambda_I\gg1$ asymptotics. }\label{fig:F_properties}
\end{figure}

In this appendix we determine some key properties of $\mathscr{F}_{y_0}(\lambda)$ defined in \eqref{eq:F_y0_def}. Recalling \eqref{eq:Fy00}, in Figure \ref{fig:F0} we plot $\mathscr{F}_{y_0}(0)$ versus $y_0\geq 0$. Next we calculate the limiting behaviour of $\mathscr{F}_{y_0}(i\lambda_I)$ as $\lambda_I\rightarrow\infty$. First, we let $(\mathscr{L}_{y_0} - i\lambda_I)^{-1} w_c(y+y_0)^2 = \Phi_R + i\Phi_I$ where $\Phi_R$ and $\Phi_I$ solve
\begin{equation}\label{eq:Phi_real_imaginary_eq}
\mathscr{L}_{y_0}\Phi_R + \lambda_I\Phi_I = w_c(y+y_0)^2,\qquad \mathscr{L}_{y_0}\Phi_I - \lambda_I\Phi_R = 0,
\end{equation}
with the boundary conditions $\Phi_R'(0) = \Phi_I'(0)$ and $\Phi_R,\Phi_L\rightarrow 0$ as $y\rightarrow\infty$. Taking $\lambda_I\gg 1$ and assuming that $y=\mathcal{O}(1)$ we obtain
$$
\Phi_I(y) \sim \frac{1}{\lambda_I}w_c(y+y_0)^2,\qquad \Phi_R(y) \sim \frac{1}{\lambda_I}\mathscr{L}_{y_0}\Phi_I = \frac{1}{\lambda_I^2}\bigl( 2w_c'(y+y_0)^2 + w_c(y+y_0)^2\bigr).
$$
If $y_0>0$ then $\Phi_R'(0)=0$ and $\Phi_I'(0)=0$ are not satisfied and we must therefore consider the boundary layer at $y=0$. Setting $z = \lambda_I^{1/2} y$ we consider the inner expansion $\Phi_R \sim \tilde{\Phi}_R(z)$ and $\Phi_I \sim \tilde{\Phi}_I(z)$ where $\tilde{\Phi}_I$ satisfies
\begin{equation*}
\frac{d^4\tilde{\Phi}_I}{dz^4} + \tilde{\Phi}_I = \frac{1}{\lambda_I} w_c(y_0)^2\quad z > 0;\qquad \frac{d\tilde{\Phi}_I}{dz} = \frac{d^3 \tilde{\Phi}_I}{dz^3} = 0,\quad z=0,
\end{equation*}
and must be matched to the outer, $y=\mathcal{O}(1)$, solution through the far-field behaviour
$$
\tilde{\Phi}_I \sim \frac{1}{\lambda_I}w_c(y_0)^2,\quad \frac{d^2\tilde{\Phi}_I}{dz^2} \sim \frac{1}{\lambda_I^2}\bigl(2w_c'(y+y_0)^2 + w_c(y+y_0)^2\bigr),\qquad z\rightarrow\infty.
$$
It is clear that the leading order solution is $\tilde{\Phi}_I(z) \sim \lambda_I^{-1} w_c(y_0)^2$. The constant behaviour of $\Phi_I$ at the boundary layer therefore does not contribute to the leading order behaviour of the integral
\begin{equation*}
\int_0^\infty w_c(y+y_0)\Phi_I(y) dy \sim \lambda_I^{-1} \int_0^\infty w_c(y+y_0)^3 dy,\qquad \lambda_I\gg 1.
\end{equation*}
Moreover, multiplying the right equation in \eqref{eq:Phi_real_imaginary_eq} by $w_c(y+y_0)$ and integrating we calculate
\begin{align*}
\int_0^\infty w_c(y+y_0)\Phi_R(y) dy & = \frac{1}{\lambda_I}\biggl( w_c'(y_0)\Phi_I(0) + \int_0^\infty \Phi_I\mathscr{L}_{y_0} w_c(y+y_0) dy \biggr)\\
& \sim \frac{1}{\lambda_I^2}\biggl(w_c'(y_0)w_c(y_0)^2 + \int_0^\infty w_c(y+y_0)^4 dy\biggr),
\end{align*}
for $\lambda_I\gg 1$ where we have used $\mathscr{L}_{y_0} w_c(y+y_0) = w_c(y+y_0)^2$. In summary, we have the large $\lambda_I$ asymptotics
\begin{equation}\label{eq:F_imag_limit}
\mathscr{F}_{y_0}(i\lambda_I) \sim \frac{1}{\lambda_I^2}\frac{w_c'(y_0)w_c(y_0)^2 + \int_0^\infty w_c(y+y_0)^4dy}{\int_0^\infty w_c(y+y_0)^2dy} + \frac{i}{\lambda_I}\frac{\int_0^\infty w_c(y+y_0)^3 dy}{\int_0^\infty w_c(y+y_0)^2dy},\qquad \lambda_I\gg 1.
\end{equation}
Note that the real part changes from positive to negative as $y_0$ exceeds $y_0\approx 1.0487$. In Figures \ref{fig:real_F_imag} and \ref{fig:imag_F_imag} we plot the real and imaginary parts of $\mathscr{F}_{y_0}(i\lambda_I)$ respectively for select values of $y_0$. In addition, we have included the large $\lambda_I$ asymptotics which indicate close agreement for moderately large values of $\lambda_I$.

\section{Numerical Support for Stability Conjecture}\label{app:conjecture}

In this appendix we provide numerical support for the conjecture that the shifted NLEP \eqref{eq:rigorous_nlep} has a stable spectrum when $\mu>\mu_c(y_0)$ by numerically calculating the dominant eigenvalue of the NLEP for $0\leq y_{0} \leq 1.5$ and $0\leq \mu \leq 10$. The numerical calculation of the spectrum was performed by truncating the domain $0<y<\infty$ to $0<y<20$ and discretizing it with $600$ uniformly distributed points. Then, we used a finite difference approximation for the second derivatives and a trapezoidal rule discretization for the integral term to approximate the NLEP \eqref{eq:rigorous_nlep} with a discrete matrix eigenvalue problem. We then numerically calculated the dominant eigenvalue of matrix by using the eig function in the Python scipy.linalg library for our numerical computation of the dominant eigenvalue. In Figure \ref{fig:conjecture_1} we plot $\text{Re}\lambda_0$ versus $y_0$ and $\mu$. We observe the real part of the dominant eigenvalue is negative when $\mu$ exceeds the threshold $\mu_c(y_0)$. Additionally, in Figure \ref{fig:conjecture_2} we plot $\Lambda_0-\text{Re}(\lambda_0)$ for the same range of $y_0$ and $\mu$ values. We observe that this difference is non-negative which suggest that $\text{Re}\lambda_0\leq \Lambda_0$.

\begin{figure}[t!]
	\centering
	\begin{subfigure}{0.5\textwidth}
		\centering
		\includegraphics[scale=0.675]{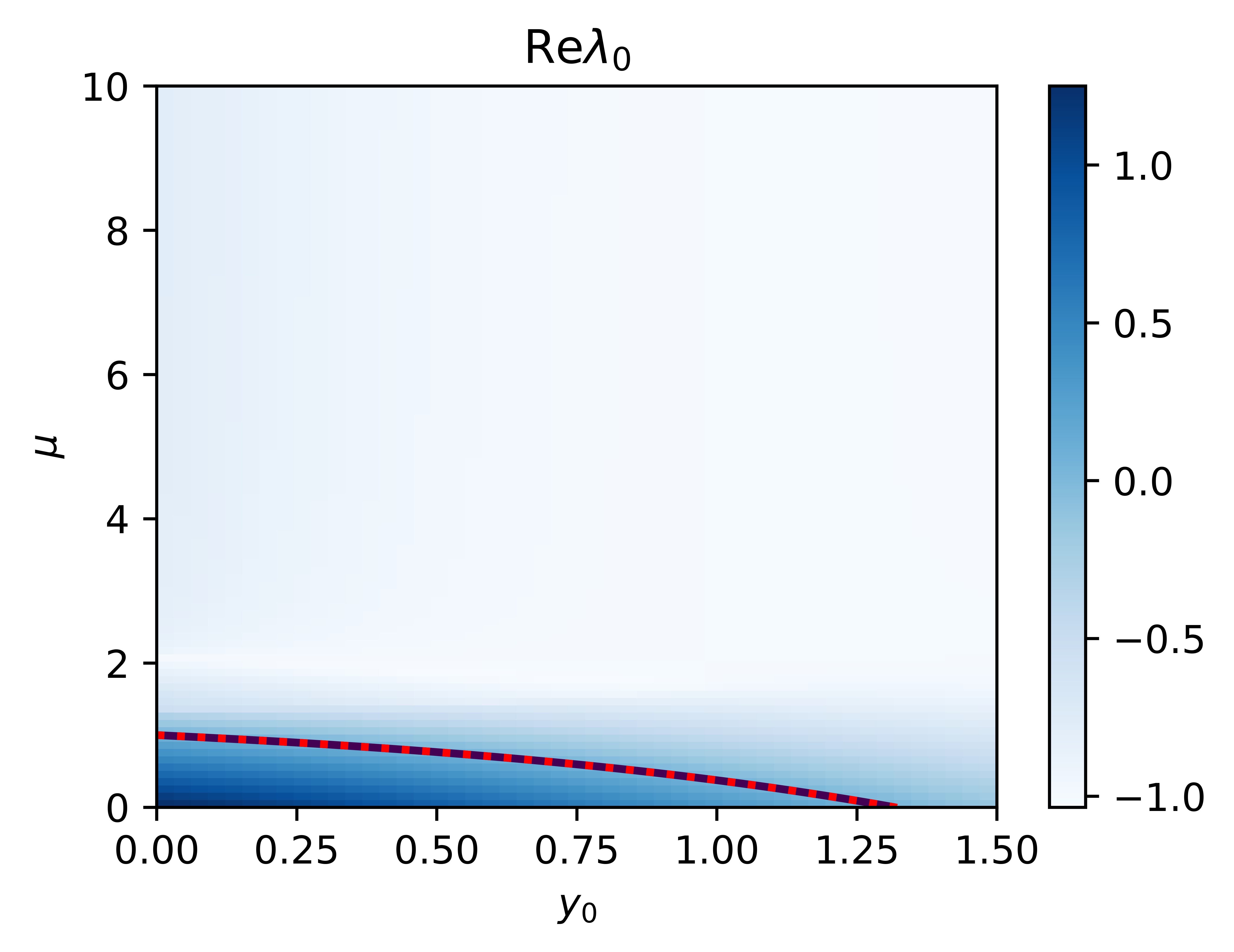}
		\caption{}\label{fig:conjecture_1}
	\end{subfigure}%
	\begin{subfigure}{0.5\textwidth}
		\centering
		\includegraphics[scale=0.675]{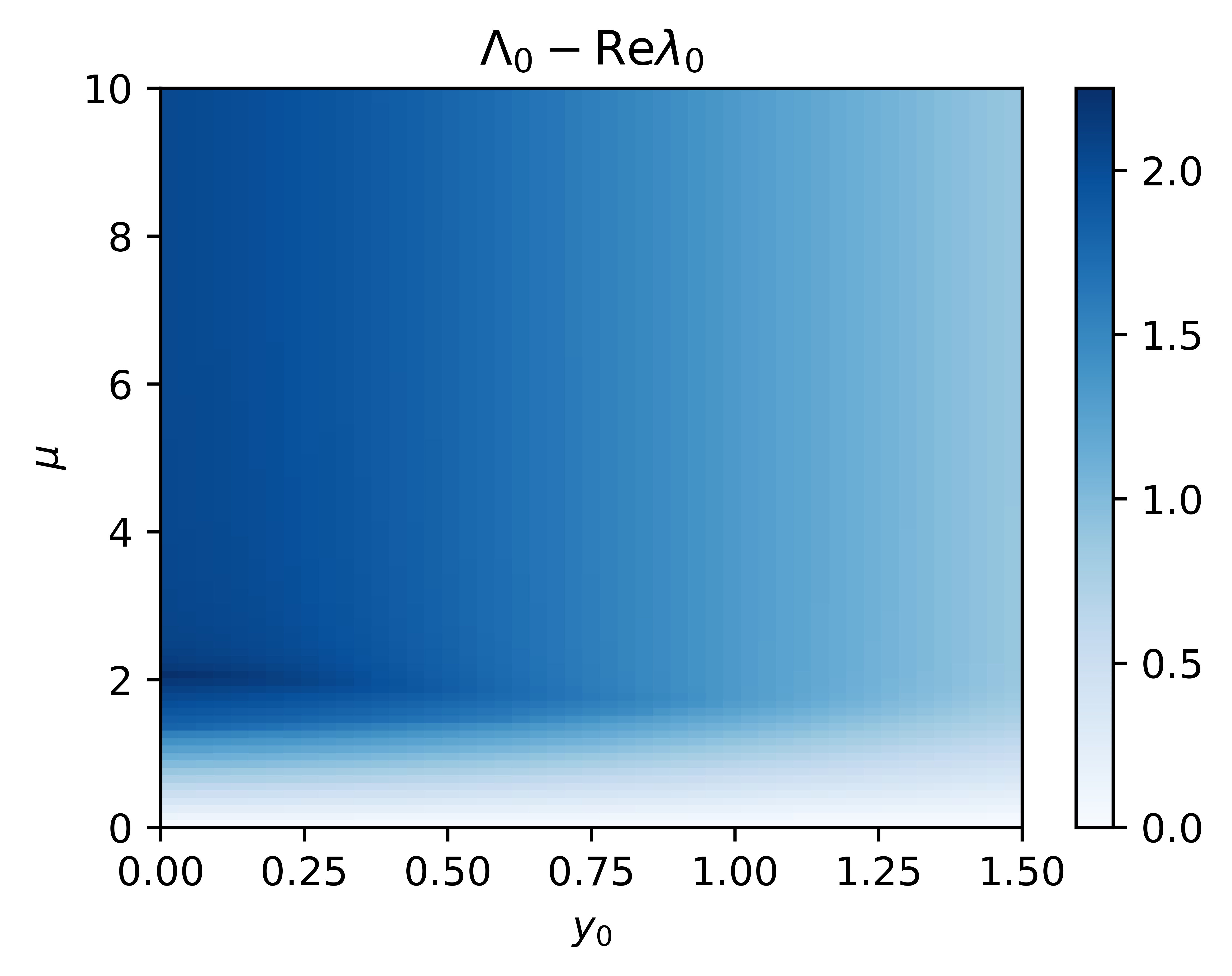}
		\caption{}\label{fig:conjecture_2}
	\end{subfigure}%
	\caption{(A) Plot of the real part of the dominant eigenvalue of the shifted NLEP \eqref{eq:rigorous_nlep} versus shift parameter $y_0$ and multiplier $\mu$. The dotted red line corresponds to the critical threshold $\mu_c$ defined in \eqref{eq:muc_def} and the solid dark line is the zero-contour of $\text{Re}\lambda_0$. (B) Plot of the difference between dominant eigenvalues of $\mathscr{L}_{y_0}$ and the NLEP \eqref{eq:rigorous_nlep}.}\label{fig:conjectures}
\end{figure}

\section{Stability of Asymmetric Two-Boundary Spike Pattern when $A=0$}\label{app:A0_stability}

Previous results on the stability of asymmetric two spike equilibria of \eqref{eq:pde_gm} when $A=0$ have focused exclusively on interior multi-spike solutions \cite{ward_2002_asymmetric}. To compare the $A=B=0$ theory with our results obtained in Examples 2-4 we include here a summary of the stability of an asymmetric two-spike solution where one spike concentrates at $x=0$ and the other concentrates at either $x=1$ or in the interior $0<x<1$. In both cases the NLEP \eqref{eq:nlep} with $A=0$ can be written as
\begin{equation}\label{eq:nlep_appendix}
\mathscr{L}_0\pmb{\phi} - 2 w_c(y)^2 \frac{\int_0^\infty w_c(y) \mathscr{E}\pmb{\phi}dy}{\int_0^\infty w_c(y)^2 dy} = \lambda\pmb{\phi},
\end{equation}
where for two boundary spikes we let
\begin{equation}
\mathscr{E} = \mathscr{E}_{bb} \equiv \begin{pmatrix}  \coth\omega_0\tanh\omega_0l & \text{csch}\omega_0 \tanh\omega_0(1-l) \\ \text{csch}\omega_0\tanh\omega_0l & \coth\omega_0\tanh\omega_0(1-l) \end{pmatrix},
\end{equation}
and in the case of one boundary and one interior spike we let
\begin{equation}
\mathscr{E} = \mathscr{E}_{bi} \equiv \begin{pmatrix}  \coth\omega_0\tanh\omega_0l & 2\text{csch}\omega_0\sinh\omega_0\tfrac{1-l}{2} \\ \text{csch}\omega_0\tanh\omega_0l\cosh\omega_0\tfrac{1-l}{2} & 2\text{csch}\omega_0\cosh\omega_0\tfrac{1+l}{2}\sinh\omega_0\tfrac{1-l}{2} \end{pmatrix}.
\end{equation}
It is then straightforward to verify that $\sigma=1$ is an eigenvalue of both matrices $\mathscr{E}_{bb}$ and $\mathscr{E}_{bi}$. The remaining eigenvalue in each case is then given by the determinant. By diagonalizing $\mathscr{E}$ the NLEP \eqref{eq:nlep_appendix} can therefore be written as \eqref{eq:rigorous_nlep} with $\mu=2$ as well as with $\mu=2\det\mathscr{E}_{bb}$ and $\mu = 2\det\mathscr{E}_{bi}$ for the boundary-boundary and boundary-interior cases respectively. Since the $A=B=0$ stability theory implies that the NLEP \eqref{eq:rigorous_nlep} is stable if and only if $\mu>1$ \cite{wei_1999} we immediately deduce that the $\mu=2$ modes are stable in both the boundary-boundary and boundary-interior cases. To determine the stability of the remaining modes we explicitly calculate
\begin{equation}
\det\mathscr{E}_{bb} = \tanh\omega_0 l\tanh\omega_0(1-l),\qquad \det\mathscr{E}_{bi} = 2\frac{\tanh\omega_0l\sinh\omega_0\tfrac{1-l}{2}\sinh\omega_0\tfrac{1+l}{2}}{\sinh\omega_0}.
\end{equation}
Finally, for the boundary-boundary and boundary-interior cases we solve \eqref{eq:two_bdry_spike_eq} and \eqref{eq:example_4_asy_eq} for $D=D(l)$ respectively and the resulting values of $2\det\mathscr{E}_{bb}$ and $2\det\mathscr{E}_{bi}$ for $0<l<1$ are shown in Figure \ref{fig:app-nlep-mult}. In particular the asymmetric pattern with two boundary spikes is always linearly unstable, while the pattern with one boundary and one interior spike has a region of stability (with respect to the $\mathcal{O}(1)$ eigenvalues). In Figure \ref{fig:app-bifurc} we plot $l=l(D)$ (cf. Figures \ref{fig:BB_B0} and \ref{fig:BI_B0}) for both two-spike configurations, indicating where the pattern is stable (solid line) and unstable (dashed line). 

\begin{figure}[t!]
	\centering
	\begin{subfigure}{0.5\textwidth}
		\centering
		\includegraphics[scale=0.65]{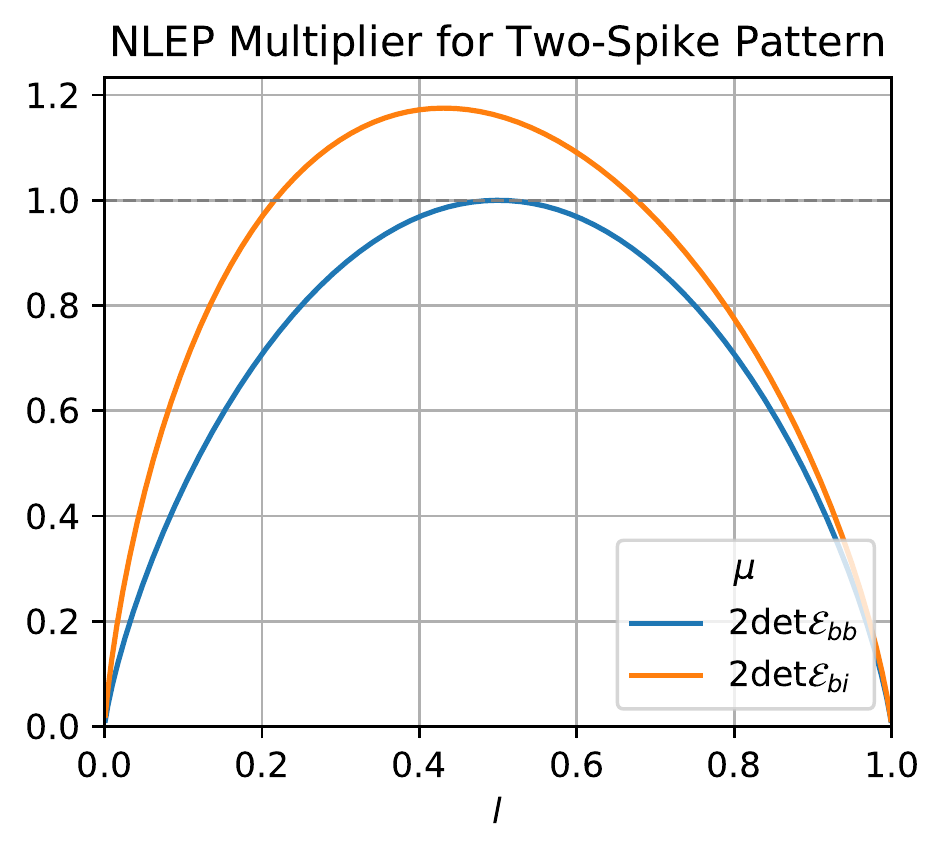}
		\caption{}\label{fig:app-nlep-mult}
	\end{subfigure}%
	\begin{subfigure}{0.5\textwidth}
		\centering
		\includegraphics[scale=0.65]{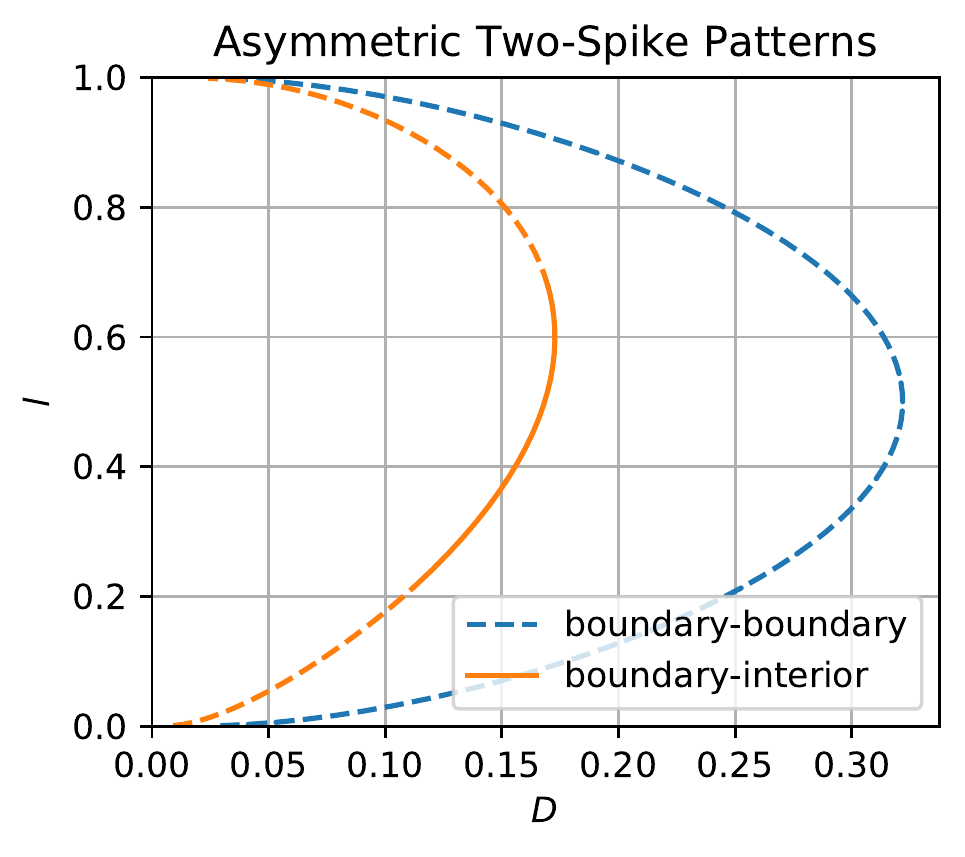}
		\caption{}\label{fig:app-bifurc}
	\end{subfigure}%
	\caption{}\label{fig:app-a0b0}
\end{figure}

\addcontentsline{toc}{section}{References}
\bibliographystyle{abbrv}
\bibliography{bibliography}

\end{document}